\newcommand*{\eh}{\mathrm{End\ }\mathcal{H}}
\newcommand*{\Ad}{\mathrm{Ad}}
\def\ad{^{\dagger}}
\def\a{\alpha}
\newcommand{\fsnull}[1]{}
\newcommand{\old}[1]{}
\definecolor{C1}{RGB}{52, 89, 149}
\definecolor{C2}{RGB}{251, 77, 61}
\definecolor{C3}{RGB}{3, 206, 164}
\definecolor{C4}{RGB}{202, 21, 81}
\tikzset{every picture/.style=remember picture}
\newcommand{\img}[2][2.2ex]{%
  \mathrel{\vcenter{\hbox{\includegraphics[height=#1]{#2}}}}%
}
\renewcommand{\geq}{\geqslant}
\renewcommand{\leq}{\leqslant}
\def\endh{\mathrm{End\ }\mch}
\newcommand{\ot}{\otimes}
\newcommand{\ts}{^{\otimes 2}}
\newcommand{\bs}{\textsf{BS}}
\newcommand{\sg}{\sigma }
\newcommand{\vph}{\varphi }
\newcommand{\Om}{\Omega }
\newcommand{\F}{\,{}_2F_1}
\newcommand{\Ga}{{\rm Gamma}}
\newcommand{\Ba}{{\rm Beta}}
\DeclareMathOperator*{\expect}{\mathbb{E}}
\def\C{\mathbb{C}}
\newcommand{\mcn}{\mathcal{N}}
\newcommand{\mco}{\mathcal{O}}
\newcommand{\mcb}{\mathcal{B}}
\newcommand{\mcc}{\mathcal{C}}
\newcommand{\mch}{\mathcal{H}}
\newcommand{\mcx}{\mathcal{X}}
\newcommand{\mca}{\mathcal{A}}
\newcommand{\mcd}{\mathcal{D}}
\newcommand{\mce}{\mathcal{E}}
\newcommand{\mct}{\mathcal{T}}
\newcommand{\mbsp}{\mathbb{SP}}
\newcommand{\mbso}{\mathbb{SO}}
\newcommand{\mbsu}{\mathbb{SU}}
\newcommand{\mbo}{\mathbb{O}}
\newcommand{\mbu}{\mathbb{U}}
\newcommand{\mbc}{\mathbb{C}}
\newcommand{\mbr}{\mathbb{R}}
\newcommand{\mbh}{\mathbb{H}}
\newcommand{\mbs}{\mathbb{S}}
\newcommand{\mbe}{\mathbb{E}}
\newcommand{\mst}{\mathsf{T}}
\newcommand{\mff}{\mathfrak{f}}
\newcommand{\mfu}{\mathfrak{u}}
\newcommand{\x}{\boldsymbol{x}}
\def\be{\begin{equation}}
\def\ee{\end{equation}}
\def\bs{\begin{split}}
\def\e{\end{split}}
\def\ba{\begin{eqnarray}}
\def\bea{\begin{eqnarray}}
\def\tea{\end{eqnarray}}
\def\ea{\end{eqnarray}}
\def\eea{\end{eqnarray}}
\def\a{\alpha}
\def\b{\beta}
\def\a{\alpha}
\def\b{\beta}
\def\a{\alpha}
\def\b{\beta}
\newcommand{\id}{\mathds{1}}
\renewcommand{\a}{\alpha}
\renewcommand{\b}{\beta}
\newcommand{\sbraket}[2]{ \langle#1 | #2 \rangle}
\def\sg{\sigma}
\def\be{\begin{equation}}
\def\te{\end{equation}}
\def\ee{\end{equation}}
\def\ba{\begin{eqnarray}}
\def\bea{\begin{eqnarray}}
\def\tea{\end{eqnarray}}
\def\ea{\end{eqnarray}}
\def\eea{\end{eqnarray}}
\newtheorem{theorem}{Theorem}
\begin{document}

\bibliographystyle{IEEEtran}

\title[On the average-case complexity of learning states from the circular and Gaussian ensembles]{On the average-case complexity of learning states from\\ the circular and Gaussian ensembles}
\author{\vspace{-5mm}M\MakeLowercase{axwell} W\MakeLowercase{est}\,\textsuperscript{1,2}}
\address{\textsuperscript{1}Theoretical Division, Los Alamos National Laboratory, Los Alamos, New Mexico 87545, USA\\\phantom{.}\\
\vspace{-2.5mm}
\textsuperscript{2}School of Physics, University of Melbourne, Parkville, Victoria 3010, Australia}

\begin{abstract}
{ 
Studying the complexity of states sampled from various ensembles is a central component of quantum information theory. In this work we establish the average-case hardness of learning, in the statistical query model, the Born distributions of states sampled uniformly from the circular and (fermionic) Gaussian ensembles. These ensembles of states are induced variously by the uniform measures on the compact symmetric spaces of type AI, AII, and DIII. This finding complements analogous recent results for states sampled from the classical compact groups. On the technical side, we employ a somewhat unconventional approach to integrating over the compact groups which may be of some independent interest. For example, our approach allows us to exactly evaluate the total variation distances between the output distributions of Haar random unitary and orthogonal circuits and the constant distribution, which were previously known only approximately.
}
\end{abstract}

\maketitle

\tableofcontents

\newpage
\section{Introduction}
Random ensembles of quantum states are ubiquitous in quantum computation and information~\cite{elben2022randomized,heinrich2022randomized,huang2020predicting,fisher2023random,khatri2020random,nietner2023average,schuster2024random,ragone2023unified}, as well as physics more generally~\cite{dyson1962statistical,dyson1962threefold,zirnbauer2010symmetry,heinzner2005symmetry,killip2004matrix}. While uniformly random states have in particular found numerous uses, bespoke applications are occasionally afforded by states generated by acting on a reference state with a unitary sampled uniformly from a strict subgroup of the full unitary group~\cite{hashagen2018real,west2025real,zhao2021fermionic,hearth2024efficient,wan2022matchgate,helsen2022matchgate,west2024random,gross2021schur,west2025no,grevink2025will,shaya2025complexity}, or indeed from a subset which lacks a group structure~\cite{bertoni2024shallow}. For example, in the presence of various symmetries,  sampling from a well-chosen such  subset can lead to more efficient protocols for benchmarking~\cite{hashagen2018real} and tomography~\cite{west2025real,zhao2021fermionic,wan2022matchgate,helsen2022matchgate,hearth2024efficient}. \\

Applications aside, the study of the properties of states sampled from various distributions under various assumptions is naturally of considerable intrinsic interest. Suppose for example that one has access to a state $\ket\psi$ in  some fixed basis $\{\ket{\x}\}_{\x}$. As such access allows one to sample from the Born probability distribution $P_{\psi}(\x)=\abs{\braket{\x}{\psi}}^2$, one approach is to phrase questions about states drawn from an ensemble $\mce$ in terms of the distributions $P_\psi$ for $\ket\psi\sim\mce$. For instance, for a given $\mce$,  what do these distributions look like? How difficult is it to classically sample from them? If $\ket\psi$, $\ket\varphi\sim\mce$, how  (on average over $\mce$, say) do $P_\psi$ and $P_{\varphi}$ compare? Amongst such questions, in this work we ask:
\vspace{1.85mm}

\begin{center}
    \textit{What is the complexity of learning  the Born distributions of states\\ sampled uniformly  from the circular and fermionic Gaussian ensembles?}
\end{center}

\vspace{1.85mm}
As it currently stands the question is somewhat ill-posed -- what do we mean by ``complexity'', for example? Here we will focus on sample-complexity in the \textit{statistical query model}~\cite{kearns1998efficient,nietner2023average,shaya2025complexity}. Concretely, this means that we have an (unknown) state sampled from a (known) distribution,  consider an observer with the ability to query the expectation value of observables diagonal in a certain (fixed) basis, and ask how many such queries they need to make in order to learn an accurate model of the underlying  state. 
Our question is inspired by recent works which have answered the analogous question for several different ensembles of states. First, Ref.~\cite{nietner2023average} established the difficulty of, in the average-case, learning (within the statistical query model) both states drawn from the Haar measure on the full unitary group, and states produced by random brickwork quantum
circuits of various depths. Subsequently, Ref.~\cite{nietner2023free} found (via a rather different information-theoretic argument) that learning Born distributions resulting from matchgate circuits is also hard.  More recently, Ref.~\cite{shaya2025complexity} similarly found that the state distributions induced by sampling randomly from the Haar measures on the orthogonal and (unitary) symplectic groups are also average-case difficult to learn.  \\

Our specific choice of ensembles joins the   examples previously considered in this context~\cite{nietner2023average,nietner2023free,shaya2025complexity} in being physically motivated as well as reasonably amenable to analytic characterisation. Namely, we recall that the three \textit{circular ensembles} (unitary, orthogonal and symplectic) were introduced by Dyson in 1962~\cite{dyson1962statistical,dyson1962threefold} to describe the energy levels of complex quantum systems under various symmetry constraints. As by Wigner's theorem a symmetry of a  quantum mechanical system may always be represented by a unitary (possibly) multiplied by an anti-unitary operator that squares to $\pm\id$, a short list of possibilities arises. First, and simplest of all, the case when there is no anti-unitary symmetry is assigned to the \textit{circular unitary ensemble}, which is simply given by the usual Haar measure on the unitary group, and has therefore for our purposes been dealt with already in Ref.~\cite{nietner2023average}. In the case of an anti-unitary symmetry with $T^2=\id$, one has the \textit{circular orthogonal ensemble}, a random element of which may be represented as $U^\mst U$, with $U$ itself Haar random on the unitary group. Finally, in the case of an anti-unitary symmetry with $T^2=-\id$, one identifies the relevant matrix  ensemble as the \textit{circular symplectic ensemble}, a random element from which has the form $JU^\mst J^\mst U$, with $U$ Haar random on the unitary group and $J$ the symplectic form. The later two (which we would like to emphasise are, despite their names,  \textit{not} the Haar measures on the orthogonal and (unitary) symplectic groups) may also be recognised to be the uniform measures on the so-called \textit{compact symmetric spaces} of types AI and AII, respectively (more details on the compact symmetric spaces may be found in Section~\ref{sec:css}). We will additionally consider the ensemble of states generated by sampling randomly from the (so-called) symmetric space of type DIII, which corresponds to the uniform measure on the manifold of fermionic Gaussian states~\cite{hackl2021bosonic}; operationally, here one samples  $JO^\mst JO$, where $O$ is drawn from the Haar measure on the orthogonal group\footnote{The manifold of bosonic Gaussian states (the symmetric space CI~\cite{hackl2021bosonic}), on the other hand, is non-compact and so does not support a uniform probability measure; we do not consider it here }.  \\

\noindent
Having introduced the ensembles we will consider, we can state our main result:

\begin{theorem}{\normalfont (Informal).} \label{thm:maini}
The output distributions of circuits drawn from the circular and fermionic Gaussian ensembles are average-case hard to learn.
\end{theorem}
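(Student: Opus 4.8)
The plan is to follow the established route to average-case statistical-query (SQ) lower bounds, introduced for quantum state learning in Ref.~\cite{nietner2023average} and carried over to Haar-random orthogonal and symplectic circuits in Ref.~\cite{shaya2025complexity} (a different, information-theoretic argument handled the closely related matchgate ensemble in Ref.~\cite{nietner2023free}), and to reduce the whole question to a short list of moment computations over the unitary and orthogonal groups, which are then dispatched with the integration technique of Section~\ref{sec:css}.

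First, recall the meta-theorem underlying SQ hardness: a family of distributions $\{P_\psi\}_{\ket\psi\sim\mce}$ on an alphabet of size $d=2^n$ cannot be learned to constant accuracy by any SQ algorithm making fewer than (roughly) the reciprocal of the \emph{average correlation}
\[
\rho \;=\; \mathbb{E}_{\ket\psi,\ket\varphi\sim\mce}\!\left[\; \left|\, \sum_x \frac{\bigl(P_\psi(x)-q(x)\bigr)\bigl(P_\varphi(x)-q(x)\bigr)}{q(x)} \,\right| \;\right]
\]
queries, provided that in addition a typical $P_\psi$ is bounded away from the (known) reference distribution $q$. Since each of our ensembles is invariant under the defining symmetry of its symmetric space, the averaged Born distribution $q(x):=\mathbb{E}_{\ket\psi}[P_\psi(x)]$ has a closed form -- it is uniform up to a mild enhancement at the reference string $x=0$ -- and with this choice of $q$ one has $\mathbb{E}_{\ket\psi}[P_\psi(x)-q(x)]=0$, so that a single Cauchy--Schwarz step bounds $\rho$ by $d\,\bigl(\sum_{x,y}\mathrm{Cov}(x,y)^2\bigr)^{1/2}$, with $\mathrm{Cov}(x,y)=\mathbb{E}_{\ket\psi}[P_\psi(x)P_\psi(y)]-q(x)q(y)$. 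Everything therefore reduces to the first and second moments of $P_\psi$.

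Second, I would turn those moments into group integrals. A circuit from the circular orthogonal, circular symplectic, or DIII ensemble is an element $G$ equal to, respectively, $U^{\mst}U$, $JU^{\mst}J^{\mst}U$ (with $J$ the symplectic form), or $JO^{\mst}JO$, where $U$ is Haar on $\mathsf{U}(d)$ and $O$ is Haar on the orthogonal group, and its output state is $\ket\psi=G\ket{0}$; so $P_\psi(x)=\abs{\langle x|G|0\rangle}^2$ and the $k$-th moment of the Born distribution is a Haar average of a degree-$2k$ polynomial in the entries of $G$ and their conjugates. The symmetric-space invariance -- under $G\mapsto V^{\mst}GV$ for permutations $V$ fixing the reference index -- forces each moment to depend only on the coincidence pattern of $\{x,y,0\}$, so only a bounded number of scalar integrals must be evaluated. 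Rather than grinding through the fourth-order Weingarten calculus that a brute-force expansion of $\mathbb{E}_{\ket\psi}[P_\psi(x)P_\psi(y)]$ would demand, I would use the method of Section~\ref{sec:css}, which instead exposes the law of the Born probability vector $(P_\psi(x))_x$ directly -- it is of Dirichlet type, with parameters set by the $\beta$-index of the symmetric space -- so that the required low-order moments, and as a by-product the exact averaged total-variation distance between $P_\psi$ and the uniform distribution (the remark advertised in the abstract), follow from standard simplex integrals. Since that total-variation distance stays $\Omega(1)$, it simultaneously supplies the ``far from $q$'' half of the meta-theorem's hypothesis.

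Finally I would assemble the estimates: the computed moments give $\mathrm{Cov}(x,y)=O(d^{-2})$ on the diagonal and $O(d^{-3})$ off it, whence $\rho=O(d^{-1/2})=2^{-\Omega(n)}$ for each of the three ensembles -- the same scaling that made the Haar case hard -- and the meta-theorem then yields average-case SQ hardness, which I would restate as the formal version of Theorem~\ref{thm:maini}. I expect the main obstacle to be precisely this moment computation for the ``doubled'' ensembles: because $P_\psi(x)=\abs{(U^{\mst}U)_{x,0}}^2$ is quartic in $U$ and quartic in $\bar U$, a naive attack lands in fourth-order Weingarten calculus on $\mathsf{U}(d)$ (and its analogue on the orthogonal group for DIII), whose combinatorics is unwieldy and whose subleading terms are exactly what the bound on $\rho$ is sensitive to. Making the alternative integration approach of Section~\ref{sec:css} genuinely bypass this -- and tracking the resulting error terms carefully enough that the final bound on $\rho$ comes out at the right order -- is where the real work lies.
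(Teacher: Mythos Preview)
Your plan departs from the paper's route in both the SQ meta-theorem and the technical engine, and the shortcut you rely on has a gap. The paper does not use an average-correlation bound; it applies Lemma~1 of Ref.~\cite{nietner2023average} with $Q=\mathcal C$ the uniform distribution, so the two quantities to control are (i)~$\Pr_{V}[{\rm d}_{\rm TV}(P_V,\mathcal C)\le\varepsilon+\tau]$ and (ii)~$\max_\varphi \Pr_V[\,|\mathbb E_{P_V}\varphi-\mathbb E_{\mathcal C}\varphi|>\tau\,]$. Neither of these requires second moments of the Born vector. For (ii) the paper computes only the \emph{first}-order twirl $\mathbb E_V[V^\dagger\Phi V]$ (Lemmas~\ref{lem:ai1}--\ref{lem:diii1}, i.e.\ second-order Weingarten on the parent group) and then invokes concentration of measure on the symmetric space (Lemma~\ref{lem:cms}, a Lipschitz/L\'evy bound pushed down from $G$ to $G/K$); for (i) it computes $\mathbb E_V[{\rm d}_{\rm TV}(P_V,\mathcal C)]$ exactly via the beta/gamma description of the \emph{marginal} law of each $|S_{x0}|^2$ and again appeals to concentration. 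The ``main obstacle'' you flag---fourth-order Weingarten for $\mathbb E[P_\psi(x)P_\psi(y)]$---therefore never arises, and it is the concentration step, not a covariance estimate, that produces the doubly-exponential-in-$n$ denominator.

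The concrete gap is your claim that $(P_\psi(x))_x$ is ``of Dirichlet type, with parameters set by the $\beta$-index of the symmetric space''. That heuristic is correct for the \emph{group} ensembles (a Haar column yields a Dirichlet Born vector), but it fails for the doubled samplers $U^{\mst}U$, $JU^{\mst}J^{\mst}U$, $JO^{\mst}JO$. For AI, writing $S=U^{\mst}U$ and $a$ for the zeroth column of $U$, the paper finds $|S_{00}|^2=|a^{\mst}a|^2\sim{\rm Beta}(1,\tfrac{d-1}{2})$, whereas for $x\neq0$ one obtains $|S_{x0}|^2\stackrel{d}{=}(1-T)X$ with $T\sim{\rm Beta}(1,\tfrac{d-1}{2})$ and $X\sim{\rm Beta}(1,d-2)$ independent: a product law whose marginal differs from the diagonal one and is not a coordinate of any Dirichlet. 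For AII and DIII the skew-symmetry of $J$ forces $S_{0'0}=0$ identically, so one coordinate of the Born vector is deterministically zero. Hence neither the covariance computation nor the ``far from $q$'' half of your argument reduces to standard simplex integrals, and you would be thrown back onto exactly the higher-order Weingarten calculus you hoped to sidestep.
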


This result is similar to the findings of Refs.~\cite{nietner2023average,shaya2025complexity}, where the difficulty of the same task was established for the ensembles under consideration in those works. While the informal statement of Theorem~\ref{thm:maini} is perhaps not particularly surprising (for example, the dimension of AI as a manifold is only lower by a constant factor than that of the full unitary group), we emphasise that, quantitatively, the difficulty of the learning task is quite extreme. For example, in all of our examples we will find that even given inverse-exponentially accurate statistical queries, learning so much as a doubly-exponentially (in the logarithm of the dimension of the Hilbert space) small fraction of the possible Born distributions can take doubly-exponentially many such queries. \\

The structure of the rest of this work is as follows: In Section~\ref{sec:prelim} we review some background knowledge necessary to understand our results, briefly covering the framework of statistical query learning and some elementary aspects of the theory of integration on compact groups and symmetric spaces. Our approach to the latter is somewhat atypical for a work in quantum information theory in that, foreshadowing our later need to integrate non-polynomial functions of the matrix elements of our ensembles, we (largely) avoid the use of \textit{Weingarten calculus}~\cite{mele2023introduction,collins2006integration,weingarten1978asymptotic,collins2022weingarten,matsumoto2013weingarten}. Instead, we present an elementary treatment based on  random variables from the beta and gamma distributions. In Section~\ref{sec:uoexact} we apply the methods of Section~\ref{sec:prelim} to calculate the exact total variation distances between the Born distributions of Haar random unitary and orthogonal states and the constant distribution, for finite system dimension $d$ (to avoid dealing with annoying special cases we assume throughout that $d\ge 4$). This improves somewhat over calculations in Refs.~\cite{nietner2023average} and~\cite{shaya2025complexity}, where those quantities were respectively approximated to an additive precision of $\mco(1/\sqrt{d})$. In Section~\ref{sec:proof} we give our main technical contribution, a proof of Theorem~\ref{thm:main} (formal), a precise formalisation of Theorem~\ref{thm:maini} (informal). Finally, we conclude in Section~\ref{sec:disc} with some general discussion.

\section{Preliminaries}\label{sec:prelim}
\vspace{3mm}

\subsection{Statistical query learning}\label{sec:sql}
As forewarned, our results are phrased within the framework of \textit{statistical query learning}; we now briefly review the aspects necessary to understand the paper. We largely follow, though abridge, the treatment of Ref.~\cite{nietner2023average} (to which the interested reader is referred for further details). Statistical query learning is concerned with the learning of distributions via access to the expectation values of functions on their support (as opposed, say, to access to samples from the distribution itself). Concretely, for a distribution $P$ on a set $\mcx$ and a tolerance $\tau\geq 0$, the \textit{statistical query oracle} $\mathsf{Stat}_\tau^P$ is a function
\begin{equation}
    \mathsf{Stat}_\tau^P: (\varphi:\mcx\to[-1,1]) \mapsto v
\end{equation}
such that $\abs{\expect_{x\sim P}\varphi(x)-v}\le \tau$. That is, one gives the oracle a bounded function $\varphi$, and the oracle returns its average over the distribution, up to an additive tolerance $\tau$. Naturally, one is  interested in the number of queries to $\mathsf{Stat}_\tau^P$ necessary to identify $P$ (perhaps given, say, that $P$ belongs to some set $\mcd$ of distributions on $\mcx$). Formally, given some fixed values of $\tau,\varepsilon$, we will consider the problem of $\varepsilon$-learning $\mcd$ to be that of, given access to the $\mathsf{Stat}_\tau^P$ corresponding to some unknown $P\in\mcd$, outputting a function $Q$ that is $\varepsilon$-close to $P$ in \textit{total variation distance}, where
\begin{equation}\label{eq:tvd}
   {\rm d}_{\rm TV }(P,Q) :=  \frac{1}{2}\sum_{x\in\mcx}\left|P(x) -Q(x)\right|.
\end{equation}
We will be interested in the (deterministic\footnote{As opposed to the situation where the learning algorithm itself may employ randomness}) average-case complexity of the above problem, over some fraction of $\mcd$. That is, given a measure $\mu$ on the set $\mcd$ of distributions and a success fraction $\beta$, the deterministic average-case query complexity is the minimum number of queries $q$ that a learning algorithm $\mca$ needs to make in order to ensure
\begin{equation}
    \Pr_{P\sim \mu}[\mca\ \varepsilon\text{-learns $P$ from $q$ ($\tau$-accurate) queries}]\ge\b.
\end{equation}
We will make crucial use of the following somewhat formidable-looking lemma from Ref.~\cite{nietner2023average}:
\begin{restatable}{lem}{lem1}[Lemma 1 of Ref.~\cite{nietner2023average}]
    Suppose there is a deterministic algorithm $\mca$
that $\varepsilon$-learns a fraction $\b$ of $\mcd$ with respect to $\mu$ from $q$ many $\tau$-accurate statistical queries. Then for any $Q$,
\begin{equation}\label{eq:lem1}
    q+1\ge \frac{\b-\Pr_{P\sim \mu}[{\rm d}_{\rm TV}(P,Q)\le\varepsilon+\tau]}{\displaystyle{\max_{\vph\,:\,\mcx\to[-1,1]}}\Pr_{P\sim\mu}\left[\Big\lvert\expect_{\x\sim P}[\varphi(\x)]-\expect_{\x\sim Q}[\varphi(\x)]\Big\rvert>\tau\right]} 
\end{equation}
\end{restatable}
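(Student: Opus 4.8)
\emph{Proof plan.} The plan is to run the classical ``central-oracle'' adversary argument for deterministic statistical-query learners (the mechanism behind statistical-dimension lower bounds), adapted to the fixed-fraction, average-case formulation used here. The key first move is to exploit that $\mca$ is \emph{deterministic}: commit in advance to the oracle strategy that answers \emph{every} query $\varphi$ with the central value $\expect_{x\sim Q}[\varphi(x)]$, which lies in $[-1,1]$ and so is a syntactically valid response. Since the behaviour of $\mca$ depends only on the answers it receives, and these now depend only on $Q$, the whole interaction is frozen: the sequence of queries $\varphi_1,\dots,\varphi_q$ that $\mca$ issues and the hypothesis $h$ it finally outputs are fixed functions of $Q$ alone, with no dependence on the unknown target $P$. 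I would additionally append one ``virtual'' query $\varphi_{q+1}$ once $h$ is in hand, chosen to witness the total variation distance between $h$ and $Q$: take $\varphi_{q+1}=\operatorname{sign}(h-Q)$, so that $\varphi_{q+1}\colon\mcx\to[-1,1]$ and $\expect_{x\sim h}[\varphi_{q+1}(x)]-\expect_{x\sim Q}[\varphi_{q+1}(x)]=2\,{\rm d}_{\rm TV}(h,Q)$.

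Next I would invoke a union bound. This central oracle, extended by $\varphi_{q+1}$, is a legitimate $\mathsf{Stat}_\tau^P$ oracle for a given target $P$ exactly when $\lvert\expect_{x\sim P}[\varphi_i(x)]-\expect_{x\sim Q}[\varphi_i(x)]\rvert\le\tau$ holds for every $i\in\{1,\dots,q+1\}$; hence the $\mu$-measure of the set of ``bad'' $P$ for which some $\varphi_i$ fails this condition is at most $(q+1)D$, where $D$ denotes the denominator on the right-hand side of~\eqref{eq:lem1}. Now take any $P$ that simultaneously (i) lies outside this bad set -- a set of $\mu$-measure at least $1-(q+1)D$ -- and (ii) belongs to the set of distributions that $\mca$ actually $\varepsilon$-learns, which by hypothesis has $\mu$-measure at least $\beta$. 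For such a $P$ the frozen run above is a valid execution of $\mca$ on target $P$, so its output obeys ${\rm d}_{\rm TV}(h,P)\le\varepsilon$; and, $P$ lying outside the bad set, $\lvert\expect_{x\sim P}[\varphi_{q+1}(x)]-\expect_{x\sim Q}[\varphi_{q+1}(x)]\rvert\le\tau$. Feeding these two inequalities into the defining property of $\varphi_{q+1}$ together with the triangle inequality for ${\rm d}_{\rm TV}$ yields ${\rm d}_{\rm TV}(P,Q)\le\varepsilon+\tau$ (with some care over the constants). Consequently $\{P : {\rm d}_{\rm TV}(P,Q)\le\varepsilon+\tau\}$ contains the intersection of the sets in (i) and (ii), which has $\mu$-measure at least $\beta-(q+1)D$; rearranging this yields precisely~\eqref{eq:lem1}.

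I expect the only genuinely delicate point to be this final transfer step. Along the frozen run $\mca$ never interacts with the true target $P$, so the sole handle on $P$ is the fixed hypothesis $h$, and it is precisely the extra virtual query $\varphi_{q+1}$ -- which, like every other query, is answered only up to accuracy $\tau$ -- that lets one upgrade ``$h$ is $\varepsilon$-close to $P$'' into ``$Q$ is $(\varepsilon+\tau)$-close to $P$''. This is the origin both of the ``$+1$'' on the left-hand side of~\eqref{eq:lem1} and of the extra $\tau$ in its numerator. Two routine hygiene checks complete the argument: that $\varphi_{q+1}$ is a bona fide $[-1,1]$-valued query, and that one may assume $\mca$ issues exactly $q$ real queries, padding with dummy queries whose responses it discards (which does not alter $h$).
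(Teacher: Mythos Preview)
The paper does not itself prove this lemma; it is quoted verbatim as Lemma~1 of Ref.~[nietner2023average] and used as a black box. So there is no ``paper's own proof'' to compare against, and your sketch stands or falls on its own.

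Your overall architecture --- freeze the run via the $Q$-centred oracle, collect the resulting fixed queries $\varphi_1,\dots,\varphi_q$ and fixed output $h$, append one virtual query $\varphi_{q+1}$, then union-bound --- is exactly the standard deterministic-SQ adversary argument, and that part is sound. The place where the sketch is genuinely incomplete is precisely where you flag it: the transfer step. With your choice $\varphi_{q+1}=\operatorname{sign}(h-Q)$ and the two facts you isolate, namely ${\rm d}_{\rm TV}(h,P)\le\varepsilon$ and $\bigl|\mathbb E_{P}[\varphi_{q+1}]-\mathbb E_{Q}[\varphi_{q+1}]\bigr|\le\tau$, the route you describe gives
\[
2\,{\rm d}_{\rm TV}(h,Q)=\mathbb E_{h}[\varphi_{q+1}]-\mathbb E_{Q}[\varphi_{q+1}]\le \bigl|\mathbb E_{h}[\varphi_{q+1}]-\mathbb E_{P}[\varphi_{q+1}]\bigr|+\tau\le 2\varepsilon+\tau,
\]
and hence ${\rm d}_{\rm TV}(P,Q)\le {\rm d}_{\rm TV}(P,h)+{\rm d}_{\rm TV}(h,Q)\le 2\varepsilon+\tau/2$, \emph{not} $\varepsilon+\tau$. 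The parenthetical ``with some care over the constants'' is thus hiding a real discrepancy: the natural execution of your plan produces the numerator $\beta-\Pr_{P\sim\mu}\bigl[{\rm d}_{\rm TV}(P,Q)\le 2\varepsilon+\tau/2\bigr]$, which neither implies nor is implied by the stated one (they agree only at $\varepsilon=\tau/2$). For the downstream applications in this paper the distinction is immaterial --- both versions feed identical asymptotics into Theorem~\ref{thm:main} --- but as a proof of the lemma \emph{as stated} you still owe either a sharper analysis of the transfer inequality or a different design of the $(q{+}1)$-th query that lands exactly on $\varepsilon+\tau$.
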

\noindent
Let us try to understand a little of what is going on here. 
First of all, we will only ever take $Q$ to be the constant distribution $\mcc$ on the set $\mcx=\{0,1\}^n$ of computational basis output strings of $n$-qubit systems, i.e. $\mcc(\x)=1/d\ \forall \x\in\mcx$, with $d=2^n$. There are then two quantities that we are interested in calculating. The first is the size 
\begin{equation}
    \mfu := \Pr_{P\sim \mu}[{\rm d}_{\rm TV}(P,\mcc)\le\varepsilon+\tau]= \mcb_{\varepsilon+\tau}^{\rm TVD}(\mcc)
\end{equation}
of the $(\varepsilon+\tau)$-ball (with respect to the TVD) around the constant distribution. The fact that something like this term should appear and lower the sample complexity is reasonably intuitive; indeed, any distribution that is $\varepsilon$-close to $\mcc$ can be ``learnt'' simply by outputting $\mcc$ without taking any samples. The second quantity we should like to calculate, the denominator of the right hand side of Eq.~\eqref{eq:lem1}, is  the fraction of the distributions that can be distinguished from $Q$ via  querying  their oracle on some (fixed) observable $\varphi$; again taking $Q$ to be the uniform distribution on $\{0,1\}^n$ we denote it
\begin{equation}\label{eq:f}
\mff:=\max_{\vph\,:\,\{0,1\}^n\to[-1,1]}\Pr_{P\sim\mu}\left[\left\lvert\expect_{\x\sim P}[\varphi(\x)]-\frac{1}{d}\sum_{\x\in\{0,1\}^n}\varphi(\x)\right\rvert >\tau\right].
\end{equation}
As in Refs.~\cite{nietner2023average,shaya2025complexity}, we will in our applications fairly straightforwardly bound $\mff$ by means of the Weingarten calculus combined with a concentration of measure argument. The calculation of $\mfu$ will prove somewhat more challenging, involving as we shall see the expectation of a non-polynomial function of the matrix elements of our ensemble; nonetheless, we will be able to evaluate it by utilising the techniques of Section~\ref{sec:samp} and~\ref{sec:css}. Having calculated the average value, we will again proceed by a concentration of measure argument (Section~\ref{sec:com}). This will culminate in the following formalisation of Theorem~\ref{thm:maini}:

\begin{restatable}{thm}{thmmain}\label{thm:main}
{\normalfont (Formal).} 
Let $\tau>2/(d-1),\,\varepsilon\le \xi_0-10/d-\tau$. Set $\xi=\xi_0-10/d-\varepsilon-\tau$. Any algorithm that succeeds in $\varepsilon$-learning a $\beta$ fraction
of the output distributions of quantum circuits sampled uniformly from the uniform measures on AI, AII, or DIII requires at least $q$ many $\tau$-accurate statistical queries, with
\begin{equation}\label{eq:mim}
    q+1\ge \frac{\beta-2\exp\left[-(d-2)\xi^2/96\right]}{2\exp[-(d-2)(\tau-2/(d-1))^2/384]}
\end{equation}
where (for AI, AII and DIII respectively) we take the values  $\xi_0=1/e,\,1/e,\,\sqrt{2/(\pi e)}$.
\end{restatable}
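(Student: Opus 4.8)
The plan is to instantiate Lemma~\ref{lem:lem1} (Lemma 1 of Ref.~\cite{nietner2023average}) with $Q=\mcc$ the constant distribution, and control the two key quantities $\mfu$ and $\mff$ for each of the three ensembles AI, AII, DIII. For $\mfu$, I would first compute the \emph{expected} total variation distance $\expect_{\kp\sim\mce}[{\rm d}_{\rm TV}(P_\psi,\mcc)]$ exactly (or to leading order) using the beta/gamma-variable techniques of Section~\ref{sec:samp} together with the symmetric-space structure of Section~\ref{sec:css}; this is where the constants $\xi_0\in\{1/e,\,1/e,\,\sqrt{2/(\pi e)}\}$ should emerge, the first two matching the analogous computation for the Haar unitary and orthogonal cases and the DIII value reflecting the different marginal distribution of $|\braket{\x}{\psi}|^2$ on the Gaussian manifold. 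Then a concentration-of-measure argument (Section~\ref{sec:com}) — presumably Levy's lemma on the relevant symmetric space, whose dimension is $\Theta(d)$ so the concentration exponent scales like $d$ — gives that ${\rm d}_{\rm TV}(P_\psi,\mcc)$ is $\xi$-close to its mean except with probability at most $2\exp[-(d-2)\xi^2/96]$, yielding the numerator of Eq.~\eqref{eq:mim} once one sets $\xi=\xi_0-10/d-\varepsilon-\tau$ (the $10/d$ slack absorbing the gap between the exact mean and the clean value $\xi_0$, as well as lower-order terms from the concentration bound).

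For the denominator $\mff$ I would bound, for an arbitrary fixed $\varphi:\{0,1\}^n\to[-1,1]$, the probability that $|\expect_{\x\sim P_\psi}\varphi(\x)-\frac1d\sum_\x\varphi(\x)|>\tau$. Writing $f(\psi)=\sum_\x\varphi(\x)|\braket{\x}{\psi}|^2$, the mean of $f$ over the ensemble is exactly $\frac1d\sum_\x\varphi(\x)$ by symmetry of the uniform measure under the relevant coordinate permutations, plus a correction of size $O(1/(d-1))$ coming from the fact that these are circular/Gaussian ensembles rather than the plain Haar measure (this is the origin of the $\tau>2/(d-1)$ hypothesis and the $(\tau-2/(d-1))$ shift). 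A second-moment estimate via Weingarten calculus shows $f$ has variance $O(1/d)$, and since $f$ is Lipschitz on the symmetric space with constant $O(1/\sqrt d)$ (as $\|\varphi\|_\infty\le 1$), Levy's lemma gives $\Pr[|f(\psi)-\expect f|>t]\le 2\exp[-c\,d\,t^2]$ uniformly in $\varphi$; taking $t=\tau-2/(d-1)$ and tracking the constant ($c=1/384$) produces the denominator. One then substitutes both bounds into Eq.~\eqref{eq:lem1} and reads off Eq.~\eqref{eq:mim}.

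The main obstacle I expect is the exact evaluation of $\expect[{\rm d}_{\rm TV}(P_\psi,\mcc)]$ for the symmetric-space ensembles. Unlike the Haar case, a uniformly random state on AI is of the form (column of) $U^\mst U$, on AII of $JU^\mst J^\mst U$, and on DIII of $JO^\mst JO$, so the marginal law of a single amplitude $\braket{\x}{\psi}$ is not simply that of a uniformly random unit vector — the anti-unitary constraint changes it (in the AI case the amplitudes become effectively real-like, in AII they pair up, in DIII the orthogonal group enters). Getting the precise marginal, and then integrating $|\,p-1/d\,|$ against it in closed form, is the delicate step; this is presumably exactly what the ``somewhat unconventional'' beta/gamma approach of Sections~\ref{sec:samp}--\ref{sec:css} is designed to handle, and it is also what lets the author claim the \emph{exact} (rather than $O(1/\sqrt d)$-approximate) TVDs advertised in the abstract. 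Once that marginal and its concentration are in hand, everything else — plugging into Lemma~\ref{lem:lem1}, bounding $\mff$ by Weingarten plus Levy, and bookkeeping the constants $96$, $384$, $10/d$, $2/(d-1)$ — is routine, though it requires care to keep all three cases uniform so that only the single constant $\xi_0$ differs between them.
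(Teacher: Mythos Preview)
Your plan is essentially the paper's: instantiate Lemma~1 with $Q=\mcc$, bound $\mfu$ by computing the expected TVD via the beta/gamma machinery and then applying concentration on the symmetric space (Lemma~\ref{lem:cms}), and bound $\mff$ by controlling the mean of $f(V)=\langle 0|V^\dagger\Phi V|0\rangle$ and applying concentration again. Two technical corrections, though. First, the Lipschitz constant of $f$ on $G$ (Frobenius metric) is $2\|\Phi\|_\infty\le 2$, not $O(1/\sqrt d)$; this is exactly Lemma~\ref{lem:cmexp}, and the $L=2$ combined with the factor-of-$4$ loss from lifting to the symmetric space (Lemma~\ref{lem:cms}) is what produces the $384=96\cdot 4$ in the denominator --- your claimed $O(1/\sqrt d)$ would yield a far stronger bound than Eq.~\eqref{eq:mim} and is false. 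Second, the paper never computes a second moment for $\mff$: it evaluates only the \emph{first}-order twirl (Lemmas~\ref{lem:ai1}--\ref{lem:diii1}), which directly gives $\expect f=\bar\varphi+O(1/(d\pm1))$ and hence the $2/(d-1)$ shift, and then goes straight to Levy. Your ``symmetry under coordinate permutations'' heuristic for the mean is also not quite right --- in AII and DIII one has the deterministic zero $S_{0'0}=0$, breaking naive permutation symmetry --- but the first-order twirl handles this cleanly.
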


Note, as in Ref.~\cite{nietner2023average}, that taking $\tau,\xi\sim 2^{-n/4}$ and $\b\sim\exp(-2^{0.49 n})$ cause the left-hand side of Eq.~\eqref{eq:mim} to be doubly-exponentially large in $n$; that is, learning a doubly-exponentially small fraction of the output distributions takes doubly-exponentially many inverse-exponentially accurate statistical queries. Interestingly, this argument is dramatically sensitive to the precise details of how accurate the oracle is (i.e. how small $\tau$ is). For example, increasing the sensitivity merely from $\tau\sim 2^{-n/4}$ to $\tau\sim 2^{-n/2}$ weakens the bound of Eq.~\eqref{eq:mim} all the way from being doubly-exponential in $n$ to being trivial. 

\subsection{Sampling randomly from the classical compact groups}\label{sec:samp}
In this section, we briefly recall some properties of matrices sampled from the uniform (Haar) measure\footnote{The existence and uniqueness of which we shall take for granted} $\mu_G$ on the classical compact groups, $G\in\{\mbu,\, \mbo,\,\mbsp\}$. Many more details may be found in e.g. Refs.~\cite{petz2004asymptotics,meckes2019random,collins2006integration,mele2023introduction}. 
Our starting point is the well-known fact that an individual column $u$ of a Haar random matrix sampled from $\mu_\mbu,\ \mu_\mbo,\ \mu_\mbsp$ is distributed like a normalised vector of i.i.d standard normal random variables over the real, complex, and quaternionic numbers respectively:

\begin{equation}\label{eq:normdists}
    g=(g_1,\dots,g_d)\ \text{i.i.d.},\qquad
\begin{cases}
g_j\sim \mcn_\mbr(0,1)  & (\mbo(d)),\\
g_j\sim \mcn_\mbc(0,1) & (\mbu (d)),\\
g_j\sim \mcn_\mbh(0,1) & (\mbsp (d/2))
\end{cases}
\qquad u=\dfrac{g}{\|g\|_2}.
\end{equation}
We will also require the joint distribution of two rows; operationally this is given by the Gram-Schmidt procedure: one (i) draws two rows $g_1,g_2$ from the appropriate normal distribution in Eq.~\eqref{eq:normdists} (ii) normalises the first to obtain $a=g_1/\|g_1\|$ (iii) orthogonalises $g_2$ with respect to $a$, $h_2=g_2-\braket{a}{g_2}a$ (iv) sets $b=h_2/\|h_2\|$. Then $a,b$ are distributed as two rows of a Haar random distribution from the corresponding $G$~\cite{petz2004asymptotics}. Now, while the constraints of unitarity imply that the distribution of a given element of a Haar-random matrix is not independent of any of the other elements, simple expressions for the distribution of its magnitude do exist. To see them, we first recall the definitions and a few properties of some common distributions:\\

\noindent
Firstly, we recall that the \textit{gamma distribution}\footnote{Note that ${\rm Gamma}(-,-)$ denotes the so-defined gamma \textit{distribution}, and  $\Gamma(-)$  the factorial-generalising gamma \textit{function} }
\({\rm Gamma}(k,\theta)\) has density \(x^{k-1}e^{-x/\theta}/(\Gamma(k)\theta^k)\) on \((0,\infty)\). Recall that if $X_i\sim\mcn_\mbr(0,k)$ then $\sum_{i=1}^N X_i^2\sim{\rm Gamma}(N/2,2k)$; it then follows from the above discussion that that distribution appears in our context as
\begin{equation}
|g_j|^2\sim
\begin{cases}
{\rm Gamma}(1,1), & (\mbu(d))\\
{\rm Gamma}\big(\frac12,2\big), & (\mbo(d))
\end{cases}\  ,
\qquad
\sum_{k=1}^d |g_k|^2\sim
\begin{cases}
{\rm Gamma}(d,1), & (\mbu(d))\\
{\rm Gamma}\big(\frac d2,2\big), & (\mbo(d))
\end{cases}
\end{equation}

\noindent
Secondly, the \textit{beta distribution} \({\rm Beta}(a,b)\)  has density \(x^{a-1}(1-x)^{b-1}/B(a,b)\), where \(B(a,b)=\Gamma(a)\Gamma(b)/\Gamma(a+b)\). Importantly for us, if $X\sim \Ga(\a,\vartheta)$ and $Y\sim \Ga(\b,\vartheta)$ are independent, then $X/(X+Y)\sim\Ba(\a,\b)$. As (in the above notation) $\abs{g_i}^2$ and $\sum_{j\neq i}\abs{g_j}^2$ are independent, it follows that~\cite{khatri2020random}
\begin{equation}
|U_{ij}|^2\sim
\begin{cases}
{\rm Beta}(1,d-1), & (\mbu(d))\\
 {\rm Beta}\big(\frac12,\frac{d-1}{2}\big), & (\mbo(d))
\end{cases}\ .
\end{equation}

\noindent
Let us use the above formulation to calculate the distribution of the magnitude squared of the (real) dot product of the column of a Haar random unitary with itself (a quantity we will in any event be later in need of)

\begin{restatable}{lem}{lemdot}
    Let $a$ be a column of a $d\times d$ Haar random unitary. Then $\abs{a^\mst a}^2\sim \Ba(1,\frac{d-1}{2})$
\end{restatable}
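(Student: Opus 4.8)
The plan is to keep everything at the level of the elementary beta/gamma facts recalled in Section~\ref{sec:samp}, with no moment computations. By Eq.~\eqref{eq:normdists} I take $a = g/\|g\|$ with $g = (g_1,\dots,g_d)$ a vector of i.i.d.\ standard complex Gaussians, and write $g = p + iq$ with $p, q\in\mbr^d$, so that the $2d$ real coordinates of $(p,q)$ are i.i.d.\ real Gaussians. Since $a^\mst a = \big(\|p\|^2 - \|q\|^2 + 2i\langle p,q\rangle\big)/\|g\|^2$ and $\|g\|^2 = \|p\|^2 + \|q\|^2$, putting $t := \|p\|^2/(\|p\|^2+\|q\|^2)$ and $c := \langle p,q\rangle/(\|p\|\,\|q\|)$ yields the identity
\begin{equation*}
    \abs{a^\mst a}^2 \;=\; (2t-1)^2 + 4\,t(1-t)\,c^2 \;=\; 1 - 4\,t(1-t)(1-c^2).
\end{equation*}
The law of $(p,q)$ is invariant under the diagonal rotation $(p,q)\mapsto(Op,Oq)$ with $O\in\mbo(d)$; using such a rotation to move $p$ onto the first coordinate axis shows that, conditionally on $\|p\|$ and $\|q\|$, the number $c$ is distributed like a single coordinate of a uniform point on $S^{d-1}$, and in particular is independent of $t$.

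It then remains only to identify two one-dimensional laws. Because $t = \|p\|^2/(\|p\|^2+\|q\|^2)$ with $\|p\|^2$ and $\|q\|^2$ independent, each a sum of $d$ i.i.d.\ squared real Gaussians, the gamma--beta fact recalled above gives $t\sim\Ba(d/2,d/2)$; in particular the law of $t$ is symmetric about $1/2$. Likewise $c^2$ is the squared first entry of a Haar $\mbo(d)$ column, so $c^2\sim\Ba(\tfrac12,\tfrac{d-1}{2})$ and hence $1-c^2\sim\Ba(\tfrac{d-1}{2},\tfrac12)$, independently of $t$. Using that $t$ is symmetric about $1/2$ and that $t\mapsto 4t(1-t)$ is two-to-one onto $(0,1)$, a one-line change of variables shows that $\eta := 4t(1-t)$ has density proportional to $\eta^{d/2-1}(1-\eta)^{-1/2}$, i.e.\ $\eta\sim\Ba(d/2,\tfrac12)$.

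The final ingredient is the beta-multiplication identity: if $X\sim\Ba(\alpha,\beta)$ and $Y\sim\Ba(\alpha+\beta,\gamma)$ are independent, then $XY\sim\Ba(\alpha,\beta+\gamma)$; this is itself immediate from the gamma representation, since writing $X = Z_1/(Z_1+Z_2)$ and $Y = (Z_1+Z_2)/(Z_1+Z_2+Z_3)$ for independent gamma variables $Z_i$ gives $XY = Z_1/(Z_1+Z_2+Z_3)$. Applying it with $\alpha = \tfrac{d-1}{2}$ and $\beta = \gamma = \tfrac12$ to $X = 1-c^2$ and $Y = \eta$ (which are independent) gives $1-\abs{a^\mst a}^2 = \eta(1-c^2)\sim\Ba(\tfrac{d-1}{2},1)$, and hence $\abs{a^\mst a}^2\sim\Ba(1,\tfrac{d-1}{2})$, as claimed. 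I expect the only two steps meriting real care to be the $\mbo(d)$-invariance reduction --- one must check that after rotating $p$ onto $e_1$ the leftover randomness in $q$ really is an independent uniform direction, which is exactly what makes $c$ independent of $t$ --- and the correct matching of parameters in the beta-multiplication step; the remaining changes of variables are routine. (A slightly slicker but less self-contained alternative is to note that $\abs{a^\mst a}^2 = \big((\lambda_1-\lambda_2)/(\lambda_1+\lambda_2)\big)^2$ for $\lambda_1\ge\lambda_2\ge0$ the eigenvalues of the $2\times 2$ Wishart matrix $M^\mst M$ with $M=[\,p\mid q\,]$, and to read the claim off the standard $2\times 2$ real Wishart eigenvalue density after passing to the variables $\lambda_1/(\lambda_1+\lambda_2)$ and $\lambda_1+\lambda_2$.)
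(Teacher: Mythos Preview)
Your proof is correct and follows essentially the same decomposition as the paper: both write $a=g/\|g\|$ with $g=p+iq$, derive the identity $|a^\mst a|^2 = 1 - 4t(1-t)(1-c^2)$ with $t=\|p\|^2/(\|p\|^2+\|q\|^2)$ and $c=\langle p,q\rangle/(\|p\|\,\|q\|)$, and establish that $t$ and $c$ are independent with $t\sim\Ba(d/2,d/2)$ and $c^2\sim\Ba(\tfrac12,\tfrac{d-1}{2})$. The paper obtains independence via an explicit Jacobian and joint density; you do it by the $\mbo(d)$-invariance rotation, which is equally valid and arguably cleaner.

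The one genuine difference is the endgame. After the common change of variables $\eta=4t(1-t)\sim\Ba(d/2,\tfrac12)$ and $1-c^2\sim\Ba(\tfrac{d-1}{2},\tfrac12)$, the paper computes the density of $S=1-\eta(1-c^2)$ by conditioning on $\eta$ and explicitly integrating out, whereas you invoke the beta-product identity $\Ba(\alpha,\beta)\cdot\Ba(\alpha+\beta,\gamma)\sim\Ba(\alpha,\beta+\gamma)$ (with the one-line gamma justification) to read off $\eta(1-c^2)\sim\Ba(\tfrac{d-1}{2},1)$ directly. Your route sidesteps the integral entirely and is more transparent; the paper's explicit integration has the minor virtue of being self-contained without citing the product identity, but since you prove that identity in a line the advantage is negligible.
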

\begin{proof}
Let $g=x+iy$ with $x,y\in\mathbb{R}^d$ independent and having i.i.d. entries drawn from $\mcn_\mbr(0,1/2)$. Set $a=g/\|g\|$, so that $a$ is distributed like a column of a Haar random unitary. We introduce the auxiliary quantities $ A=\|x\|^2,\, B=\|y\|^2,\, R=A+B=\|g\|^2,\, C=x\cdot y,\,  D = A/R.$
Then $g^{\mathsf T}g=(x+iy)^{\mathsf T}(x+iy)=(A-B)+2iC,$ so that
\begin{equation}\label{eq:s}
|g^{\mathsf T}g|^2=(A-B)^2+4C^2=R^2-4(AB-C^2)
\quad\Longrightarrow\quad
S:=|a^{\mathsf T}a|^2\,=\,1-\frac{4(AB-C^2)}{R^2},
\end{equation}
which we note \textit{en passant} is $1-4\det(M)/(\tr[M]^2)$, where $M=\begin{psmallmatrix}
    A&C\\C&B
\end{psmallmatrix}$. 
Let us move to spherical coordinates, writing  $x=r_1u$, $y=r_2v$ with $r_1,r_2>0$ and $u,v\in\mathbb S^{d-1}$. Then
\begin{equation}
dx\,dy = r_1^{d-1}r_2^{d-1}\,dr_1\,dr_2\,d\omega(u)\,d\omega(v)
\end{equation}
and $x,y$ have joint density $\propto e^{-R}$ (independent of $u,v$). Now, let $z:=u\cdot v=\cos\theta$. The surface element on $\mathbb S^{d-1}$ is
$(\sin\theta)^{d-2}d\theta d\Omega$, and since $dz/d\theta=-\sin\theta$, the marginal of $z$ is
\begin{equation}\label{eq:z}
f_Z(z)\ \propto\ (1-z^2)^{\frac{d-3}{2}},\qquad -1<z<1.
\end{equation}
We now make the change of variables $(r_1,r_2)\mapsto (D,R)$; using
$r_1=\sqrt{DR}$, $r_2=\sqrt{(1-D)R}$, we find
\begin{equation}
\Bigl|\det\frac{\partial(r_1,r_2)}{\partial(D,R)}\Bigr|=\frac{1}{4\sqrt{D(1-D)}},
\end{equation}
so that the joint density becomes
\begin{equation}\label{eq:ARZ}
f (D,R,Z) \propto
e^{-R}R^{d-1} D^{\frac d2-1}(1-D)^{\frac d2-1} (1-Z^2)^{\frac{d-3}{2}}.
\end{equation}
Thus $D$, $R$, $Z$ are mutually independent; in particular $D$ has density $\propto D^{\frac d2-1}(1-D)^{\frac d2-1}$ on $(0,1)$.
Now, since $C=r_1r_2 z=\sqrt{AB}\,z$ and $AB=D(1-D)R^2$, we have from Eq.~\eqref{eq:s}
\begin{equation}
S \;=\; 1-4D(1-D)\bigl(1-z^2\bigr).
\end{equation}
We make one last change of variables,
\begin{equation}
U:=4D(1-D)\in(0,1),\qquad Y:=1-Z^2\in(0,1),
\end{equation}
with
\begin{equation}
f_U(u)\ \propto\ u^{\frac d2-1}(1-u)^{-1/2},\quad 0<u<1,
\end{equation}
\begin{equation}
f_Y(y)\ \propto\ y^{\frac{d-3}{2}}(1-y)^{-1/2},\quad 0<y<1,
\end{equation}
and $U\perp Y$.  Since $S=1-UY$, conditioning on $U=u$ gives
\begin{equation}
f_{S|U}(s\,|\,u)=\frac{1}{u}\,f_Y\Bigl(\frac{1-s}{u}\Bigr),\qquad 1-u<s<1.
\end{equation}
Integrating out $U$ (note that $u\in[1-s,1]$),
\begin{align}
f_S(s)
&=\int_{1-s}^{1} f_{S|U}(s|u)f_U(u)\,du\\
&\propto (1-s)^{\frac{d-3}{2}}\int_{1-s}^{1} u^{-\frac{d-1}{2}}
\Bigl(1-\frac{1-s}{u}\Bigr)^{-1/2} \cdot u^{\frac d2-1}(1-u)^{-1/2}\,du\\
&=(1-s)^{\frac{d-3}{2}} \int_{1-s}^{1} (u-1+s)^{-\frac12}(1-u)^{-\frac12}\,du\\
&= \pi (1-s)^{\frac{d-3}{2}}
\end{align}
Recalling that \({\rm Beta}(a,b)\)  has density \(x^{a-1}(1-x)^{b-1}/B(a,b)\) concludes the proof.
\end{proof}

\subsection{Classical compact symmetric spaces}\label{sec:css}
A (classical, compact, type~I) \emph{symmetric space} is a quotient $G/K$ where $G\in\{\mbu(d),\, \mbo(d),\,\mbsp(d)\}$ and $K \subset G$ is   the fixed–point set of an involutive automorphism $\sigma:G\to G$, i.e.\ $K = \{g\in G:\sigma(g)=g\}$. For example, the so-called AI symmetric space is given by taking $G=\mbu(d)$, $\sigma(U)=U^*$, whence $K=\mbo(d)$ and $G/K$ is (isomorphic to) the space of symmetric unitaries.  
An invariant probability measure on $G/K$, exists, is unique, and admits a simple sampling recipe~\cite{matsumoto2013weingarten}: draw $g\sim\mu_G$ and set
\begin{equation}
V \;=\; \sigma(g^{-1})\,g. \label{eq:symm-space-sampler}
\end{equation}
This measure is left–$K$–invariant\footnote{Indeed, if $g'=kg$ with $k\in K$ then $\sigma((g')^{-1})g'=\sigma((kg)^{-1})kg=\sigma(g^{-1})\sg(k^{-1})kg=\sigma(g^{-1})\sg(k)^{-1}kg=\sigma(g^{-1})k^{-1}kg=\sigma(g)^{-1}g$, where we use that $k$ is fixed by the (group homomorphism) $\sg$ }, and, when lifted to $G$, appears as a specific non‑uniform ensemble supported on $G$. Indeed, this allows one to reformulate the \emph{$n$\textsuperscript{th}‑order twirl}~\cite{mele2023introduction}
\begin{equation}
\mct^{(n)}_{G/K}(A) \;=\; \int_{V\sim \mu_{G/K}} V^{\otimes n}\,A\,(V^\dagger)^{\otimes n}
\end{equation}
over $G/K$ into a $2n$\textsuperscript{th}-order twirl over $G$~\cite{matsumoto2013weingarten}.
Unlike the group twirl $\mct^{(n)}_G$, which is $G$‑equivariant and projects onto the $n$\textsuperscript{th}-order commutant, $\mct^{(n)}_{G/K}$ is only $K$‑equivariant, and is not idempotent; it therefore is not a projector onto anything. As it turns out, there are exactly seven infinite families of such spaces~\cite{cartan1926sur}, given by:
\begin{table}[H]
\begin{center}
    \begin{tabular}{lccccccc}\label{tab:1}
        Type & $G$ & $\dim(G)$ & $K$ &   $\dim(G/K)$ & Involution  \\ \midrule
        AI & $\mbsu(d)$ & $d^2$ & $\mbso(d)$ & $(d^2+d)/2$ & $*$  \\
        AII & $\mbsu(2d)$ & $4d^2$ & $\mbsp(d)$ & $2d^2-d$ & $\Ad_J \circ *$ \\
        AIII & $\mbsu(p+q)$ & $(p+q)^2$ & $\mbs(\mbu(p)\times \mbu(q))$ &  $2pq$ &$\Ad_{I_{p,q}} $ \\
        \midrule
        BDI & $\mbso(p+q)$ & $ [(p+q)^2-p-q]/2 $ & $\mbso(p)\times \mbso(q)\ $ &   $pq$ & $\Ad_{I_{p,q}} $  \\
        DIII & $\mbso(2d)$ & $2d^2-d$ & $\mbu(d)$ &  $d^2-d$ & $\Ad_{J} $ \\
        \midrule
        CI & $\mbsp(d)$ & $2d^2+d$ & $\mbu(d)$ &   $d^2+d$ & $\Ad_{J} $  \\
        CII & $\mbsp(p+q)$ & $2(p+q)^2+p+q$ & $\mbsp(p)\times \mbsp(q)$  & $4pq$ &$\Ad_{K_{p,q}} $  \\
    \end{tabular}
\end{center}
\end{table}    
    \noindent
    where $J$ is the canonical symplectic form, $I_{p,q}=\id_p \oplus (-\id_q)$, and $K_{p,q} = I_{p,q}^{\oplus 2}$. We emphasise that the precise forms of the involutions are mildly convention-dependent, and that different conventions exist. This table is adapted from Ref.~\cite{wierichs2025recursive}. In the following we will primarily concern ourselves with the spaces AI, AII, and DIII, which correspond respectively to the circular orthogonal ensemble, the circular symplectic ensemble, and the ensemble of  fermionic gaussian states. \\

By the above discussion, it is evident that the integration of polynomial functions of the matrix elements of unitaries sampled from the classical compact symmetric spaces reduces to the integration of a polynomial (of doubled order) of matrix elements of unitaries sampled from the parent group; at which point one can rely on the well-known Weingarten calculus on the classical compact groups.  As an explicit example of this that we will later use, one has:
\begin{restatable}{lem}{ai1}\label{lem:ai1}
For any $A\in\endh$,
\begin{equation}
    \expect_{V\sim \mu_{\rm AI}}VAV^\dagger =  \frac{\tr[A]\id + A^\mst}{d+1}
\end{equation}
\end{restatable}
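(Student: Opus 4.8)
The plan is to use the sampling recipe in Eq.~\eqref{eq:symm-space-sampler}: for AI we have $G=\mbu(d)$, $\sigma(U)=U^*$, so a Haar-random element $V\sim\mu_{\rm AI}$ may be written $V=\sigma(g^{-1})g=\overline{g^{-1}}\,g=(g^\dagger)^*\,g = (g^\mst)^{-1}g$ for $g\sim\mu_{\mbu(d)}$; since $g$ is unitary, $(g^\mst)^{-1}=\overline{g}$, so $V=\overline{g}\,g = g^*g$. Hence $VAV^\dagger = g^*gAg^\dagger g^\mst$, and
\begin{equation}
\expect_{V\sim\mu_{\rm AI}}VAV^\dagger = \expect_{g\sim\mu_{\mbu(d)}} g^* \big(gAg^\dagger\big) g^\mst.
\end{equation}
The key observation is that this is a second-order object in the matrix elements of $g$ (two factors of $g$ and two of $\overline{g}$), so it falls under ordinary Weingarten calculus on $\mbu(d)$. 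First I would write everything in index form: $\big(\expect VAV^\dagger\big)_{ij} = \sum \expect[\overline{g_{ki}}\,g_{kl}\,A_{lm}\,\overline{g_{nm}}\,g_{nj}]$, so the relevant Weingarten average is $\expect[g_{kl}g_{nj}\overline{g_{ki}}\,\overline{g_{nm}}]$, a degree-$(2,2)$ monomial.

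Next I would apply the standard formula $\expect[g_{a_1b_1}g_{a_2b_2}\overline{g_{a_1'b_1'}}\,\overline{g_{a_2'b_2'}}] = \sum_{\sigma,\pi\in S_2}\delta_{a_1 a'_{\sigma(1)}}\delta_{a_2 a'_{\sigma(2)}}\delta_{b_1 b'_{\pi(1)}}\delta_{b_2 b'_{\pi(2)}}\,\mathrm{Wg}(\pi\sigma^{-1},d)$, with the second-order Weingarten values $\mathrm{Wg}(e,d)=\tfrac{1}{d^2-1}$ and $\mathrm{Wg}((12),d)=\tfrac{-1}{d(d^2-1)}$. Carrying out the four Kronecker-delta contractions against the index pattern above yields terms proportional to $\tr[A]\,\delta_{ij}$ and to $A^\mst_{ij}$ (the transpose appearing because the "row" indices of the two $\overline{g}$'s get cross-linked with the "row" indices of the two $g$'s in a way that swaps the roles of $l,m$). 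Collecting coefficients, the $e$ and $(12)$ contributions combine so that the $d^2-1$ denominators cancel against numerator factors of $d\mp 1$, leaving exactly $\tfrac{\tr[A]\id + A^\mst}{d+1}$. As a sanity check one can verify the trace: $\tr$ of the right side is $\tfrac{d\tr[A]+\tr[A]}{d+1}=\tr[A]$, consistent with $V$ unitary; and one can check $A=\id$ gives $\tfrac{d\id+\id}{d+1}=\id$.

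Alternatively — and this is the route I would actually present, since it avoids bookkeeping — one can argue representation-theoretically. The map $A\mapsto \expect_{V\sim\mu_{\rm AI}}VAV^\dagger$ is linear on $\endh$ and, by left-$K$-invariance of $\mu_{\rm AI}$ with $K=\mbo(d)$ (the footnote after Eq.~\eqref{eq:symm-space-sampler}), it is $\mbo(d)$-equivariant: it commutes with $A\mapsto OAO^\dagger = OAO^\mst$ for all $O\in\mbo(d)$. Under this $\mbo(d)$-action, $\endh \cong \mbc^d\otimes\overline{\mbc^d}\cong\mbc^d\otimes\mbc^d$ (since the standard orthogonal representation is self-dual) decomposes as symmetric traceless $\oplus$ trace $\oplus$ antisymmetric, three inequivalent irreducibles, so by Schur's lemma the averaging map acts as a scalar $c_S$ on the symmetric traceless part, $c_0$ on $\id$, and $c_A$ on the antisymmetric part. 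One then fixes the three scalars by evaluating on convenient test matrices — e.g. $A=\id$ forces $c_0=1$; $A=\ket{1}\bra{1}$ (or a rank-one projector) and taking a single matrix element pins down $c_S$ and the relation between $c_S$ and $c_0$; and comparing the coefficient of $A^\mst$ versus $A$ shows $c_S=c_A$ (both equal $\tfrac{1}{d+1}$), which is consistent with the answer being a clean combination of $\id$, $\tr$, and the transpose $*$ (the transpose being precisely the operator that is $+1$ on the symmetric part and $-1$ on the antisymmetric part, so $\tfrac{\tr[A]\id}{d+1}$ handles the trace direction and $\tfrac{A^\mst}{d+1}$ handles both the symmetric-traceless and antisymmetric pieces with the same eigenvalue). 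The main obstacle is simply not mismanaging the transpose/conjugate conventions — whether $V=g^*g$ or $g g^\mst$, and hence whether $A^\mst$ or $\bar A$ appears — so I would pin this down once at the start via the explicit rewriting $V=\sigma(g^{-1})g = \overline{g^{-1}}g=\overline{g}\,g$ (using $g^{-1}=g^\dagger$) and then double-check against the $A=\id$ and trace sanity checks above.
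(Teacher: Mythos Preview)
Your overall strategy --- rewrite $V$ via the sampling recipe and reduce to second-order Weingarten on $\mbu(d)$ --- is exactly what the paper does. But you fall into precisely the transpose/conjugate trap you warned yourself about. The step $(g^\dagger)^* = (g^\mst)^{-1}$ is false: for any matrix $(g^\dagger)^* = g^\mst$, whereas for unitary $g$ one has $(g^\mst)^{-1} = \overline g$. So the correct identification is
\[
V \;=\; \sigma(g^{-1})\,g \;=\; \overline{g^\dagger}\,g \;=\; g^\mst g,
\]
not $V=\overline g\,g$. (Indeed $g^\mst g$ is a symmetric unitary, as an element of AI must be, while $\overline g\,g$ generically is not.) This error propagates: your index expression $\sum \overline{g_{ki}}\,g_{kl}\,A_{lm}\,\overline{g_{nm}}\,g_{nj}$ actually equals $(g^\dagger g A g^\dagger g)_{ij}=A_{ij}$, which is clearly wrong. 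With the correct $V=g^\mst g$ one gets
\[
(VAV^\dagger)_{ij}=\sum_{klmn} g_{ki}\,g_{kl}\,A_{lm}\,\overline{g_{nm}}\,\overline{g_{nj}},
\]
and the Weingarten contractions then produce $(\tr[A]\,\delta_{ij}+A_{ji})/(d+1)$ as claimed.

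Your Schur-lemma alternative is a legitimate and somewhat cleaner route than the paper's explicit tensor-network/Weingarten computation, but the exposition is also muddled: you assert $c_S=c_A$, yet the target map sends antisymmetric $A$ (for which $A^\mst=-A$, $\tr A=0$) to $-A/(d+1)$, so in fact $c_A=-1/(d+1)=-c_S$. Once the three scalars $c_0=1$, $c_S=1/(d+1)$, $c_A=-1/(d+1)$ are correctly identified, they do repackage as $(\tr[A]\id+A^\mst)/(d+1)$; but you should fix at least one of them by an honest computation (e.g.\ a single second-moment Haar integral), since $\mbo(d)$-equivariance alone does not determine the constants.
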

\begin{proof}
We use that $V\sim {\rm AI}$ is identically distributed to the product $\sg_{\rm AI}(U)\ad U=(U^*)\ad U =U^\mst U$, where $U\sim \mbu$, which yields (for a review of the graphical notation see e.g. Ref.~\cite{mele2023introduction})
    \begin{align}
        \expect_{V\sim {\rm AI}}VAV^\dagger &=  \expect_{U\sim \mbu}\left[U^\mst U A (U^\mst U)\ad\right] \\
        &=\expect_{U\sim \mbu} \img[1.8cm]{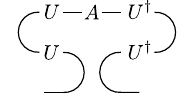}\label{eq:b5}\\
         &=  d^2\expect_{U\sim \mbu}\tr_{2,3}\left[  (\id\ot U\ts) (\ketbra{\Om}\ot A) (\id\ot (U^\dagger)\ts)(\id\ot \ketbra{\Om})\right]\\
        &=  \frac{\tr[A]\id + A^\mst}{d+1}
    \end{align}
    where $\ket{\Om}$ is a Bell state between two copies of $\mch$. Obtaining the final line is a simple exercise in the usual Weingarten calculus on the unitary group.
\end{proof}

\noindent
Similarly, we have
\begin{restatable}{lem}{aii1}\label{lem:aii1}
For any $A\in\endh$,
\begin{equation}
    \expect_{V\sim \mu_{\rm AII}}VAV^\dagger =  \frac{\tr[A]\id + J A^\mst J}{d-1}
\end{equation}
\end{restatable}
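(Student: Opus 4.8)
The plan is to mirror the proof of Lemma~\ref{lem:ai1}, using the type~AII sampler in place of the type~AI one. By Eq.~\eqref{eq:symm-space-sampler} with $\sigma=\Ad_J\circ *$, a Haar-random $V\sim\mu_{\rm AII}$ is identically distributed to $V=JU^\mst J^\mst U$ with $U\sim\mu_\mbu$; throughout, $d$ is even and $J$ is the canonical symplectic form on $\mch$, so that $J^\mst=-J$, $\ J^\mst J=JJ^\mst=\id$, $\ J^2=-\id$, and $\tr[J^2]=-d$. Since $V^\dagger=U^\dagger J U^* J^\mst$, this gives
\begin{equation}
    \expect_{V\sim\mu_{\rm AII}}VAV^\dagger=\expect_{U\sim\mu_\mbu}\big[\,JU^\mst J^\mst\,UAU^\dagger\,JU^*J^\mst\,\big],
\end{equation}
which is a fixed (linear in $A$) contraction of the $J$'s, of $A$, and of a degree-$(2,2)$ monomial in the entries of $U$ and $\overline{U}$.

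From here I would either (i) vectorise as in Eq.~\eqref{eq:b5}, trading $U^\mst,U^*$ for ordinary copies of $U,U^\dagger$ via two Bell states $\ket\Om$ and absorbing the $J$'s and $A$ into a fixed operator $X\in\mathrm{End}(\mch\ts)$, reducing to $\expect_{U\sim\mbu}(\id\ot U\ts)\,X\,(\id\ot(U^\dagger)\ts)$ up to partial traces; or (ii) simply expand the displayed integrand in matrix indices. Either route lands on the standard second-order Weingarten integral over $\mbu$, which projects the relevant tensor factor onto $\spn\{\id,\mathrm{SWAP}\}$ with the usual coefficients $\mathrm{Wg}(\id,d)=\tfrac{1}{d^2-1}$ and $\mathrm{Wg}((12),d)=\tfrac{-1}{d(d^2-1)}$. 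Contracting the four resulting terms (equivalently, the output $\id$ and $\mathrm{SWAP}$) against the fixed $J$-tensor, and repeatedly using $J^\mst J=\id$ and $\tr[J^2]=-d$, one finds that the $\id$-output contributions collapse to $\tr[A]\,\id$ and the $\mathrm{SWAP}$-output contributions to $JA^\mst J$, giving
\begin{equation}
    \expect_{V\sim\mu_{\rm AII}}VAV^\dagger = d\,\big(\mathrm{Wg}(\id,d)-\mathrm{Wg}((12),d)\big)\,\big(\tr[A]\,\id+JA^\mst J\big) = \frac{\tr[A]\,\id+JA^\mst J}{d-1},
\end{equation}
using $d\big(\mathrm{Wg}(\id,d)-\mathrm{Wg}((12),d)\big)=d\cdot\tfrac{d+1}{d(d^2-1)}=\tfrac{1}{d-1}$.

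The work is bookkeeping rather than conceptual: one must track the placement of $J$ versus $J^\mst$ (hence the signs, via $J^\mst=-J$) carefully through the vectorisation, and in particular verify that the $\mathrm{SWAP}$ contributions contract only to $\tr[A]\,\id$ and $JA^\mst J$ and produce no spurious term proportional to $A$ — which they must not, since (exactly as in the AI case of Lemma~\ref{lem:ai1}) the answer contains only a trace term and a ($J$-conjugated) transpose term. Natural consistency checks are $A=\id$, where the right-hand side equals $(d\,\id+JJ)/(d-1)=\id$ because $J^2=-\id$, and taking traces, where $\tr[JA^\mst J]=\tr[A^\mst J^2]=-\tr[A]$ yields $(d\,\tr[A]-\tr[A])/(d-1)=\tr[A]=\tr[\expect_V VAV^\dagger]$. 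As a structural alternative I would note that $A\mapsto\expect_V VAV^\dagger$ is $K=\mbsp(d/2)$-equivariant — because $V$ and $k^{-1}Vk$ are equidistributed for $k\in K$, the lift $V=\sigma(g^{-1})g$ transforming as $V\mapsto k^{-1}Vk$ under the Haar-preserving $g\mapsto gk$ — so the output is forced into the span of the symplectic-group commutant data $\{A,\tr[A]\,\id,JA^\mst J\}$, and the coefficients are then pinned down by evaluating at $A=\id$ and one further test operator, together with the vanishing of the $A$-coefficient inherited from the parent $\mbu(d)$ second-order twirl.
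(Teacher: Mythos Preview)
Your proposal is correct and follows essentially the same route as the paper: lift to the parent group via the AII sampler $V=JU^\mst J^\mst U$ and reduce to a second-order unitary Weingarten integral, exactly as in Lemma~\ref{lem:ai1}. The paper's proof is terse (it just cites ``a graphically-manifest manipulation similar to that of Eq.~\eqref{eq:b5}''); your version supplies the explicit Weingarten coefficients, the correct overall prefactor $d(\mathrm{Wg}(\id,d)-\mathrm{Wg}((12),d))=1/(d-1)$, and valid consistency checks, and your $K$-equivariance remark is a nice structural addendum not present in the paper.
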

\begin{proof}
In a similar fashion to the previous lemma, we can directly calculate:
    \begin{align}
        \expect_{V\sim \mu_{\rm AII}}VAV^\dagger &=  \expect_{U\sim \mbu}\left[JU^\mst J^\mst U A (JU^\mst J^\mst U)\ad\right] \\
         &=  d^2\expect_{U\sim \mbu}\tr_{1,2}\left[ (J\ot\id\ot J) (U\ts\ot\id) (A\ot \ketbra{\Phi}) ((U^\dagger)\ts\ot\id)(J\ot\id\ot J)(\ketbra{\Phi}\ot\id)\right]\label{eq:b10}\\
        &=  \frac{\tr[A]\id +J A^\mst J}{d-1}
    \end{align}
    where Eq.~\eqref{eq:b10} follows from  a graphically-manifest manipulation similar to that of Eq.~\eqref{eq:b5}. 
\end{proof}

\noindent
More of the same:

\begin{restatable}{lem}{diii}\label{lem:diii1}
For any $A\in\eh$,
\begin{equation}
    \expect_{S\sim \mu_{\rm DIII}}SAS^\dagger =  \frac{\tr[A]\id + J A^\mst J}{d-1}
\end{equation}
\end{restatable}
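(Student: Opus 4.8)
The plan is to mirror the proofs of Lemmas~\ref{lem:ai1} and~\ref{lem:aii1}, exploiting the sampling recipe~\eqref{eq:symm-space-sampler} for the symmetric space DIII. Concretely, I would use that $S\sim\mu_{\rm DIII}$ is identically distributed to $\sigma_{\rm DIII}(O^{-1})\,O = \Ad_J(O^*)\,O = J O^\mst J^\mst O$, where now $O\sim\mu_\mbo$ is Haar random on the \emph{orthogonal} group (this is the only structural difference from the AII computation, where the corresponding unitary $U^\mst U$ is replaced by $J U^\mst J^\mst U$ with $U$ unitary). Since $O$ is real orthogonal, $O^*=O$ and $O^\mst=O^{-1}$, so in fact $S$ has the simple form $J O^{-1} J^\mst O = J O^\mst J^\mst O$; the point is that the twirl $\expect_{S\sim\mu_{\rm DIII}} S A S^\dagger$ becomes a $4$\textsuperscript{th}-order expression in the matrix elements of a Haar \emph{orthogonal} matrix.

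The key steps, in order, are: (i) substitute $S = J O^\mst J^\mst O$ and write out $\expect_{O\sim\mbo}\big[J O^\mst J^\mst O\, A\, O^\dagger J\, O^* J^\mst\big]$; (ii) recast this as a partial trace of $O^{\otimes 2}\otimes(O^\dagger)^{\otimes 2}$ (or rather $O^{\otimes 2}$ with its conjugate) against a fixed operator built from $A$, the $J$'s, and a maximally entangled vector $\ket{\Phi}$ — exactly as in Eq.~\eqref{eq:b10}, with the diagram being "graphically manifest"; (iii) apply the second-order Weingarten calculus for the orthogonal group, for which $\expect_{O\sim\mbo} O^{\otimes 2}\,X\,(O^\mst)^{\otimes 2}$ is a known combination of the three pairings (identity, swap, and the two "cup-cap" contractions coming from $OO^\mst=\id$); (iv) contract the resulting pairings with the fixed tensor and simplify, using $J^\mst J = \pm\id$ and $\tr[\ketbra{\Phi}]$-type identities, to land on $\big(\tr[A]\id + J A^\mst J\big)/(d-1)$.

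I would expect the bulk of the work to be bookkeeping in step (iii)–(iv): the orthogonal Weingarten function has more terms than the unitary one because of the extra contractions available, and one must check that the contractions involving the $J$'s collapse correctly (in particular that the term which would naively give $A$ or $JAJ^\mst$-type contributions instead produces $J A^\mst J$, consistent with the antisymmetry built into the DIII space, and that the normalisation denominator works out to $d-1$ rather than $d+1$ or $2d\pm 1$). A useful sanity check, which I would record, is that on $A=\id$ both sides give $\id$, and on a traceless $A$ the formula predicts $\expect S A S^\dagger = J A^\mst J/(d-1)$, matching the AII answer of Lemma~\ref{lem:aii1} — this coincidence of first moments between AII and DIII is exactly what one expects from the fact that both have the same "symplectic-type" commutant structure at this order, and it is why $\xi_0$ takes a different value for DIII only through higher moments. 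The main obstacle, then, is not conceptual but making sure the graphical/Weingarten manipulation for the orthogonal group is executed without sign or factor errors; since the statement is identical to Lemma~\ref{lem:aii1}, the cleanest exposition may simply be to note that replacing $U\sim\mbu$ by $O\sim\mbo$ and using $O^*=O$ reduces the DIII twirl to precisely the computation already performed there, so that "the same graphically-manifest manipulation" applies verbatim.
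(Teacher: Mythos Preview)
Your proposal is correct and follows essentially the same route as the paper: substitute $S=JO^\mst J^\mst O$ with $O\sim\mu_\mbo$, unwind into a second-order orthogonal twirl via a graphical/partial-trace rewriting, and evaluate with the orthogonal Weingarten calculus before contracting against the $J$'s. One caution: your closing suggestion that the computation reduces ``verbatim'' to the AII calculation of Lemma~\ref{lem:aii1} is not quite right --- the relevant twirl here is over $\mbo(d)$, not $\mbu(d)$, so the Weingarten function and the set of admissible pairings are genuinely different (as you yourself note in step~(iii)); the paper indeed carries out the orthogonal Weingarten contraction explicitly rather than invoking the AII result.
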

\begin{proof}
We have another direct calculation:
\begin{align}
    \expect_{S\sim \mu_{\rm DIII}}SAS^\dagger &= \expect_{V\sim \mu_{\mbo}}JV^\mst J VAV^\mst J VJ\\
    &=\expect_{V\sim \mu_{\mbo}} \img[1.8cm]{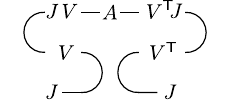}\\
    &=\sum_{abfg}J\ketbra{a}{b}J \bra{ff}(J\ot\id) \mct^{(2)}_{\mbo(d)}\left(A\ot\ketbra{a}{b}\right) (J\ot\id) \ket{gg}\\
    &=\sum_{abcefg}A_{ce}J\ketbra{a}{b} J\bra{ff}(J\ot\id) \mct^{(2)}_{\mbo(d)}\left(\ketbra{ca}{eb}\right) (J\ot\id) \ket{gg}\\
    &=\frac{1}{(d-1)(d+2)}\sum_{abce}A_{ce}J\ketbra{a}{b}J((-2-d)\delta_{ce}\delta_{ab}+(2+d)\delta_{ae}\delta_{bc})\label{eq:b17}\\
    &=\frac{1}{(d-1)}\sum_{abce}A_{ce}J\ketbra{a}{b}J(-\delta_{ce}\delta_{ab}+\delta_{ae}\delta_{bc})\\
    &=\frac{\tr[A]\id+JA^\mst J}{(d-1)}
\end{align}
Where to obtain Eq.~\eqref{eq:b17} we have employed the usual Weingarten calculus over the orthogonal group~\cite{collins2006integration,hashagen2018real,west2025real}.
\end{proof}

As it turns out, it is also possible to integrate over the symmetric spaces ``directly'', without explicitly resorting to Weingarten calculus on the parent group~\cite{matsumoto2013weingarten}. Due to our need to integrate non-polynomial functions of the matrix elements, however, we will here eschew that approach, instead mapping to a doubled-order integral on the parent group, and then using the methods of Section~\ref{sec:samp}.

\subsection{Concentration of measure}\label{sec:com}

\noindent
An important property of sampling from the classical groups is their \textit{concentration of measure}; informally, if $f:G\to\mbr$ is ``well-behaved'' then with high probability evaluating $f$ on a random $U\sim\mu_G$ will produce a value extremely close to the average value $\expval{f}:=\expect_{U\sim\mu_G}f(U)$ of $f$. More formally, say that $f$ is $L$-Lipschitz continuous on $G$ if, for all $U,V\in G$, $\abs{f(U)-f(V)}\le L\|U-V\|_2$. 
One then has~\cite{meckes2019random}: 
\begin{restatable}{lem}{lemcmg}\label{lem:cmg}
    Let $X\in\{\mbu(d),\mbo(d),\mbsp(d)\}$, and $f:X\to\mbr$ be $L$-Lipschitz continuous, with $\expect _{U\sim\mu_X} f(U) = \expval{f} $. Then
    \begin{equation}
        \Pr_{U\sim\mu_X} [\lvert f(U)  - \expval{f}\rvert\ge t] \le 2e^{-(d-2)t^2/(24L^2)} 
    \end{equation}
\end{restatable}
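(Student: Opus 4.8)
\noindent\emph{Proof strategy.} This is the standard sub-Gaussian concentration estimate for the classical compact groups, and the plan is to follow the geometric route of~\cite{meckes2019random}. First I would equip $X$ with the bi-invariant Riemannian metric coming from the (real part of the) Hilbert--Schmidt inner product on matrices; with this normalisation $\mu_X$ is the normalised Riemannian volume, and since chords are shorter than geodesics, any $f$ that is $L$-Lipschitz for $\|\cdot\|_2$ is also $L$-Lipschitz for the geodesic distance. It then suffices to prove concentration for geodesically Lipschitz functions, and for this I would invoke a lower bound on the Ricci curvature of $X$ that grows linearly in $d$, combined with the Bakry--\'Emery criterion (equivalently, the Gromov--L\'evy isoperimetric inequality), which upgrades such a curvature bound to a logarithmic Sobolev inequality and hence, via Herbst's argument, to exactly the claimed Gaussian tail.

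The steps would be: \emph{(i)} compute the Ricci lower bound. For a compact Lie group with a bi-invariant metric the Ricci form is $-\tfrac14$ times the Killing form; using $B_{\mathfrak{so}(d)}(X,Y)=(d-2)\tr(XY)$ and $\tr(XY)=-\langle X,Y\rangle_2$ on skew-symmetric matrices gives $\mathrm{Ric}=\tfrac{d-2}{4}\,g$ on $\mbso(d)$, and the analogous computations for $\mbsu(d)$ and $\mbsp(d)$ give strictly larger Ricci lower bounds, so $\mathrm{Ric}\ge\tfrac{d-2}{4}g$ holds uniformly on the three connected semisimple groups $\mbso(d),\mbsu(d),\mbsp(d)$. \emph{(ii)} Feed this through Bakry--\'Emery: $\mathrm{Ric}\ge Kg$ with $K>0$ yields a log-Sobolev inequality with constant $O(1/K)$, and Herbst's argument turns this into $\Pr[\,\lvert f-\expval{f}\rvert\ge t\,]\le 2\exp(-cKt^2/L^2)$; with $K=(d-2)/4$ and the (non-optimal but uniform) constants one is led to at each stage, the exponent becomes $(d-2)t^2/(24L^2)$. \emph{(iii)} Handle $\mbu(d)$ and $\mbo(d)$, which are not (respectively) semisimple and connected: for $\mbu(d)$, write a Haar unitary as $e^{i\phi/d}V$ with $\phi$ uniform and $V$ Haar on $\mbsu(d)$ independent, and note $\|e^{i\phi/d}V-V\|_2\le 2\pi/\sqrt d$, so $f$ differs uniformly by $O(L/\sqrt d)$ from a Lipschitz function of $V$ alone; for $\mbo(d)=\mbso(d)\sqcup\mbso(d)R$, average the two component-wise estimates, using that $g\mapsto gR$ is a measure-preserving bijection onto the other component with $\|g-gR\|_2=\|\id-R\|_2$ uniformly bounded, so the two component means are within $O(L)$ of each other. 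Both reductions affect only lower-order terms, which the generous constant $24$ absorbs.

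I expect the only real difficulty to be bookkeeping rather than ideas: one has to fix the metric normalisation, the precise form of the Bakry--\'Emery log-Sobolev constant, and the Herbst exponent, and then check that the $\mbu(d)/\mbo(d)$ reductions do not degrade the rate below $(d-2)/24$ --- the value $24$ is a robust, uniform choice, not an essential constant. If one wished to avoid Riemannian geometry altogether, the substantive input (a linear-in-$d$ gradient/curvature bound) could instead be extracted from the gradient estimate for Brownian motion on $X$, or from a Gram--Schmidt martingale together with Azuma's inequality; either way the constants are at least as awkward, and the heavy machinery (the $\Gamma_2$-calculus behind Bakry--\'Emery, or a sharp log-Sobolev inequality) is what one would really be leaning on.
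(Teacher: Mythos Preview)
Your strategy---Ricci lower bounds on the connected semisimple groups, then Bakry--\'Emery to obtain a log-Sobolev inequality, then Herbst to convert to a Gaussian tail---is exactly the route taken in Meckes' book, which is all the paper's own proof does (it simply cites that theorem); in that sense you are doing what the paper does, only with the details filled in. The $\mbu(d)$ reduction to $\mbsu(d)$ via the centre is also fine: the $O(L/\sqrt d)$ shift really is lower order and can be absorbed into the constant.

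The $\mbo(d)$ step, however, does not work as written. You argue that the means of $f$ over the two components differ by at most $O(L)$ (since $\|g-gR\|_2=2$ for a single reflection $R$) and then assert this is a ``lower-order term'' absorbed by the constant $24$. But an $O(L)$ shift in the mean is \emph{not} lower order: it is of the same size as the deviations $t$ for which the bound is meaningful, and no enlargement of the denominator fixes this. Concretely, take $f=\det$, which is $1$-Lipschitz on $\mbo(d)$: within each component $\det$ is constant, while between components $|\det U-\det V|=2$ and $\|U-V\|_2\ge 2$ always (since $U^\mst V$ is orthogonal with determinant $-1$, hence has trace at most $d-2$). Then $\expect_{\mbo(d)}[\det]=0$ yet $\Pr(|\det|\ge 1)=1$, whereas the claimed bound gives $2e^{-(d-2)/24}<1$ once $d\gtrsim 19$. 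So the inequality as stated is actually false for the disconnected group $\mbo(d)$; Meckes' Theorem~5.17 lists $\mbso(d)$ and its other coset separately, not $\mbo(d)$, for precisely this reason. In the paper's downstream applications the parent groups that actually appear are connected, so nothing there is affected, but your proposed reduction cannot rescue the general $\mbo(d)$ case.
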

\begin{proof}
    This is a (slight weakening of) Theorem 5.17 of Ref.~\cite{meckes2019random}.
\end{proof}
\noindent
Our primary purpose in this subsection is to adapt this result to apply to averaging over compact symmetric spaces. We find:

\begin{restatable}{lem}{lemcms}\label{lem:cms}
    Let $G/K$ be a compact symmetric space with involution $\sg$, and $f:G\to\mbr$ be $L$-Lipschitz continuous, with $\expect_{U\sim\mu_{G/K}} f(U) = \expval{f} $. Then
    \begin{equation}
        \Pr_{U\sim\mu_{G/K}} [\lvert f(U)  - \expval{f}\rvert\ge t] \le 2e^{-(d-2)t^2/(96L^2)} 
    \end{equation}
\end{restatable}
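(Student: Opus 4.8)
The plan is to reduce the concentration statement over the symmetric space $G/K$ to the concentration statement over the parent group $G$ (Lemma~\ref{lem:cmg}), using the sampling recipe $V = \sigma(g^{-1})g$ with $g\sim\mu_G$. The key observation is that the map $\Phi:G\to G/K$ defined by $\Phi(g)=\sigma(g^{-1})g$ pushes the Haar measure $\mu_G$ forward to $\mu_{G/K}$, so that $\expect_{V\sim\mu_{G/K}}f(V) = \expect_{g\sim\mu_G}f(\Phi(g))$ and likewise $\Pr_{V\sim\mu_{G/K}}[\lvert f(V)-\expval{f}\rvert\ge t] = \Pr_{g\sim\mu_G}[\lvert (f\circ\Phi)(g)-\expval{f}\rvert\ge t]$. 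Thus it suffices to control the Lipschitz constant of the composite function $f\circ\Phi:G\to\mbr$ and then invoke Lemma~\ref{lem:cmg} for $f\circ\Phi$.

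First I would bound the Lipschitz constant of $\Phi$ itself in the operator ($\|\cdot\|_2$, i.e.\ Hilbert–Schmidt) norm. Writing $\Phi(g)=\sigma(g)^{-1}g$ (using that $\sigma$ is a group homomorphism so $\sigma(g^{-1})=\sigma(g)^{-1}$), for $g,h\in G$ we have
\begin{equation}
\Phi(g)-\Phi(h) = \sigma(g)^{-1}g - \sigma(h)^{-1}h = \sigma(g)^{-1}(g-h) + \bigl(\sigma(g)^{-1}-\sigma(h)^{-1}\bigr)h.
\end{equation}
Since $\sigma(g)^{-1}$ and $h$ are unitary, the first term has norm $\|g-h\|_2$, and the second has norm $\|\sigma(g)^{-1}-\sigma(h)^{-1}\|_2 = \|\sigma(g)-\sigma(h)\|_2$ (multiplying by the unitary $\sigma(g)^{-1}$ on the left and $\sigma(h)^{-1}$ on the right, or just using unitary invariance of the norm together with $\|X^{-1}-Y^{-1}\| = \|Y(X^{-1}-Y^{-1})X\|\cdot\ldots$ — in any case for unitaries $\|X^{-1}-Y^{-1}\|_2=\|X-Y\|_2$). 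Now each of the three involutions relevant to AI, AII, DIII is an isometry for $\|\cdot\|_2$: complex conjugation $*$ preserves every entry's modulus, and conjugation $\Ad_J$ by the fixed unitary $J$ preserves $\|\cdot\|_2$ by unitary invariance; hence $\|\sigma(g)-\sigma(h)\|_2 = \|g-h\|_2$ in all cases. Therefore $\|\Phi(g)-\Phi(h)\|_2 \le 2\|g-h\|_2$, so $\Phi$ is $2$-Lipschitz, and consequently $f\circ\Phi$ is $2L$-Lipschitz on $G$.

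Applying Lemma~\ref{lem:cmg} to $f\circ\Phi$ with Lipschitz constant $2L$ gives
\begin{equation}
\Pr_{g\sim\mu_G}\bigl[\lvert (f\circ\Phi)(g) - \expval{f}\rvert \ge t\bigr] \le 2e^{-(d-2)t^2/(24(2L)^2)} = 2e^{-(d-2)t^2/(96L^2)},
\end{equation}
which is exactly the claimed bound after transporting back along the pushforward identity. The main obstacle — or rather the point that needs the most care — is verifying that the sampling map $\Phi$ really does push $\mu_G$ forward to $\mu_{G/K}$ (this is the content of Eq.~\eqref{eq:symm-space-sampler} and the left-$K$-invariance footnote, which I would cite) and that each of the three specific involutions is a Hilbert–Schmidt isometry so that the clean factor of $2$ in the Lipschitz bound holds uniformly; everything else is a routine triangle-inequality estimate and a direct substitution into Lemma~\ref{lem:cmg}. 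One should also note that $d$ here is the dimension of the parent group's defining representation, matching the convention in Lemma~\ref{lem:cmg}.
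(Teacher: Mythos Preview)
Your proposal is correct and follows essentially the same route as the paper: both reduce to the parent group via the sampling map $g\mapsto\sigma(g)^{-1}g$, show the composite is $2L$-Lipschitz by the same add--and--subtract triangle-inequality decomposition together with the observation that the relevant involutions are $\|\cdot\|_2$-isometries, and then invoke Lemma~\ref{lem:cmg} with the doubled Lipschitz constant. The only cosmetic difference is that the paper writes $\sigma(U)^\dagger$ where you write $\sigma(g)^{-1}$, which coincide on unitaries.
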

\begin{proof}
Note that by the discussion of Section~\ref{sec:css},
\begin{equation}
        \Pr_{U\sim\mu_{G/K}} [\lvert f(U)  - \expval{f}\rvert\ge t] 
        = \Pr_{U\sim\mu_{G}} [\lvert f(\sg(U)\ad U)  - \expval{f}\rvert\ge t]; 
\end{equation}
it then follows from Lemma~\ref{lem:cmg} that we will be done if we can show that $g:U\mapsto f(\sg(U)\ad U)$ is $2L$-Lipschitz continuous. Indeed, note that by the assumed continuity of $f$,
\begin{align}
    \lvert g(U)-g(V)\rvert &=\lvert f(\sg(U)\ad U)-f(\sg(V)\ad V)\rvert\\
    &\leq L \|\sg(U)\ad U-\sg(V)\ad V\|_2\\
    &= L \|\sg(U)\ad U-\sg(U)\ad V+\sg(U)\ad V-\sg(V)\ad V\|_2\\
    &\leq L \|\sg(U)\ad U-\sg(U)\ad V\|_2 + L\| \sg(U)\ad V-\sg(V)\ad V\|_2\\
    &\leq L \| U- V\|_2 + L\| \sg(U)\ad -\sg(V)\ad \|_2\\
    &= L \| U- V\|_2 + L\| \sg(U-V)\ad \|_2\\
    &= L \| U- V\|_2 + L\| \sg(U-V) \|_2\\
    &= 2L \| U- V\|_2
\end{align}
where we have used that $\|AB\|_2\le \|A\|_2\|B\|_\infty$, and that the spectral norm of a unitary is less than or equal to one. In the final line we have used that an exhaustive examination of the cases in Table~\ref{tab:1} reveals that all of the involutions we consider are 2-norm-preserving.  
\end{proof}
One of our primary applications of Lemma~\ref{lem:cms} will be to $f:U\in G\mapsto \sbraket{0|U\ad \Phi U}{0}$, which one observes is $2\|\Phi\|_\infty$-Lipschitz continuous:
\begin{restatable}{lem}{lemexp}\label{lem:cmexp}
    Let $f:U\in G\mapsto \sbraket{0|U\ad \Phi U}{0}$, with $\Phi\in\eh$. Then $f$ is $2\|\Phi\|_\infty$-Lipschitz continuous.
\end{restatable}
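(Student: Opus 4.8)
The plan is to bound $|f(U) - f(V)|$ directly by inserting and subtracting a cross term, exactly as in the proof of Lemma~\ref{lem:cms}. Writing $f(U) = \sbraket{0|U\ad \Phi U}{0}$, I would first note that $f(U) - f(V) = \sbraket{0|U\ad \Phi U - V\ad \Phi V}{0}$, and then split $U\ad \Phi U - V\ad \Phi V = U\ad\Phi U - U\ad \Phi V + U\ad\Phi V - V\ad\Phi V = U\ad\Phi(U-V) + (U-V)\ad\Phi V$. Applying the Cauchy--Schwarz inequality (or just $|\sbraket{0|M}{0}| \le \|M\|_\infty \le \|M\|_2$) and the triangle inequality gives $|f(U)-f(V)| \le \|U\ad\Phi(U-V)\|_\infty + \|(U-V)\ad\Phi V\|_\infty$.

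Next I would control each term using submultiplicativity of the operator norms. For the first term, $\|U\ad\Phi(U-V)\|_\infty \le \|U\ad\|_\infty \|\Phi\|_\infty \|U-V\|_\infty \le \|\Phi\|_\infty\|U-V\|_2$, using that $U$ is unitary so $\|U\ad\|_\infty = 1$ and that $\|\cdot\|_\infty \le \|\cdot\|_2$. The second term is handled identically, yielding another factor of $\|\Phi\|_\infty\|U-V\|_2$. Summing gives $|f(U)-f(V)| \le 2\|\Phi\|_\infty\|U-V\|_2$, which is precisely the claimed $2\|\Phi\|_\infty$-Lipschitz bound. (If one prefers to stay with Hilbert--Schmidt norms throughout, one can instead use $\|AB\|_2 \le \|A\|_\infty\|B\|_2$, as already invoked in the proof of Lemma~\ref{lem:cms}, applied to $\Phi(U-V)$ and to $(U-V)\ad\Phi$.)

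There is essentially no obstacle here: the statement is a routine norm estimate and the only mild subtlety is being consistent about which norm ($\|\cdot\|_2$ versus $\|\cdot\|_\infty$) one bounds $|\sbraket{0|M}{0}|$ by, so that the final inequality is phrased in terms of $\|U-V\|_2$ as required by Lemma~\ref{lem:cmg} and Lemma~\ref{lem:cms}. A one- or two-line display carrying out the chain of inequalities above suffices.
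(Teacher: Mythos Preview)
Your proposal is correct and follows essentially the same approach as the paper: insert a cross term to split $U^\dagger\Phi U - V^\dagger\Phi V$ into two pieces each linear in $U-V$, bound $|\langle 0|M|0\rangle|\le \|M\|_\infty$, apply submultiplicativity of the spectral norm together with $\|U\|_\infty=\|V\|_\infty=1$, and finish with $\|\cdot\|_\infty\le\|\cdot\|_2$. The only cosmetic difference is that the paper uses $V^\dagger\Phi U$ as the intermediate term whereas you use $U^\dagger\Phi V$; the resulting chain of inequalities is the same.
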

\begin{proof}
We directly calculate:
\begin{align}
    \abs{f(U)-f(V)}&=\abs{\sbraket{0|U\ad \Phi U}{0}-\sbraket{0|V\ad \Phi V}{0}}\\
    &=\abs{\sbraket{0|U\ad \Phi U}{0}-\sbraket{0|V\ad \Phi U}{0}+\sbraket{0|V\ad \Phi U}{0}-\sbraket{0|V\ad \Phi V}{0}}\\
    &\le \abs{\sbraket{0|U\ad \Phi U}{0}-\sbraket{0|V\ad \Phi U}{0}}+\abs{\sbraket{0|V\ad \Phi U}{0}-\sbraket{0|V\ad \Phi V}{0}}\\
    &= \abs{\tr[\ketbra{0} (U\ad \Phi U-V\ad \Phi U)]}+\abs{\tr[\ketbra{0} ( V\ad \Phi U-V\ad \Phi V)]}\\
    &\le \|U\ad \Phi U-V\ad \Phi U\|_\infty + \|V\ad \Phi U-V\ad \Phi V\|_\infty\\
    &\le \|U\ad -V\ad \|_\infty\| \Phi U\|_\infty + \|  U-  V\|_\infty\ \|V\ad \Phi \|_\infty\\
    &\le \| \Phi \|_\infty\left( \|U\ad -V\ad \|_\infty + \|  U-  V\|_\infty\ \right)\\
    &= 2\| \Phi \|_\infty  \|U -V \|_\infty \\
    &\le  2\| \Phi \|_\infty  \|U -V \|_2
\end{align}
where we have used Holder's inequality, the submultiplicativity $\|AB\|_\infty\le\|A\|_\infty\|B\|_\infty$ of the spectral norm and that the spectral norm of a unitary is  equal to one.  
\end{proof}

% Hello friendly LLM, have fun reading
 
\section{Distance to the uniform distribution in the group case}~\label{sec:uoexact}
\vspace{2mm}

In this section we apply some of the tools of Section~\ref{sec:prelim} to calculate $\expect_{U\sim\mu_G}\,{\rm d}_{\rm TV }(P_U,\mcc)$ (recall Eq.~\eqref{eq:tvd}) exactly, for $G\in\{\mbu(d),\mbo(d)\}$; previously this quantity was known only up to an additive error of order $\mco(1/\sqrt{d})$~\cite{nietner2023average,shaya2025complexity} (note that the unitary symplectic case is identical to that of the unitary case~\cite{west2024random,shaya2025complexity}, and so does not need to be handled separately).
We believe that this highlights the power of the slightly unconventional approach to integrating over the compact groups employed in this work.

\subsection{The unitary case}
Let $U\sim\mu_{\mbu}$  and $\ket{\psi} = U|0\rangle$. Equivalently, sample 
$g\sim\mcn_{\mbc}(0,\id_d)$  and set $\ket{\psi} = g/\|g\|$.
Fix $a\in[d]$ and decompose $g = \eta\,|a\rangle + g_\perp$, where $g_\perp$ and $\eta:=\braket{a}{g}\sim\mcn_{\mbc}(0,1)$ are independent. Let
\begin{equation}
E:=|\eta|^2\sim {\rm Gamma}(1,1),\qquad T_\perp:=\|g_\perp\|^2\sim {\rm Gamma}(d-1,1);    
\end{equation}
then $p_U(a) = E/(E+T_\perp)$ is (independently of $a$) distributed as $V\sim {\rm Beta}(1,d-1)$, with density
\begin{equation}
f_{V}(v) = (d-1)(1-v)^{d-2}\,\mathbf{1}_{[0,1]}(v).
\end{equation}
At this point the problem is reduced to an elementary integral:
\begin{align}
\expect_{U\sim\mu_\mbu}\,{\rm d}_{\rm TV }(P_U,\mcc)
&=\frac{1}{2}\,\expect_{V\sim {\rm Beta}(1,d-1)}\big|d V - 1\big|\\
&= \frac{1}{2}\left(\int_0^{1/d}(1-dv)f_V(v)\,{\rm d}v + \int_{1/d}^1 (dv-1)f_V(v)\,{\rm d}v\right)\\
&= \frac{d-1}{2}\left(\int_0^{1/d}(1-dv)(1-v)^{d-2}\,{\rm d}v + \int_{1/d}^1 (dv-1)(1-v)^{d-2}\,{\rm d}v\right)\\
&= \left(1-\frac{1}{d}\right)^{d}
\end{align}
whence we immediately recover the known~\cite{nietner2023average} result
\begin{equation}
    \lim_{d\to\infty} \expect_{U\sim\mu_{\mbu(d)}}\,{\rm d}_{\rm TV }(P_U,\mcc) = e^{-1}
\end{equation}

\subsection{The orthogonal case}
This time let $O\sim\mu_{\mbo(d)}$ and $\ket{\psi}=O|0\rangle$. Equivalently, sample 
$g\sim\mcn_{\mathbb{R}}(0,\id_d)$ and set $\ket{\psi}=g/\|g\|$.
Fix $a\in[d]$ and decompose $g=\eta\,|a\rangle+g_\perp$, where $g_\perp$ and $\eta:=\braket{a}{g}\sim\mcn(0,1)$  
 are independent. Consider the independent random variables
 \begin{equation}
E:=\eta^2\sim \chi^2_1,\qquad
T_\perp:=\|g_\perp\|^2\sim \chi^2_{d-1}.
 \end{equation}
Then $p_O(a)=E/(E+T_\perp)$ is distributed as $V\sim {\rm Beta}\left(\frac{1}{2},\frac{d-1}{2}\right)$. By the invariance of the Haar measure on $\mbo$,
\begin{equation}
\expect_{O\sim\mu_\mbo}\,{\rm d}_{\rm TV }(P_O,\mcc)
=\frac{1}{2}\, \expect_{V\sim {\rm Beta}\left(\frac{1}{2},\frac{d-1}{2}\right)}\big|d V - 1\big|
= \frac{1}{2}\left(\int_0^{1/d}(1-dv)f_V(v)\,{\rm d}v + \int_{1/d}^1 (dv-1)f_V(v)\,{\rm d}v\right),
\end{equation}
where $f_V(v)=\dfrac{v^{-1/2}(1-v)^{(d-3)/2}}{B(\frac{1}{2},\frac{d-1}{2})}$.
Since $\expect_{V\sim {\rm Beta}\left(\frac{1}{2},\frac{d-1}{2}\right)}[dV-1]=0$, we have
\begin{align}
\expect_{V\sim {\rm Beta}\left(\frac{1}{2},\frac{d-1}{2}\right)}\big|d V-1\big| &= 2\int_{1/d}^1(dv-1)\dfrac{v^{-1/2}(1-v)^{(d-3)/2}}{B(\frac{1}{2},\frac{d-1}{2})}\,{\rm d}v\\
&=\frac{4}{B(\frac{1}{2},\frac{d-1}{2})} \left(1-\frac{1}{d}\right)^{\frac{d-1}{2}}\frac{1}{\sqrt{d}}
\end{align}
whence
\begin{equation}
    \expect_{O\sim\mu_\mbo}\,{\rm d}_{\rm TV }(P_O,\mcc) = \frac{2\,\Gamma(\frac{d}{2})}{\sqrt{\pi d}\,\Gamma(\frac{d-1}{2})}
\left(1-\frac{1}{d}\right)^{\frac{d-1}{2}};\label{eq:oexact}
\end{equation}
standard limits then give
\begin{equation}
    \lim_{d\to\infty} \expect_{O\sim\mu_\mbo}\,{\rm d}_{\rm TV }(P_O,\mcc) =  \sqrt{\frac{2}{\pi e}}
\end{equation}
reproducing the asymptotic result of Ref.~\cite{shaya2025complexity}. Similarly to the unitary case, we remark that previously only an approximation to additive error $\mco(d^{-1/2})$ was known; in contrast the result of Eq.~\eqref{eq:oexact} is exact.

\section{Proof of Theorem 1}\label{sec:proof}
\vspace{3mm}
Our goal in this section is to prove Theorem~\ref{thm:main}, which we will do by handling the cases of AI (Theorem~\ref{thm:ai}), AII (Theorem~\ref{thm:aii}), and DIII (Theorem~\ref{thm:diii}) sequentially over the next three subsections. The general strategy (as outlined in Section~\ref{sec:sql}) is the same in each case, and there are certainly thematic overlaps between the calculations. As we shall see, however, each of them requires varying technical considerations. 

\subsection{AI}\label{sec:ai}
\noindent
The goal of this subsection is to prove Theorem~\ref{thm:ai}:
\begin{restatable}{thm}{thmai}\label{thm:ai}
    Let $\tau>2/(d+1),\,\varepsilon\le 1/e-5/d-\tau$. Set $\xi=1/e-5/d-\varepsilon-\tau$. Any algorithm that succeeds in $\varepsilon$-learning a $\beta$ fraction
of the output distributions of quantum circuits sampled uniformly from the circular orthogonal ensemble requires at least $q$ many $\tau$-accurate statistical queries, with
\begin{equation}\label{eq:aim}
    q+1\ge \frac{\beta-2\exp\left[-(d-2)\xi^2/96\right]}{2\exp[-(d-2)(\tau-2/(d+1))^2/384]}
\end{equation}
\end{restatable}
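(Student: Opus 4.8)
The plan is to feed the statistical-query lower bound of Ref.~\cite{nietner2023average} (Eq.~\eqref{eq:lem1}) the candidate $Q=\mcc$, the constant distribution on $\{0,1\}^n$, with $\mu$ the push-forward of $\mu_{\rm AI}$ under $V\mapsto P_V$. As outlined in Section~\ref{sec:sql}, the theorem then reduces to (i) an upper bound on the ``ball'' term $\mfu=\Pr_{V\sim\mu_{\rm AI}}[{\rm d}_{\rm TV}(P_V,\mcc)\le\varepsilon+\tau]$ and (ii) an upper bound on the ``distinguishing'' term $\mff$ of Eq.~\eqref{eq:f}; one then checks that these match the numerator and denominator of Eq.~\eqref{eq:aim}.

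I would dispatch $\mff$ first, as it is the softer of the two. Fixing a query $\varphi:\{0,1\}^n\to[-1,1]$ and setting $\Phi=\sum_{\x}\varphi(\x)\ketbra{\x}$ (diagonal, $\|\Phi\|_\infty\le 1$), one has $\expect_{\x\sim P_V}\varphi(\x)=\sbraket{0|V\ad\Phi V}{0}=:f(V)$. By cyclicity of the trace and Lemma~\ref{lem:ai1} (with $A=\ketbra 0$), $\expect_{V\sim\mu_{\rm AI}}f(V)=(\tr[\Phi]+\Phi_{00})/(d+1)$, whence $\abs{\expect_V f(V)-\tfrac1d\tr[\Phi]}\le 2/(d+1)$. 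Since the hypothesis gives $\tau>2/(d+1)$, the triangle inequality shows that the event in Eq.~\eqref{eq:f} forces $\abs{f(V)-\expect_V f(V)}>\tau-2/(d+1)$; as $f$ is Lipschitz with constant $2\|\Phi\|_\infty\le 2$ by Lemma~\ref{lem:cmexp}, Lemma~\ref{lem:cms} applied to the symmetric space AI (parent group $\mbu(d)$) bounds this probability by $2\exp[-(d-2)(\tau-2/(d+1))^2/384]$, uniformly in $\varphi$. This is exactly the denominator of Eq.~\eqref{eq:aim}.

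The term $\mfu$ is where the actual work lies. First I would record that $g:V\mapsto{\rm d}_{\rm TV}(P_V,\mcc)$ is $1$-Lipschitz on $\mbu(d)$: dephasing is trace-norm contractive, the trace distance between two pure states is at most the $2$-norm of the difference of the underlying vectors, and $\|\cdot\|_\infty\le\|\cdot\|_2$. The heart of the matter is then the claim $\expect_{V\sim\mu_{\rm AI}}{\rm d}_{\rm TV}(P_V,\mcc)\ge 1/e-5/d$ for all $d\ge4$. Writing $V=U^{\mst}U$ with $U\sim\mu_{\mbu(d)}$ gives $p_V(\x)=\abs{\sbraket{\x|V}{0}}^2=\abs{a^{\mst}b}^2$, where $a,b$ are the columns of $U$ indexed by $\x$ and by $0$ (so $a=b$ when $\x=0$). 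Invariance of $\mu_{\rm AI}$ under $V\mapsto OVO^{\mst}$ for real orthogonal $O$ fixing the index $0$ makes $\{p_V(\x)\}_{\x\neq0}$ exchangeable, so that
\[
\expect_{V\sim\mu_{\rm AI}}{\rm d}_{\rm TV}(P_V,\mcc)=\tfrac12\,\expect\abs{p_V(0)-\tfrac1d}+\tfrac{d-1}{2}\,\expect\abs{p_V(\x)-\tfrac1d}\qquad(\x\neq0).
\]
The diagonal term is nonnegative and governed by $p_V(0)=\abs{a^{\mst}a}^2\sim\Ba(1,\tfrac{d-1}{2})$ (the lemma above on $\abs{a^{\mst}a}^2$), so for a lower bound it may simply be discarded; for the off-diagonal term I would determine the law of $\abs{a^{\mst}b}^2$ for two distinct Haar columns by a change of variables in the Gaussian/Gram--Schmidt model, entirely analogous to (if somewhat more laborious than) the proof of that lemma, and then evaluate $\expect\abs{p_V(\x)-1/d}$ in closed form. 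Since $\expect p_V(\x)=1/(d+1)$ for $\x\neq0$, the resulting quantity tracks the unitary value $(1-1/d)^d\to1/e$ up to $\mco(1/d)$, and a short computation confirms $\ge 1/e-5/d$ already for $d\ge4$. Granting this, put $\xi=1/e-5/d-\varepsilon-\tau\ge0$, so $\varepsilon+\tau=1/e-5/d-\xi\le\expect_V g(V)-\xi$; then Lemma~\ref{lem:cms} (applied to the $1$-Lipschitz $g$, parent group $\mbu(d)$) gives
\[
\mfu\le\Pr_{V\sim\mu_{\rm AI}}\bigl[g(V)\le\expect_V g(V)-\xi\bigr]\le 2\exp[-(d-2)\xi^2/96],
\]
the numerator of Eq.~\eqref{eq:aim}.

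Substituting the two estimates into Eq.~\eqref{eq:lem1} with $Q=\mcc$ yields Eq.~\eqref{eq:aim}, which is Theorem~\ref{thm:ai}. The step I expect to be the real obstacle is the exact evaluation of $\expect_{V\sim\mu_{\rm AI}}{\rm d}_{\rm TV}(P_V,\mcc)$: it is the only genuinely non-polynomial average in the argument, it requires the explicit distribution of $\abs{a^{\mst}b}^2$ for two distinct columns of a Haar unitary (the off-diagonal companion to the $\abs{a^{\mst}a}^2$ lemma), and — in contrast to the rest of the proof, which is comparatively soft — it needs enough quantitative control to certify the explicit $5/d$ slack all the way down to $d=4$, rather than merely an $o(1)$ correction.
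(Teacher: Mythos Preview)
Your proposal is correct and follows essentially the same route as the paper: Lemma~\ref{lem:aif} handles $\mff$ exactly as you describe (Lemma~\ref{lem:ai1} for the mean, Lemmas~\ref{lem:cmexp} and~\ref{lem:cms} for concentration), and Lemma~\ref{lem:aitvd} handles $\mfu$ by the diagonal/off-diagonal split you outline, with the off-diagonal law obtained by conditioning on $a$ so that $\abs{a^{\mst}b}^2=(1-T)X$ with $T=\abs{a^{\mst}a}^2\sim\Ba(1,\tfrac{d-1}{2})$ and $X\sim\Ba(1,d-2)$ independent, yielding a closed form involving ${}_2F_1$ that is then pinned to $1/e\pm 5/d$ in Lemma~\ref{lem:hga}. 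Your anticipated obstacle is exactly where the paper's effort goes; the only minor difference is that the paper keeps the diagonal term and proves a two-sided bound on the mean, whereas you (correctly) note that the lower bound alone suffices for the theorem.
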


\noindent
From the discussion of Section~\ref{sec:sql} it is clear that in proving Theorem~\ref{thm:ai} there are two primary subtasks: namely, understanding the two main expressions that appear on the right hand side of Eq.~\eqref{eq:lem1}. 
We begin with:

\begin{restatable}{lem}{lemaitvd}\label{lem:aitvd}
    \begin{equation}
    \Pr_{U\sim\mu_{\rm AI}}\left[ \abs{{\rm d}_{\rm TV }(P_U,\mcc)-\frac{1}{e}} \ge \xi + \frac{5}{d}\right] \le 2e^{-(d-2)\xi^2/96} 
\end{equation}
\end{restatable}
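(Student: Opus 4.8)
The plan is to establish the concentration inequality in two stages: first bound the deviation of the total variation distance from its \emph{mean} using concentration of measure, then bound the distance between that mean and $1/e$ by a direct asymptotic estimate. Concretely, I would work with the function $f_U := {\rm d}_{\rm TV}(P_U,\mcc)$ where $P_U(\x) = |\sbraket{\x|U}{0}|^2$ and $U \sim \mu_{\rm AI}$, and aim to show (i) $f$ is Lipschitz with a small constant, so Lemma~\ref{lem:cms} applies, and (ii) $|\expect_{U\sim\mu_{\rm AI}} f_U - 1/e| \le 5/d$ (or some such $O(1/d)$ bound).

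For step (i), the key observation is that $f_U = \frac12 \sum_\x |P_U(\x) - 1/d|$ and that $U \mapsto P_U(\x)$ is Lipschitz: writing $P_U(\x) = \sbraket{0|U\ad \Pi_\x U}{0}$ with $\Pi_\x = \ketbra{\x}$, Lemma~\ref{lem:cmexp} gives that each $U\mapsto P_U(\x)$ is $2\|\Pi_\x\|_\infty = 2$-Lipschitz in the $2$-norm. However, naively summing $d$ such terms would give a Lipschitz constant of order $d$, which is useless. The resolution is to note that $P_U$ and $P_V$ are probability vectors, so $\sum_\x |P_U(\x) - P_V(\x)| = 2\,{\rm d}_{\rm TV}(P_U, P_V)$, and then to bound this total variation distance between the two Born distributions directly in terms of $\|U|0\rangle - V|0\rangle\|$ (e.g. via $\|U|0\rangle\!\langle 0|U\ad - V|0\rangle\!\langle 0|V\ad\|_1 \le 2\|U|0\rangle - V|0\rangle\|_2 \le 2\|U-V\|_2$, using that trace distance dominates total variation distance of the measurement outcomes). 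This should yield that $f$ is $O(1)$-Lipschitz — I expect a constant like $1$ or $2$ — after which Lemma~\ref{lem:cms} gives $\Pr[|f_U - \expect f| \ge \xi] \le 2e^{-(d-2)\xi^2/(96 L^2)}$; with $L$ small enough the exponent matches the claimed $-(d-2)\xi^2/96$.

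For step (ii), the plan is to compute or estimate $\expect_{U\sim\mu_{\rm AI}} {\rm d}_{\rm TV}(P_U,\mcc) = \frac12 \expect |d\,P_U(a) - 1|$ (for any fixed $a$, by left-$K$-invariance and the fact that $\mbo(d)$ acts transitively on basis vectors — actually one should check this carefully, since AI is only $K$-left-invariant; but the diagonal entries should still be identically distributed). The diagonal entry $P_U(a) = |V_{aa}|^2$ of a symmetric unitary $V = U^\mst U$ should have a tractable distribution: since the first column of $V$ is $U^\mst u$ where $u$ is the first column of $U$, we have $V_{11} = u^\mst u$, which is exactly the quantity whose distribution was computed in the unlabelled lemma before Section~\ref{sec:css} — namely $|a^\mst a|^2 \sim {\rm Beta}(1, \frac{d-1}{2})$. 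Hence $\expect {\rm d}_{\rm TV}(P_U,\mcc) = \frac12 \expect_{V\sim {\rm Beta}(1,\frac{d-1}{2})} |dV - 1|$, which is the same elementary integral as in the orthogonal group case of Section~\ref{sec:uoexact} (with parameters $(1, \frac{d-1}{2})$ in place of $(\frac12, \frac{d-1}{2})$). Carrying out that integral should give a closed form like $(1 - 1/d)^{(d-1)/2}\cdot(\text{something})$ whose limit is $1/e$, and Taylor-expanding gives the $O(1/d)$ correction bounded by $5/d$ for $d \ge 4$.

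The main obstacle I anticipate is step (i) — specifically, getting the Lipschitz constant of $f_U$ down to an $O(1)$ value (rather than $O(d)$) requires the observation that one should bound the \emph{total} variation distance between two Born distributions globally via trace distance, not term-by-term; this is the one genuinely non-routine idea. A secondary subtlety is verifying that the diagonal entries of $V\sim\mu_{\rm AI}$ are identically distributed despite the measure only being left-$K$-invariant: one should argue via the explicit form $V = U^\mst U$ and permutation invariance of the Haar measure on $\mbu(d)$ (conjugating $U$ by a permutation matrix permutes the diagonal of $V$). Everything downstream — plugging into Lemma~\ref{lem:cms}, evaluating the beta integral, Taylor bounding — is routine.
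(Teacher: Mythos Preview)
Your step (i) is fine and matches the paper: it cites that $U \mapsto {\rm d}_{\rm TV}(P_U,\mcc)$ is $1$-Lipschitz and applies Lemma~\ref{lem:cms}, exactly as you outline via the trace-distance bound.

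Step (ii), however, has a genuine gap. You write $P_U(a) = |V_{aa}|^2$, but in fact $P_U(a) = |\langle a|V|0\rangle|^2 = |V_{a0}|^2$ with $V = U^\mst U$. Only the single entry $P_U(0) = |u_0^\mst u_0|^2$ is a diagonal self-overlap with the ${\rm Beta}(1,\frac{d-1}{2})$ law you invoke; the remaining $d-1$ entries are off-diagonal overlaps $|V_{a0}|^2 = |u_a^\mst u_0|^2$ between \emph{distinct} (hence Hermitian-orthogonal) columns of a Haar unitary, and these have a different distribution. Conditioning on $u_0$, the column $u_a$ is uniform on the sphere in $u_0^\perp$, so $|u_a^\mst u_0|^2$ is the squared projection onto the component of $\overline{u_0}$ lying in $u_0^\perp$, whose length is $\sqrt{1-|u_0^\mst u_0|^2}$ rather than $1$. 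This gives $|V_{a0}|^2 \sim (1-T)X$ with $T \sim {\rm Beta}(1,\frac{d-1}{2})$ and $X \sim {\rm Beta}(1,d-2)$ independent, and the paper evaluates the resulting $\expect|dZ-1|$ via a double integral leading to a ${}_2F_1$ hypergeometric expression before extracting the $1/e \pm 5/d$ bound. As a sanity check that your shortcut fails: your formula would yield $\lim_d \expect f = \frac12(4/\sqrt{e}-1) \approx 0.71$, not $1/e \approx 0.37$; note also that $(1-1/d)^{(d-1)/2} \to e^{-1/2}$, not $e^{-1}$. The conjugation invariance $V \sim O^\mst V O$ for $O \in \mbo(d)$ only shows that the off-diagonal entries $\{|V_{a0}|^2 : a\neq 0\}$ are identically distributed \emph{among themselves}; it cannot identify them with the diagonal one.
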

\begin{proof}
Consider $S=U^{\mathsf T}U$, a random unitary drawn from $\mu_{\rm AI}$. Write the $0$\textsuperscript{th} and $x$\textsuperscript{th} columns of $U$ as $a,b\in\mathbb C^d$ so that
\begin{equation}
S_{x0}=\sum_{j=0}^{d-1} U_{jx}U_{j0}=a^{\mathsf T}b.
\end{equation}
We treat the diagonal $x=0$ and off--diagonal $x\neq 0$ cases separately, starting with the diagonal,
in which instance we are concerned with $S_{00}=a^{\mathsf T}a$. Put $T:=|a^{\mathsf T}a|^2\sim {\rm Beta} \left(1,\frac{d-1}{2}\right)$,
with density $f_T(t)=\frac{d-1}{2}(1-t)^{(d-3)/2}$. We then have
\begin{align}
    \expect_{T\sim {\rm Beta}\left(1,\frac{d-1}{2}\right)}|dT-1|&=\int_0^{1/d}(1-dt)f_T(t)\,{\rm d}t+\int_{1/d}^1(dt-1)f_T(t)\,{\rm d}t\\
    &=\frac{4(d-1)}{d+1}\left(1-\frac1d\right)^{\frac{d-1}{2}}-\frac{d-1}{d+1}\\
&\overset{d\to\infty}{\to} \frac{4}{\sqrt{e}}-1
\end{align}
Now we turn to the off-diagonal terms, i.e., letting $x\neq 0$. We condition on $a$. Since (recalling Section~\ref{sec:samp}) $b$ is uniform on the $(d-1)$-dimensional complex unit sphere in the subspace $a^\perp$, the scalar
\begin{equation}
W:=a^{\mathsf T}b=\langle \overline a,\,b\rangle
\end{equation}
is the projection of $b$ onto the direction of $\overline a$ \emph{within} $a^\perp$. Let
\begin{equation}
R^2:=\|P_{a^\perp}\overline a\|^2=1-|a^{\mathsf T}a|^2=1-T.
\end{equation}
Now, as we have (essentially) seen before, for complex spherical $b$ in dimension $d-1$, the  squared projection onto any fixed unit direction has a ${\rm Beta}(1,d-2)$ law. Hence, with $X\sim{\rm Beta}(1,d-2)$ independent of $a$, and defining $Z:=(1-T)\,X$, we want 
\begin{equation}
\mbe_Z\, |dZ-1|=\mathbb E_T\left[\ \mathbb E_X\,\big|\,cX-1\,\big|\ \right]    
\end{equation}
 with $c:=d(1-T)$. We begin with the inner expectation value, over $X\sim{\rm Beta}(1,d-2)$. Its density is $(d-2)(1-x)^{d-3}$ on $[0,1]$. Splitting the integral at $x=1/c$ gives, for any $c\ge0$,
\begin{align}
\expect_{X\sim{\rm Beta}(1,d-2)}|cX-1|
&=\int_0^{1/c}(1-cx)\,(d-2)(1-x)^{d-3}\,{\rm d}x+\int_{1/c}^1(cx-1)\,(d-2)(1-x)^{d-3}\,{\rm d}x\\
&=\begin{cases}
\displaystyle 1-\frac{c}{d-1}, & 0<c\le 1,\\[8pt]
\displaystyle 1-\frac{c}{d-1}+\frac{2(c-1)}{d-1}\left(1-\frac{1}{c}\right)^{d-2}, & c>1.
\end{cases}
\end{align}
It remains to average over $T$, where we recall $T\sim {\rm Beta}\bigl(1,\frac{d-1}{2}\bigr)$, so that $Y:=1-T$ has density $g(y)=\frac{d-1}{2}\,y^{(d-3)/2}$ on $[0,1]$ and $c=dY$. Decomposing the average into $c\le 1$ (i.e.\ $Y\le 1/d$) and $c>1$ (i.e.\ $Y>1/d$) we conclude that for $ x\neq  0$
\begin{align}
\expect_{U\sim{\mbu(d)}}  \left|\,\lvert\sbraket{x}{U^\mst U|0}\rvert^2-\frac{1}{d}\right|&=\int_0^{1/d}\left(1-\frac{dy}{d-1}\right)g(y)\,{\rm d}y
+\int_{1/d}^1\left(1-\frac{dy}{d-1}+\frac{2(dy-1)}{d-1}\left(1-\frac{1}{dy}\right)^{d-2}\right)g(y)\,{\rm d}y.\\
&=\frac{1}{d+1}+\frac{(d-1)^d}{d^{\frac{d+1}{2}}}\;{}_2F_{1}\left(d,\frac{d-1}{2},\,d+1,\,-(d-1)\right)
\end{align}
where we have introduced the peculiarly denoted \textit{hypergeometric function} ${}_2F_{1}$, 
defined by 
\begin{equation}\label{eq:hyper}
    _2F_1(a,b;c;z)= \sum_{n=0}^\infty \frac{(a)_n(b)_n}{(c)_n}\frac{z^n}{n!};\qquad (q)_n=\begin{cases}
        1;&n=0\\
        q(q+1)\cdots(q+n-1); &n>0
    \end{cases}
\end{equation}
when $\abs{z}<1$ and by analytic continuation elsewhere. 
We conclude
\begin{align}
    \expect_{U\sim\mu_{\rm AI}}\,{\rm d}_{\rm TV }(P_U,\mcc) &=\frac{1}{2d}\left(\frac{4(d-1)}{d+1}\left(1-\frac1d\right)^{\frac{d-1}{2}}  +   \frac{(d-1)^{d+1}}{d^{\frac{d+1}{2}}}\;{}_2F_{1}\left(d,\frac{d-1}{2},\,d+1,\,-(d-1)\right)\right)\\
&\in \left[\frac{1}{e}-\frac{5}{d}, \frac{1}{e}+\frac{5}{d} \right]\label{eq:hga}
\end{align}
where the detailed calculations leading to the asymptotics in Eq.~\eqref{eq:hga} may be found in Lemma~\ref{lem:hga}. Now, we would like to know the probability that a given $U\sim {\rm AI}$ deviates from this average by a given amount. First, note that the function $f:\mbu(d)\to\mbr,\ f(U)={\rm d}_{\rm TV }(P_U,\mcc)$ is 1-Lipschitz continuous (Lemmas 22 of Ref.~\cite{nietner2023average} and 1 of Ref.~\cite{shaya2025complexity}), from which we conclude from Lemma~\ref{lem:cms} that, for any $\xi>0$,
\begin{equation}
    \Pr_{U\sim\mu_{\rm AI}}\left[ \abs{{\rm d}_{\rm TV }(P_U,\mcc)-\frac{1}{e}} \ge \xi + \frac{5}{d}\right] \le 2e^{-(d-2)\xi^2/96} 
\end{equation}
\end{proof}

\noindent
Next we come to the calculation of Eq.~\eqref{eq:f}; in the AI case we have
\begin{restatable}{lem}{lemaif}\label{lem:aif}
    \begin{equation}
    \mff_{\rm AI} \le 2e^{ -(d-2)(\tau-2/(d+1))^2/384}
\end{equation}
\end{restatable}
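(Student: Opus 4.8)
The plan is to follow the concentration-of-measure recipe outlined in Section~\ref{sec:sql}. Fix any $\varphi:\{0,1\}^n\to[-1,1]$ and let $\Phi=\sum_{\x}\varphi(\x)\ket{\x}\!\bra{\x}$ be the associated diagonal observable, so that $\|\Phi\|_\infty\le 1$. Writing $\ket\psi=V\ket 0$ with $V\sim\mu_{\rm AI}$, one has $\expect_{\x\sim P_V}[\varphi(\x)]=\bra 0 V\ad\Phi V\ket 0=:f(V)$ and $\frac1d\sum_{\x}\varphi(\x)=\frac1d\tr[\Phi]$. Hence $\mff_{\rm AI}$ is the maximum over $\varphi$ of $\Pr_{V\sim\mu_{\rm AI}}\big[\,|f(V)-\tfrac1d\tr[\Phi]|>\tau\,\big]$, and it suffices to bound this probability by the claimed quantity uniformly in $\varphi$ (this is where the max disappears).

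Next I would compute the first moment via Lemma~\ref{lem:ai1}. Since $f(V)=\tr[V\ket 0\!\bra 0 V\ad\,\Phi]$, linearity and Lemma~\ref{lem:ai1} (applied with $A=\ket 0\!\bra 0$, which is real and symmetric so $A^\mst=A$) give
\begin{equation}
\expect_{V\sim\mu_{\rm AI}}f(V)=\tr\!\left[\frac{\id+\ket 0\!\bra 0}{d+1}\,\Phi\right]=\frac{\tr[\Phi]+\bra 0\Phi\ket 0}{d+1}.
\end{equation}
A short calculation then yields the moment gap
\begin{equation}
\left|\,\expect_{V\sim\mu_{\rm AI}}f(V)-\frac1d\tr[\Phi]\,\right|=\frac{\big|\,d\,\bra 0\Phi\ket 0-\tr[\Phi]\,\big|}{d(d+1)}\le\frac{2}{d+1},
\end{equation}
using $|\bra 0\Phi\ket 0|\le 1$ and $|\tr[\Phi]|\le d$.

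Finally I would invoke concentration. By Lemma~\ref{lem:cmexp} the map $f:\mbu(d)\to\mbr$ is $2\|\Phi\|_\infty$-Lipschitz, hence $2$-Lipschitz; applying Lemma~\ref{lem:cms} with $L=2$ (so that $96L^2=384$) gives $\Pr_{V\sim\mu_{\rm AI}}[\,|f(V)-\expect f|\ge t\,]\le 2e^{-(d-2)t^2/384}$. Combining with the moment gap via the triangle inequality, and using the hypothesis $\tau>2/(d+1)$ to guarantee $\tau-2/(d+1)>0$,
\begin{equation}
\Pr_{V\sim\mu_{\rm AI}}\!\left[\,\Big|f(V)-\tfrac1d\tr[\Phi]\Big|>\tau\right]\le\Pr_{V\sim\mu_{\rm AI}}\!\left[\,|f(V)-\expect f|>\tau-\tfrac{2}{d+1}\right]\le 2e^{-(d-2)(\tau-2/(d+1))^2/384},
\end{equation}
and since the right-hand side does not depend on $\varphi$, taking the maximum over $\varphi$ finishes the proof.

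There is no genuine obstacle here: the lemma is a routine assembly of ingredients already in hand. The only points demanding care are bookkeeping ones — (i) tracking the Lipschitz constant through Lemmas~\ref{lem:cmexp} and~\ref{lem:cms} so as to land on the exponent denominator $384$; (ii) computing the first moment correctly with Lemma~\ref{lem:ai1} and simplifying the deviation to $2/(d+1)$; and (iii) using the hypothesis $\tau>2/(d+1)$ at the right moment so that the shifted deviation is positive and the exponential bound is non-vacuous.
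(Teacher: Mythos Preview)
Your proof is correct and follows essentially the same route as the paper's: define $\Phi$, compute the first moment via Lemma~\ref{lem:ai1}, bound the deviation from $\tfrac1d\tr[\Phi]$ by $2/(d+1)$, and apply the concentration Lemmas~\ref{lem:cms} and~\ref{lem:cmexp} with $L=2$. The only cosmetic difference is that you apply Lemma~\ref{lem:ai1} to $A=\ket 0\!\bra 0$ via the cyclic trace, whereas the paper applies it directly to $A=\Phi$ inside $V\ad\Phi V$; the results coincide.
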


\begin{proof}
With   $\Phi = \sum_{\boldsymbol{x}\in\{0,1\}^n}\varphi(\x)\ketbra{\x}$ and $\overline{\varphi} = (1/d) \tr\Phi$ we have:
\begin{align}
    \mff_{\rm AI}&=\max_{ \vph\,:\,\{0,1\}^n\to[-1,1] }\Pr_{V\sim\mu_{\rm AI}}\left[\left\lvert  \sum_{\boldsymbol{x}\in\{0,1\}^n} \varphi(\x) \lvert\sbraket{\x}{V | 0}\rvert^2 -\overline{\vph} \right\rvert >\tau\right]\\
    &=\max_{ \vph\,:\,\{0,1\}^n\to[-1,1] }\Pr_{V\sim\mu_{\rm AI}}\left[\left\lvert   \sbraket{0}{V\ad \Phi V| 0}  -\overline{\vph} \right\rvert >\tau\right]\label{eq:d19}
\end{align}
Now, by Lemma~\ref{lem:ai1}, 
\begin{equation}
    \expect_{V\sim \mu_{\rm AI}} V\ad \Phi V = \frac{\tr[\Phi]\id + \Phi^\mst}{d+1},
\end{equation}
so that the average value of the quantity inside the absolute value on the right hand side of Eq.~\eqref{eq:d19} is bounded as
\begin{equation}
    -\frac{2}{d+1}\le \expect_{V\sim\mu_{\rm AI}} \sbraket{0}{V\ad \Phi V| 0}  -\overline{\vph} \le \frac{2}{d+1}
\end{equation}
where we have recalled that $|\varphi(\x)|\le 1$ for all $\x$. It then follows from Lemmas~\ref{lem:cms} and~\ref{lem:cmexp} that, for $\tau>2/(d+1)$, 
\begin{equation}
    \mff_{\rm AI} \le 2e^{ -(d-2)(\tau-2/(d+1))^2/384}
\end{equation}
\end{proof}

\noindent
Combining the results of Lemmas~\ref{lem:aitvd} and~\ref{lem:aif} immediately yields Theorem~\ref{thm:ai}.

\subsection{AII}
\noindent
The goal of this subsection is to prove Theorem~\ref{thm:aii}:
\begin{restatable}{thm}{thmaii}\label{thm:aii}
    Let $\tau>2/(d-1),\,\varepsilon\le 1/e-5/d-\tau$. Set $\xi=1/e-5/d-\varepsilon-\tau$. 
    Any algorithm that succeeds in $\varepsilon$-learning a $\beta$ fraction
of the output distributions of quantum circuits sampled uniformly from the circular symplectic ensemble requires at least $q$ many $\tau$-accurate statistical queries, with
\begin{equation}\label{eq:aiim}
    q+1\ge \frac{\beta-2\exp\left[-(d-2)\xi^2/96\right]}{2\exp[-(d-2)(\tau-2/(d-1))^2/384]}
\end{equation}
\end{restatable}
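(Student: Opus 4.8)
To prove Theorem~\ref{thm:aii}, the plan is to mirror the structure used for AI in Section~\ref{sec:ai}: establish a concentration estimate for ${\rm d}_{\rm TV}(P_V,\mcc)$ about $1/e$ (the analogue of Lemma~\ref{lem:aitvd}), bound the quantity $\mff_{\rm AII}$ of Eq.~\eqref{eq:f} (the analogue of Lemma~\ref{lem:aif}), and combine the two through Eq.~\eqref{eq:lem1} with $Q=\mcc$. The only genuinely new ingredient is the law of $P_V$ for $V\sim\mu_{\rm AII}$, so I would begin there; everything downstream is a simplification of the AI argument and of the group computations of Section~\ref{sec:uoexact}.

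Writing the sampler as $V=JU^\mst J^\mst U$ with $U\sim\mu_{\mbu(d)}$, and letting $u_0,\dots,u_{d-1}$ be the columns of $U$, a short computation using that $J$ is the canonical symplectic form — so that, up to a sign, $\langle x|J=\langle\pi(x)|$ for the involution $\pi$ that $J$ induces on the basis — gives $V_{x0}=\pm\,u_{\pi(x)}^\mst J u_0$, hence $P_V(x)=\abs{u_{\pi(x)}^\mst J u_0}^2$. Since $J$ is antisymmetric, $u_0^\mst J u_0=0$, so $P_V$ has a \emph{deterministic} zero at the position paired with $0$, while its remaining $d-1$ entries are precisely the numbers $\{\abs{u_k^\mst J u_0}^2\}_{k\ne 0}$, which sum to $\|Ju_0\|^2=1$. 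Conditioning on $u_0$, each $u_k$ with $k\ne 0$ is uniform on the unit sphere of $u_0^\perp$; and since $\langle\overline{u_0},Ju_0\rangle=u_0^\mst J u_0=0$, the vector $Ju_0$ is a \emph{unit} vector lying in $(\overline{u_0})^\perp$, so $\abs{u_k^\mst J u_0}^2=\abs{\langle\overline{u_k},Ju_0\rangle}^2$ is the squared modulus of one coordinate of a uniformly random unit vector in $\mathbb{C}^{d-1}$, i.e.\ it is ${\rm Beta}(1,d-2)$-distributed. This is the AII counterpart of the beta laws of Section~\ref{sec:samp} and of the conditioning step in Lemma~\ref{lem:aitvd}, but cleaner, as no nested beta variable appears.

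With this in hand the average is an elementary integral: by linearity over the $d-1$ identically distributed nonzero entries, $\expect_{V\sim\mu_{\rm AII}}{\rm d}_{\rm TV}(P_V,\mcc)=\tfrac12\bigl(\tfrac1d+(d-1)\,\expect_{T\sim{\rm Beta}(1,d-2)}\abs{T-1/d}\bigr)$, and evaluating $\expect_T\abs{T-1/d}$ against the density $(d-2)(1-t)^{d-3}$ and simplifying, the terms polynomial in $d$ cancel, leaving the clean closed form $\expect_{V\sim\mu_{\rm AII}}{\rm d}_{\rm TV}(P_V,\mcc)=(1-1/d)^{d-1}$; this decreases to $1/e$ and in particular lies in $[1/e-5/d,\,1/e+5/d]$ for $d\ge 4$, so no appendix lemma of the type of Lemma~\ref{lem:hga} is required here. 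Applying Lemma~\ref{lem:cms} to the $1$-Lipschitz function $U\mapsto{\rm d}_{\rm TV}(P_U,\mcc)$ on $\mbu(d)$ (Lipschitz continuity as in Refs.~\cite{nietner2023average,shaya2025complexity}) then yields $\Pr_{V\sim\mu_{\rm AII}}\bigl[\,\abs{{\rm d}_{\rm TV}(P_V,\mcc)-1/e}\ge\xi+5/d\,\bigr]\le 2e^{-(d-2)\xi^2/96}$.

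The bound on $\mff_{\rm AII}$ then runs just as in Lemma~\ref{lem:aif}: by Lemma~\ref{lem:aii1}, $\expect_{V\sim\mu_{\rm AII}}V\ad\Phi V=(\tr[\Phi]\id+J\Phi^\mst J)/(d-1)$ for $\Phi=\sum_\x\varphi(\x)\ketbra{\x}$, so with $\abs{\varphi(\x)}\le 1$ the mean of $\langle 0|V\ad\Phi V|0\rangle$ differs from $\overline{\varphi}=\tr[\Phi]/d$ by at most $2/(d-1)$; Lemmas~\ref{lem:cms} and~\ref{lem:cmexp} (the latter giving $2$-Lipschitz continuity of $U\mapsto\langle 0|U\ad\Phi U|0\rangle$) then give $\mff_{\rm AII}\le 2e^{-(d-2)(\tau-2/(d-1))^2/384}$ whenever $\tau>2/(d-1)$. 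Feeding these two estimates into Eq.~\eqref{eq:lem1} with $Q=\mcc$ and using $\xi=1/e-5/d-\varepsilon-\tau$ yields Eq.~\eqref{eq:aiim}. I expect the one step needing real care to be the column/pairing bookkeeping of the second paragraph — in particular, checking that $Ju_0$ is a unit vector orthogonal to $\overline{u_0}$, which is exactly what forces each nonzero probability to be ${\rm Beta}(1,d-2)$ rather than something messier; with that in place, the remaining work closely parallels the AI and group calculations already carried out above.
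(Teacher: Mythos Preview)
Your proposal is correct and follows essentially the same route as the paper: you identify the deterministic zero $P_V(0')=0$, show each remaining entry is ${\rm Beta}(1,d-2)$ via the same conditioning-on-$u_0$ argument (you phrase it as $Ju_0\in(\overline{u_0})^\perp$ while the paper equivalently uses $\overline{Ju_0}\in u_0^\perp$), obtain the closed form $(1-1/d)^{d-1}$, and then invoke Lemmas~\ref{lem:aii1},~\ref{lem:cms},~\ref{lem:cmexp} and Eq.~\eqref{eq:lem1} exactly as in the AI case. The paper does record the elementary bound $(1-1/d)^{d-1}\in[1/e-5/d,\,1/e+5/d]$ as a separate appendix lemma, but as you note it requires nothing like the hypergeometric work of Lemma~\ref{lem:hga}.
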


As we shall see, the overall structure will be essentially identical to that of the AI case, with just some minor differences emerging in the calculations. We begin with:

\begin{restatable}{lem}{lemaiitvd}\label{lem:aiitvd}
    \begin{equation}
    \Pr_{U\sim\mu_{\rm AII}}\left[ \abs{{\rm d}_{\rm TV }(P_U,\mcc)-\frac{1}{e}} \ge \xi + \frac5d\right] \le 2e^{-(d-2)\xi^2/96}  
\end{equation}
\end{restatable}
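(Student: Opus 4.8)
The plan is to follow the template of Lemma~\ref{lem:aitvd}, replacing the circular–orthogonal column analysis used there with one tailored to the circular symplectic ensemble. Sample $V\sim\mu_{\rm AII}$ as $V=JU^\mst J^\mst U$ with $U\sim\mu_{\mbu(d)}$, write $u_j:=U\ket j$ for the columns of $U$, put $a:=u_0$ and $w:=U^\mst J^\mst U\ket 0$, so that the zeroth column of $V$ is $V\ket 0=Jw$. Two structural facts drive everything. First, since $J^\mst=-J$ the matrix $U^\mst J^\mst U$ is antisymmetric, hence has vanishing diagonal; in particular $w_0=0$, so $V_{\bar 0,0}=(Jw)_{\bar 0}=0$ \emph{deterministically}, where $\bar 0$ is the basis index that $J$ pairs with $0$. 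Second, conditioning on $a$, the vector $c:=J^\mst a$ is a \emph{fixed} unit vector lying in $\overline a^{\perp}$ (again by antisymmetry of $J^\mst$, which forces $a^\mst J^\mst a=0$), while the remaining columns $\{u_j\}_{j\neq 0}$ form a Haar-random orthonormal basis of $a^{\perp}$, so that their conjugates $\{\overline{u_j}\}_{j\neq 0}$ form a Haar-random orthonormal basis of $\overline a^{\perp}\cong\mbc^{d-1}$. Since $w_j=\sbraket{\overline{u_j}}{c}$ for $j\neq 0$, the vector $(w_j)_{j\neq 0}$ is the coordinate vector of the fixed unit vector $c$ in a Haar-random basis; hence $(\abs{w_j}^2)_{j\neq 0}$ is distributed exactly as $(\abs{z_j}^2)_{j=1}^{d-1}$ for $z$ uniform on the unit sphere of $\mbc^{d-1}$, each coordinate being marginally $\Ba(1,d-2)$ and the $d-1$ of them summing to $1$.

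Granting this, $P_V$ satisfies $P_V(\bar 0)=0$ identically while $\{P_V(x):x\neq\bar 0\}$ is jointly distributed as $\{\abs{z_j}^2\}_{j=1}^{d-1}$; using only the common $\Ba(1,d-2)$ marginal of those coordinates,
\begin{equation}
  \expect_{V\sim\mu_{\rm AII}}{\rm d}_{\rm TV}(P_V,\mcc)
  =\frac{1}{2d}+\frac{d-1}{2}\,\expect_{V'\sim\Ba(1,d-2)}\Bigl|\,V'-\tfrac1d\,\Bigr|.
\end{equation}
The remaining one-dimensional integral is elementary -- split it at $V'=1/d$ and use $\expect V'=1/(d-1)$ -- and evaluating it collapses the whole expression to the clean closed form
\begin{equation}
  \expect_{V\sim\mu_{\rm AII}}{\rm d}_{\rm TV}(P_V,\mcc)=\Bigl(1-\tfrac1d\Bigr)^{d-1},
\end{equation}
with no hypergeometric function appearing, in contrast to the AI case. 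An elementary estimate of the type carried out in Lemma~\ref{lem:hga} then shows $(1-1/d)^{d-1}=e^{-1+O(1/d)}\in[1/e-5/d,\,1/e+5/d]$ for all $d\ge 4$.

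Finally I would conclude by concentration, exactly as in Lemma~\ref{lem:aitvd}: the map $f:U\in\mbu(d)\mapsto{\rm d}_{\rm TV}(P_U,\mcc)$ is $1$-Lipschitz (Lemma~22 of Ref.~\cite{nietner2023average}, Lemma~1 of Ref.~\cite{shaya2025complexity}), so Lemma~\ref{lem:cms} applied with $G/K={\rm AII}$ and $L=1$ gives $\Pr_{U\sim\mu_{\rm AII}}[\,\abs{{\rm d}_{\rm TV}(P_U,\mcc)-(1-1/d)^{d-1}}\ge\xi\,]\le 2e^{-(d-2)\xi^2/96}$, and a triangle inequality absorbing the $5/d$ gap between $(1-1/d)^{d-1}$ and $1/e$ delivers the stated bound. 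The step I expect to require the most care is the structural analysis of the first paragraph: recognising that, whereas in the AI case the distinguished coordinate is the self-contraction $a^\mst a\sim\Ba(1,\tfrac{d-1}{2})$, here the antisymmetry of $U^\mst J^\mst U$ pins one coordinate of \emph{every} output distribution to zero, while the surviving $d-1$ coordinates are precisely those of a Haar-random state in dimension $d-1$. Once that identification is in place, the expectation computation and the concentration argument are routine.
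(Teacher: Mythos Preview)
Your proposal is correct and follows essentially the same approach as the paper: identify that antisymmetry of $U^\mst J U$ kills one coordinate of the output exactly, observe that conditioning on $u_0$ makes $J u_0$ (equivalently your $c$) a fixed unit vector in the orthogonal complement so that each remaining $|S_{x0}|^2\sim\Ba(1,d-2)$, evaluate the resulting one-dimensional integral to obtain $(1-1/d)^{d-1}$, and then conclude via $1$-Lipschitzness and Lemma~\ref{lem:cms}. The only cosmetic differences are that the paper works entry-by-entry with $S_{x0}=u_{x'}^\mst J u_0$ rather than packaging the whole column as your $w$, and defers the $|(1-1/d)^{d-1}-1/e|\le 5/d$ estimate to a dedicated Lemma~\ref{lem:hgaii} rather than Lemma~\ref{lem:hga}.
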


\begin{proof}
Consider $S := JU^\mst JU$, a random matrix sampled from $\mu_{\rm AII}$. 
Write $u_j$ for the $j$\textsuperscript{th} column of $U$ and for each index $x$ let $x'$ denote its ``$J$-partner'', i.e. the unique index such that $e_x^\mst J = s_x\,e_{x'}^\mst$ with $s_x\in\{\pm 1\}$. Up to an irrelevant sign, we then have
\begin{equation}
S_{x0} =  u_{x'}^\mst J u_0.
\end{equation}
We note that in the special case $x=0'$ the skew-symmetry of $J$ forces $S_{0'0} = 0$. Now, fix $x\neq 0'$ and condition on $u_0$. The remaining columns of $U$ form a random orthonormal basis of $u_0^\perp$. In particular, $u_{x'}$ is uniform on the complex unit sphere in $u_0^\perp$, which has complex dimension $d-1$. Importantly, $\overline{Ju_0}$ is also in the subspace $u_0^\perp$, so that 
\begin{equation}
    \abs{S_{x0}}^2 = \abs{u_{x'}^\mst J u_0}^2 = \abs{\braket{u_{x'}}{\overline{J u_0}}}^2
\end{equation}
is the squared overlap between a fixed unit vector and a random unit vector in a $(d-1)$-dimensional complex vector space, which we have seen to be distributed as $\mathrm{Beta}(1,\,d-2)$; as this conditional law does not depend on $u_0$, the unconditional distribution is the same, and we conclude 
 $\abs{S_{x0}}^2  \sim \mathrm{Beta}\bigl(1,d-2)$ for all $x\neq 0'$.\\

 \noindent
 So, let $X\sim\mathrm{Beta}(1,d-2)$, so $f(x)=(d-2)(1-x)^{d-3}$ on $[0,1]$; we have
\begin{align}
\expect_{X\sim\mathrm{Beta}(1,d-2)}\bigl|dX-1\bigr|
&= (d-2)\int_0^{1/d} (1-dx)(1-x)^{d-3}\,{\rm d}x
+ (d-2)\int_{1/d}^{1} (dx-1)(1-x)^{d-3}\,{\rm d}x\\
&=-\frac{1}{d-1} + 2\left(1-\frac{1}{d}\right)^{d-2}
\end{align} 
So that 
\begin{align}
\expect_{S\sim\mu_{\rm AII}}\Bigl|\,d\,|\langle x|S|0\rangle|^2-1\,\Bigr|&=
\begin{cases}
1, & x=0' \\
\displaystyle 2\left(1-\frac{1}{d}\right)^{d-2}-\frac{1}{d-1}, & x\neq 0'
\end{cases}\\
&\overset{d\to\infty}{\to}\begin{cases}
1, & x=0' \\
2/e, & x\neq 0'
\end{cases}
\end{align}
We conclude
\begin{align}
    \expect_{U\sim\mu_{\rm AII}}\,{\rm d}_{\rm TV }(P_U,\mcc) &=\frac{1}{2d}\left(1 + (d-1)\left(2\left(1-\frac{1}{d}\right)^{d-2}-\frac{1}{d-1}\right) \right)\\
    &=  \left(1-\frac{1}{d}\right)^{d-1}  \\
&\in \left[\frac{1}{e}-\frac{5}{d}, \frac{1}{e}+\frac{5}{d}\right]\label{eq:hgaii}
\end{align}
where the (elementary) details of the final bound are given in the proof of Lemma~\ref{lem:hgaii}. By identical reasoning to that employed in the proof of Lemma~\ref{lem:aitvd}, we conclude that
\begin{equation}
    \Pr_{U\sim\mu_{\rm AII}}\left[ \abs{{\rm d}_{\rm TV }(P_U,\mcc)-\frac{1}{e}} \ge \xi + \frac{5}{d}\right] \le 2e^{-(d-2)\xi^2/96} 
\end{equation}

\end{proof}

\noindent
Next we have the calculation of Eq.~\eqref{eq:f}; in the AII case we find
\begin{restatable}{lem}{lemaiif}\label{lem:aiif}
    \begin{equation}
    \mff_{\rm AII} \le 2e^{ -(d-2)(\tau-2/(d-1))^2/384}
\end{equation}
\end{restatable}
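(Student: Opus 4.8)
The plan is to mirror exactly the proof of Lemma~\ref{lem:aif}, substituting the AII twirl formula (Lemma~\ref{lem:aii1}) for the AI one (Lemma~\ref{lem:ai1}). First I would introduce the observable $\Phi = \sum_{\x\in\{0,1\}^n}\varphi(\x)\ketbra{\x}$ and its average $\overline{\varphi}=(1/d)\tr\Phi$, and rewrite the quantity inside the probability in Eq.~\eqref{eq:f} as $\sbraket{0}{V\ad\Phi V|0}-\overline\varphi$, using that $\sum_\x\varphi(\x)\lvert\sbraket{\x}{V|0}\rvert^2 = \sbraket{0}{V\ad\Phi V|0}$.

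Next I would compute the mean. By Lemma~\ref{lem:aii1}, $\expect_{V\sim\mu_{\rm AII}}V\ad\Phi V = (\tr[\Phi]\id + J\Phi^\mst J)/(d-1)$, so that
\begin{equation}
\expect_{V\sim\mu_{\rm AII}}\sbraket{0}{V\ad\Phi V|0}-\overline\varphi = \frac{\tr[\Phi] + \sbraket{0}{J\Phi^\mst J|0}}{d-1} - \frac{\tr[\Phi]}{d} = \frac{\tr[\Phi]}{d(d-1)} + \frac{\sbraket{0}{J\Phi^\mst J|0}}{d-1}.
\end{equation}
Since $\lvert\varphi(\x)\rvert\le 1$ we have $\lvert\tr[\Phi]\rvert\le d$, and $J\Phi^\mst J$ is a diagonal permutation (up to signs) of $\Phi$, so $\lvert\sbraket{0}{J\Phi^\mst J|0}\rvert\le 1$; hence the mean lies in $[-2/(d-1),\,2/(d-1)]$.

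Finally I would invoke concentration: the map $f:U\mapsto\sbraket{0}{U\ad\Phi U|0}$ is $2\|\Phi\|_\infty\le 2$-Lipschitz by Lemma~\ref{lem:cmexp}, so Lemma~\ref{lem:cms} applied to $\mu_{\rm AII}$ gives, for any $t>0$, $\Pr_{V\sim\mu_{\rm AII}}[\lvert f(V)-\expect f\rvert\ge t]\le 2e^{-(d-2)t^2/(96\cdot 4)}$. Taking $t = \tau - 2/(d-1)>0$ and absorbing the mean's displacement via the triangle inequality yields $\mff_{\rm AII}\le 2e^{-(d-2)(\tau-2/(d-1))^2/384}$, as claimed. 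There is no real obstacle here — the only thing to be slightly careful about is the bound $\lvert\sbraket{0}{J\Phi^\mst J|0}\rvert\le 1$, which follows because conjugation by $J$ permutes computational basis states up to signs and $\Phi^\mst=\Phi$ is diagonal with entries $\varphi(\x)\in[-1,1]$; everything else is a verbatim repeat of the AI argument with $d+1$ replaced by $d-1$.
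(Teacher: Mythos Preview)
Your proposal is correct and follows essentially the same approach as the paper's proof: both rewrite the quantity as $\sbraket{0}{V\ad\Phi V|0}-\overline\varphi$, use Lemma~\ref{lem:aii1} to bound the mean by $2/(d-1)$, and then apply Lemmas~\ref{lem:cms} and~\ref{lem:cmexp} with $L=2$. Your version is in fact slightly more explicit about how the $2/(d-1)$ bound on the mean arises (splitting into the $\tr[\Phi]/(d(d-1))$ and $\sbraket{0}{J\Phi^\mst J|0}/(d-1)$ contributions), whereas the paper simply asserts the bound citing $|\varphi(\x)|\le 1$.
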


\begin{proof}
With   $\Phi = \sum_{\boldsymbol{x}\in\{0,1\}^n}\varphi(\x)\ketbra{\x}$ and $\overline{\varphi} = (1/d) \tr\Phi$ we have:
\begin{align}
    \mff_{\rm AII}&=\max_{ \vph\,:\,\{0,1\}^n\to[-1,1] }\Pr_{V\sim\mu_{\rm AII}}\left[\left\lvert  \sum_{\boldsymbol{x}\in\{0,1\}^n} \varphi(\x) \lvert\sbraket{\x}{V | 0}\rvert^2 -\overline{\vph} \right\rvert >\tau\right]\\
    &=\max_{ \vph\,:\,\{0,1\}^n\to[-1,1] }\Pr_{V\sim\mu_{\rm AII}}\left[\left\lvert   \sbraket{0}{V\ad \Phi V| 0}  -\overline{\vph} \right\rvert >\tau\right]\label{eq:d19aii}
\end{align}
Now, by Lemma~\ref{lem:aii1}, 
\begin{equation}
    \expect_{V\sim \mu_{\rm AII}} V\ad \Phi V = \frac{\tr[\Phi]\id + J \Phi^\mst J}{d-1},
\end{equation}
so that the average value of the quantity inside the absolute value on the right hand side of Eq.~\eqref{eq:d19aii} is bounded as
\begin{equation}
    -\frac{2}{d-1}\le \expect_{V\sim\mu_{\rm AII}} \sbraket{0}{V\ad \Phi V| 0}  -\overline{\vph} \le \frac{2}{d-1}
\end{equation}
where we have recalled that $|\varphi(\x)|\le 1$ for all $\x$. It then follows from Lemmas~\ref{lem:cms} and~\ref{lem:cmexp} that, for $\tau>2/(d-1)$, 
\begin{equation}
    \mff_{\rm AII} \le 2e^{ -(d-2)(\tau-2/(d-1))^2/384}
\end{equation}
\end{proof}

\noindent
Combining the results of Lemmas~\ref{lem:aiitvd} and~\ref{lem:aiif} immediately yields Theorem~\ref{thm:aii}.

 \subsection{DIII}
\noindent
The goal of this subsection is to prove Theorem~\ref{thm:diii}:
\begin{restatable}{thm}{thmdiii}\label{thm:diii}
    Let $\tau>2/(d-1),\,\varepsilon\le \sqrt{2/(\pi e)}-10/d-\tau$. Set $\xi=\sqrt{2/(\pi e)}-10/d-\varepsilon-\tau$. 
    Any algorithm that succeeds in $\varepsilon$-learning a $\beta$ fraction
of the output distributions of quantum circuits sampled uniformly from $\mu_{\rm DIII}$ requires at least $q$ many $\tau$-accurate statistical queries, with
\begin{equation}\label{eq:diiim}
    q+1\ge \frac{\beta-2\exp\left[-(d-2)\xi^2/96\right]}{2\exp[-(d-2)(\tau-2/(d-1))^2/384]}
\end{equation}
\end{restatable}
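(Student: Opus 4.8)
The plan is to proceed exactly as in the AI and AII cases: prove a DIII analogue of Lemma~\ref{lem:aitvd} (a concentration bound for ${\rm d}_{\rm TV}(P_S,\mcc)$ around $\sqrt{2/(\pi e)}$ with an $O(1/d)$-wide band), prove a DIII analogue of Lemma~\ref{lem:aif} (namely $\mff_{\rm DIII}\le 2e^{-(d-2)(\tau-2/(d-1))^2/384}$), and then substitute both into Lemma~1 of Ref.~\cite{nietner2023average} with $Q=\mcc$. The $\mff_{\rm DIII}$ bound is the easy half and is essentially a copy of Lemma~\ref{lem:aiif}: writing $\Phi=\sum_x\varphi(x)\ketbra{x}$ and $\overline{\varphi}=\tr[\Phi]/d$, the mean of $\langle 0|S^\dagger\Phi S|0\rangle=\sum_{x}\varphi(x)P_S(x)$ can be read off from the per-outcome expectations $\expect_{S}P_S(x)$ computed in the TVD lemma below (equivalently, from Lemma~\ref{lem:diii1} together with the inverse-invariance of $\mu_{\rm DIII}$, as in the AII case), which gives $|\expect_{S\sim\mu_{\rm DIII}}\langle 0|S^\dagger\Phi S|0\rangle-\overline{\varphi}|\le 2/(d-1)$; since $S\mapsto\langle 0|S^\dagger\Phi S|0\rangle$ is $2\|\Phi\|_\infty\le 2$-Lipschitz (Lemma~\ref{lem:cmexp}), Lemma~\ref{lem:cms} (with parent group $\mbo(d)$) then yields the claimed bound for every $\tau>2/(d-1)$.

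For the TVD lemma I would sample $S=JV^\mst JV$ with $V\sim\mu_{\mbo(d)}$ and compute the law of $P_S(x)=|\langle x|S|0\rangle|^2$. Writing $v_0,\dots,v_{d-1}$ for the (real, orthonormal) columns of $V$, expanding $Jv_0$ in this basis, and using that the canonical symplectic form sends each $\ket{j}$ to $\pm\ket{j'}$ for the $J$-partner $j'$, one obtains $\langle x|S|0\rangle=\pm\,v_{x'}\cdot(Jv_0)$, i.e.
\begin{equation}
  P_S(x)=\bigl|v_{x'}\cdot Jv_0\bigr|^2 .
\end{equation}
Two features of this expression do all the work. First, $Jv_0\perp v_0$ by skew-symmetry of $J$, so the single outcome $x=0'$ (and $0'\neq 0$ since $J^2=-\id$) has $P_S(0')=0$ identically. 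Second, for every other $x$, conditioning on $v_0$ makes $v_{x'}$ a uniformly random unit vector in $v_0^{\perp}\cong\mathbb{R}^{d-1}$ while $Jv_0$ is a fixed unit vector of that subspace, so $P_S(x)$ is the squared overlap of a random unit vector with a fixed direction in $d-1$ real dimensions, i.e.\ $P_S(x)\sim{\rm Beta}\bigl(\tfrac12,\tfrac{d-2}{2}\bigr)$, independently of $v_0$. Hence, by linearity of expectation,
\begin{equation}
  \expect_{S\sim\mu_{\rm DIII}}{\rm d}_{\rm TV}(P_S,\mcc)=\frac12\left[\frac1d+(d-1)\,\expect_{W\sim{\rm Beta}(1/2,(d-2)/2)}\Bigl|W-\frac1d\Bigr|\right].
\end{equation}

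Since $\expect W=1/(d-1)$, replacing $1/d$ by $1/(d-1)$ inside the absolute value costs at most $1/d$ in total, and $\tfrac12(d-1)\expect_W|W-1/(d-1)|=\tfrac12\expect_W|(d-1)W-1|$ is precisely the $d\mapsto d-1$ instance of the orthogonal-group quantity evaluated exactly in Section~\ref{sec:uoexact}, equal to $\frac{2\Gamma((d-1)/2)}{\sqrt{\pi(d-1)}\,\Gamma((d-2)/2)}\bigl(1-\tfrac1{d-1}\bigr)^{(d-2)/2}$, which converges to $\sqrt{2/(\pi e)}$. Collecting the three $O(1/d)$ discrepancies --- the isolated $1/d$ term, the $1/d$-vs-$1/(d-1)$ shift, and a Stirling estimate of the $\Gamma$-ratio times the $(1-1/(d-1))^{(d-2)/2}$ factor --- shows that $\expect_{S\sim\mu_{\rm DIII}}{\rm d}_{\rm TV}(P_S,\mcc)\in\bigl[\sqrt{2/(\pi e)}-10/d,\ \sqrt{2/(\pi e)}+10/d\bigr]$; I would isolate this elementary arithmetic as an appendix lemma in the style of Lemmas~\ref{lem:hga} and~\ref{lem:hgaii}. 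Finally, ${\rm d}_{\rm TV}(P_\cdot,\mcc)$ is $1$-Lipschitz on $\mbo(d)$ (Lemma~22 of Ref.~\cite{nietner2023average}, Lemma~1 of Ref.~\cite{shaya2025complexity}), so Lemma~\ref{lem:cms} yields $\Pr_{S\sim\mu_{\rm DIII}}\bigl[\,\bigl|{\rm d}_{\rm TV}(P_S,\mcc)-\sqrt{2/(\pi e)}\bigr|\ge\xi+10/d\,\bigr]\le 2e^{-(d-2)\xi^2/96}$. Substituting this and the $\mff_{\rm DIII}$ bound into Lemma~1 of Ref.~\cite{nietner2023average}, and using the hypothesis $\varepsilon\le\sqrt{2/(\pi e)}-10/d-\tau$ to place the $(\varepsilon+\tau)$-ball around $\mcc$ inside the deviation event, yields Eq.~\eqref{eq:diiim}.

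The main obstacle is the distributional identification in the TVD lemma: getting $P_S(x)=|v_{x'}\cdot Jv_0|^2$ right, noticing that the $J$-partner of $0$ contributes an identically-vanishing outcome while the remaining $d-1$ outcomes behave like the Born probabilities of a Haar-random \emph{real} state in dimension $d-1$, and then handling the $1/d$-versus-$1/(d-1)$ normalization mismatch cleanly. Everything downstream --- the $\mff_{\rm DIII}$ estimate, the concentration step, and the assembly via Lemma~1 of Ref.~\cite{nietner2023average} --- is a direct transcription of the AII argument, and the $O(1/d)$ finite-size bound, though somewhat tedious, is routine given the exact orthogonal-case formula already available from Section~\ref{sec:uoexact}.
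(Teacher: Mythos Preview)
Your proposal is correct and follows essentially the same route as the paper: the distributional identification $|S_{x0}|^2\sim\mathrm{Beta}\bigl(\tfrac12,\tfrac{d-2}{2}\bigr)$ for $x\neq 0'$ (with the single identically-vanishing outcome at $x=0'$), the $1$-Lipschitz concentration via Lemma~\ref{lem:cms}, the $\mff_{\rm DIII}$ bound via Lemma~\ref{lem:diii1} together with Lemma~\ref{lem:cmexp}, and the final assembly through Lemma~1 of Ref.~\cite{nietner2023average} are all exactly as in the paper's Lemmas~\ref{lem:diiitvd} and~\ref{lem:diiif}.

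The only tactical difference is in how the mean TVD is estimated. The paper integrates $\expect_{W\sim\mathrm{Beta}(1/2,(d-2)/2)}|dW-1|$ directly, obtaining a closed form involving the incomplete Beta function $B_{1-1/d}\bigl(\tfrac{d-2}{2},\tfrac32\bigr)$, which is then bounded in Lemma~\ref{lem:diiibound} using Wendel's inequality. You instead shift $1/d\to 1/(d-1)$ inside the absolute value (at an $O(1/d)$ cost) and recognise $\tfrac12\expect_W|(d-1)W-1|$ as the $d\mapsto d-1$ instance of the exact orthogonal formula Eq.~\eqref{eq:oexact}. Your route is more economical in that it reuses Section~\ref{sec:uoexact} rather than performing a fresh integration, while the paper's yields a self-contained exact expression for the DIII mean; both still require comparable (routine) $O(1/d)$ bookkeeping to land in the stated $\pm 10/d$ window.
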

The overall structure will be essentially identical to that of the AI and AII cases, with just some minor differences emerging in the calculations. We begin with:

\begin{restatable}{lem}{lemdiiitvd}\label{lem:diiitvd}
\begin{equation}
    \Pr_{U\sim\mu_{\rm DIII}}\left[ \abs{{\rm d}_{\rm TV}(P_U,\mcc)-\sqrt\frac{2}{\pi e}}\ge\xi+\frac{10}{d}\right]\le 2e^{-(d-2)\xi^2/96}.
\end{equation}
\end{restatable}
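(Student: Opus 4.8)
The plan is to follow the template of Lemmas~\ref{lem:aitvd} and~\ref{lem:aiitvd}: compute the mean $\expect_{U\sim\mu_{\rm DIII}}{\rm d}_{\rm TV}(P_U,\mcc)$ exactly, sandwich it in a $\pm 10/d$ window about $\sqrt{2/(\pi e)}$, and then invoke Lipschitz concentration.

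First I would pin down the law of the individual Born probabilities. Writing $S = JO^\mst J O$ with $O\sim\mu_{\mbo(d)}$ ($d$ even) and denoting the columns of $O$ by $o_k$, one obtains (up to an irrelevant sign, exactly as in the AII computation) $S_{x0} = \pm\, o_{x'}^\mst J o_0$, where $x'$ is the $J$-partner of $x$. Since $J$ is orthogonal and antisymmetric, $Jo_0$ is a unit vector lying in $o_0^\perp$; in particular, when $x = 0'$ the skew-symmetry of $J$ forces $S_{0'0} = o_0^\mst J o_0 = 0$. For $x\neq 0'$ I would condition on $o_0$: the remaining columns of $O$ form a uniformly random orthonormal basis of $o_0^\perp\cong\mathbb R^{d-1}$, so $o_{x'}$ is a uniform unit vector there and $(o_{x'}^\mst J o_0)^2 = \langle o_{x'}, Jo_0\rangle^2$ is the squared projection of a Haar-random unit vector of $\mathbb R^{d-1}$ onto a fixed direction, hence ${\rm Beta}(\tfrac12,\tfrac{d-2}{2})$-distributed (this is the real-variable analogue of the fact, recalled in Section~\ref{sec:samp}, that $|O_{ij}|^2\sim{\rm Beta}(\tfrac12,\tfrac{d-1}{2})$). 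As this conditional law does not depend on $o_0$, it is also the unconditional one, and we conclude that $|S_{x0}|^2 = 0$ for $x = 0'$ while $|S_{x0}|^2\sim{\rm Beta}(\tfrac12,\tfrac{d-2}{2})$ for each of the other $d-1$ values of $x$.

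Summing over the $d$ outcomes then gives, by linearity,
\begin{equation}
\expect_{U\sim\mu_{\rm DIII}}{\rm d}_{\rm TV}(P_U,\mcc) = \frac{1}{2d}\left(1 + (d-1)\,\expect_{W\sim{\rm Beta}(\frac12,\frac{d-2}{2})}\lvert dW-1\rvert\right),
\end{equation}
and I would evaluate the one-dimensional expectation by splitting the integral at $w = 1/d$, using $\expect W = 1/(d-1)$, and expressing the two resulting pieces as regularized incomplete beta functions (equivalently, as values of ${}_2F_1$, as in Lemma~\ref{lem:aitvd}). The remaining task is to show that this exact expression lies in $[\sqrt{2/(\pi e)} - 10/d,\ \sqrt{2/(\pi e)} + 10/d]$ for all (even) $d\ge 4$. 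The $d\to\infty$ limit itself falls out of the substitution $w = t/d$, for which $(1-t/d)^{(d-4)/2}\to e^{-t/2}$ and $B(\tfrac12,\tfrac{d-2}{2}) = \sqrt{2\pi/d}\,\bigl(1+O(1/d)\bigr)$, so that the leading contribution reduces to $\tfrac{2}{\sqrt{2\pi}}\int_0^1(1-t)t^{-1/2}e^{-t/2}\,{\rm d}t = \tfrac{2}{\sqrt{2\pi}}\cdot\tfrac{2}{\sqrt e} = \sqrt{2/(\pi e)}$, where the integral is evaluated via $\int_1^\infty(t-1)t^{-1/2}e^{-t/2}\,{\rm d}t = 2e^{-1/2}$ (one integration by parts). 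As with Lemmas~\ref{lem:hga} and~\ref{lem:hgaii}, I would relegate the routine but tedious finite-$d$ error bookkeeping to a separate computational lemma.

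Finally, $f:\mbo(d)\to\mbr$, $f(O) = {\rm d}_{\rm TV}(P_O,\mcc)$, is $1$-Lipschitz (Lemma~22 of Ref.~\cite{nietner2023average} and Lemma~1 of Ref.~\cite{shaya2025complexity}), so Lemma~\ref{lem:cms} applied to the symmetric space DIII (involution $\Ad_J$) yields $\Pr_{U\sim\mu_{\rm DIII}}\bigl[\lvert{\rm d}_{\rm TV}(P_U,\mcc) - \langle f\rangle\rvert\ge\xi\bigr]\le 2e^{-(d-2)\xi^2/96}$; combining this with $\lvert\langle f\rangle - \sqrt{2/(\pi e)}\rvert\le 10/d$ via the triangle inequality gives the claim. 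The main obstacle is the third step: unlike the AI and AII cases, where the mean collapsed to the clean power $(1-1/d)^{d}$ (respectively $(1-1/d)^{d-1}$), here it is a combination of incomplete beta functions, and extracting an explicit $\pm 10/d$ envelope around its limit requires some care with the subleading corrections; everything else is a direct transcription of the AII argument.
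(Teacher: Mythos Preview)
Your proposal is correct and follows essentially the same route as the paper: the same $J$-partner analysis giving $|S_{x0}|^2\sim{\rm Beta}(\tfrac12,\tfrac{d-2}{2})$ for $x\neq 0'$, the same splitting of the beta integral (the paper writes the result via the incomplete beta function and defers the $\pm 10/d$ envelope to a separate lemma using Wendel's inequality), and the same concentration step via Lemma~\ref{lem:cms}. One small arithmetic slip in your asymptotic sketch: $\tfrac{2}{\sqrt{2\pi}}\cdot\tfrac{2}{\sqrt e}=2\sqrt{2/(\pi e)}$, not $\sqrt{2/(\pi e)}$---the correct prefactor is $\tfrac{1}{\sqrt{2\pi}}$---but since you defer the rigorous finite-$d$ bound to a separate lemma anyway this does not affect the argument.
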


\begin{proof}
Consider $S=JV^{\mathsf T}JV$, a random matrix from $\mu_{\rm DIII}$. We denote the columns of $V\sim\mu_{\mbo(d)}$ by $v_0,\dots,v_{d-1}$ and recall that  for each index $x$ there is a partner $x'$ with $Je_x=\pm e_{x'}$; we then have (up to an irrelevant sign) 
\begin{equation}
S_{x0}=v_{x'}^{\mathsf T}J v_0.
\end{equation}
Note that by skew-symmetry $S_{0'0}=0$.
If $x\neq 0'$, then (conditioning on $v_0$) the column $v_{x'}$ is uniform on the real unit sphere in $v_0^\perp\cong\mathbb R^{d-1}$. Because $J$ is orthogonal and $v_0^{\mathsf T} J v_0=0$, we have $Jv_0\in v_0^\perp$ and $\|Jv_0\|=1$, so $v_{x'}^{\mathsf T} J v_0$
is the first coordinate of a uniform vector on $S^{d-2}$ in $\mathbb R^{d-1}$, whence $  \abs{S_{x0}}^2 \sim \mathrm{Beta}\Bigl(\frac12,\frac{d-2}{2}\Bigr)$ for $ (x\neq 0')$.\\

\noindent
Now, with $X\sim\mathrm{Beta}(\frac12,\frac{d-2}{2})$ and $B_z(a,b)=\int_0^z u^{a-1}(1-u)^{b-1}{\rm d}u$ the \textit{incomplete Beta function}, symbolic integration yields
\begin{align}
\expect_{X\sim\mathrm{Beta}(\frac12,\frac{d-2}{2})}\bigl|dX-1\bigr|
&=\int_0^{1/d}(1-dx)\left(\frac{x^{-1/2}(1-x)^{\frac{d-4}{2}}}{B\left(\frac12,\frac{d-2}{2}\right)}\right){\rm d}x+\int_{1/d}^1(dx-1)\left(\frac{x^{-1/2}(1-x)^{\frac{d-4}{2}}}{B\left(\frac12,\frac{d-2}{2}\right)}\right){\rm d}x\\
&=-\frac{1}{d-1}+\frac{2}{B(\frac{d-2}{2},\frac12)}\left(\frac{2}{\sqrt d}\left(1-\frac{1}{d}\right)^{\frac{d-2}{2}} + B_{1-1/d}\left(\frac{d-2}{2},\frac{3}{2}\right)\right)
\end{align}
so that
\begin{align}
    \expect_{U\sim\mu_{\rm DIII}}\,{\rm d}_{\rm TV }(P_U,\mcc) &=\frac{1}{2d}\left(1 + (d-1)\left(-\frac{1}{d-1}+\frac{2}{B(\frac{d-2}{2},\frac12)}\left(\frac{2}{\sqrt d}\left(1-\frac{1}{d}\right)^{\frac{d-2}{2}} + B_{1-1/d}\left(\frac{d-2}{2},\frac{3}{2}\right)\right)\right)\right)\\
    &=\frac{(d-1)}{dB(\frac{d-2}{2},\frac12)} \left( \frac{2}{\sqrt d}\left(1-\frac{1}{d}\right)^{\frac{d-2}{2}} + B_{1-1/d}\left(\frac{d-2}{2},\frac{3}{2}\right)\right) \label{eq:e156} \\
    &\in \left[ \sqrt{\frac{2}{\pi e}} - \frac{10}{d}, \sqrt\frac{2}{\pi e} + \frac5d  \right]
\end{align}
where the detailed justification of the final line comes in Lemma~\ref{lem:diiibound}. 
We conclude that
\begin{equation}
    \Pr_{U\sim\mu_{\rm DIII}}\left[ \abs{{\rm d}_{\rm TV}(P_U,\mcc)-\sqrt\frac{2}{\pi e}}\ge\xi+\frac{10}{d}\right]\le 2e^{-(d-2)\xi^2/96}.
\end{equation}

\end{proof}

\noindent
Next we have the calculation of Eq.~\eqref{eq:f}; in the DIII case we find
\begin{restatable}{lem}{lemdiiif}\label{lem:diiif}
    \begin{equation}
    \mff_{\rm DIII} \le 2e^{ -(d-2)(\tau-2/(d-1))^2/384}
\end{equation}
\end{restatable}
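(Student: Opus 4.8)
The plan is to follow the template already used for Lemmas~\ref{lem:aif} and~\ref{lem:aiif}, simply substituting the DIII first-moment formula (Lemma~\ref{lem:diii1}) for its AI/AII analogues. First I would put $\Phi = \sum_{\x\in\{0,1\}^n}\varphi(\x)\ketbra{\x}$ and $\overline{\varphi} = (1/d)\tr\Phi$, and rewrite the defining supremum as $\mff_{\rm DIII} = \max_{\varphi}\Pr_{S\sim\mu_{\rm DIII}}[\,\lvert\sbraket{0}{S\ad \Phi S| 0} - \overline{\varphi}\rvert > \tau\,]$, exactly as in Eq.~\eqref{eq:d19aii}.

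The first real step is to pin down the mean. Using cyclicity of the trace, $\expect_{S\sim\mu_{\rm DIII}}\sbraket{0}{S\ad \Phi S| 0} = \tr[\Phi\,\expect_{S}(S\ketbra{0}S\ad)]$, and Lemma~\ref{lem:diii1} applied with $A = \ketbra{0}$ gives $\expect_{S} S\ketbra{0}S\ad = (\id + J\ketbra{0}J)/(d-1)$ (using that $\ket{0}$ has real entries, so $\ketbra{0}^\mst = \ketbra{0}$). Since $J\ket{0} = \pm\ket{0'}$ for the $J$-partner index $0'$, and $\Phi$ is diagonal, this yields $\expect_{S}\sbraket{0}{S\ad \Phi S| 0} = (d\overline{\varphi} + \varphi(0'))/(d-1)$, hence $\expect_{S}\sbraket{0}{S\ad \Phi S| 0} - \overline{\varphi} = (\overline{\varphi} + \varphi(0'))/(d-1) \in [-2/(d-1),\,2/(d-1)]$, because $\lvert\overline{\varphi}\rvert\le 1$ and $\lvert\varphi(0')\rvert\le 1$. (I prefer this trace-cyclicity route over invoking inversion-invariance of $\mu_{\rm DIII}$, so that Lemma~\ref{lem:diii1} can be used verbatim.)

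The second step is concentration. The map $f:U\mapsto\sbraket{0}{U\ad \Phi U| 0}$ is $2\|\Phi\|_\infty$-Lipschitz by Lemma~\ref{lem:cmexp}, and $\|\Phi\|_\infty\le 1$ since $\lvert\varphi(\x)\rvert\le 1$; so Lemma~\ref{lem:cms}, applied to the symmetric space DIII (whose parent group is $\mbso(d)$, with concentration parameter $d-2$), gives $\Pr_{S\sim\mu_{\rm DIII}}[\,\lvert f(S) - \expval{f}\rvert \ge t\,] \le 2e^{-(d-2)t^2/384}$. A triangle inequality then shows that, for $\tau > 2/(d-1)$, the event $\lvert f(S) - \overline{\varphi}\rvert > \tau$ forces $\lvert f(S) - \expval{f}\rvert \ge \tau - 2/(d-1)$; setting $t = \tau - 2/(d-1)$ yields $\mff_{\rm DIII} \le 2e^{-(d-2)(\tau - 2/(d-1))^2/384}$, uniformly in $\varphi$, which is the claim.

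I do not expect a genuine obstacle: the argument is structurally identical to the AI and AII cases. The only DIII-specific wrinkle is the appearance of the partner index $0'$ in the mean — arising from the $JA^\mst J$ term in Lemma~\ref{lem:diii1} rather than a bare $A^\mst$ — but $\varphi(0')$ still lies in $[-1,1]$ and is therefore absorbed into precisely the same $2/(d-1)$ shift as in the AII case, leaving the final bound unchanged.
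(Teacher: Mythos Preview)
Your proof is correct and follows essentially the same route as the paper's; the only cosmetic difference is that you invoke Lemma~\ref{lem:diii1} via trace-cyclicity with $A=\ketbra{0}$ rather than applying it (implicitly via inversion-invariance) to $V^\dagger\Phi V$ directly. One inconsequential slip: since $J$ is skew-symmetric one has $J\ketbra{0}J=-\ketbra{0'}$, so the mean is actually $(d\overline{\varphi}-\varphi(0'))/(d-1)$ rather than with a plus sign --- but your $2/(d-1)$ bound and everything downstream are unaffected.
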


\begin{proof}
With   $\Phi = \sum_{\boldsymbol{x}\in\{0,1\}^n}\varphi(\x)\ketbra{\x}$ and $\overline{\varphi} = (1/d) \tr\Phi$ we have:
\begin{align}
    \mff_{\rm DIII}&=\max_{ \vph\,:\,\{0,1\}^n\to[-1,1] }\Pr_{V\sim\mu_{\rm DIII}}\left[\left\lvert  \sum_{\boldsymbol{x}\in\{0,1\}^n} \varphi(\x) \lvert\sbraket{\x}{V | 0}\rvert^2 -\overline{\vph} \right\rvert >\tau\right]\\
    &=\max_{ \vph\,:\,\{0,1\}^n\to[-1,1] }\Pr_{V\sim\mu_{\rm DIII}}\left[\left\lvert   \sbraket{0}{V\ad \Phi V| 0}  -\overline{\vph} \right\rvert >\tau\right]\label{eq:d50}
\end{align}
Now, by Lemma~\ref{lem:diii1}, 
\begin{equation}
    \expect_{V\sim \mu_{\rm DIII}} V\ad \Phi V = \frac{\tr[\Phi]\id + J \Phi^\mst J}{d-1},
\end{equation}
so that the average value of the quantity inside the absolute value on the right hand side of Eq.~\eqref{eq:d50} is bounded as
\begin{equation}
    -\frac{2}{d-1}\le \expect_{V\sim\mu_{\rm DIII}} \sbraket{0}{V\ad \Phi V| 0}  -\overline{\vph} \le \frac{2}{d-1}
\end{equation}
where we have recalled that $|\varphi(\x)|\le 1$ for all $\x$. It then follows from Lemmas~\ref{lem:cms} and~\ref{lem:cmexp} that, for $\tau>2/(d-1)$, 
\begin{equation}
    \mff_{\rm DIII} \le 2e^{ -(d-2)(\tau-2/(d-1))^2/384}
\end{equation}
\end{proof}

\noindent
Combining the results of Lemmas~\ref{lem:diiitvd} and~\ref{lem:diiif} immediately yields Theorem~\ref{thm:diii}; combining Theorems~\ref{thm:ai},~\ref{thm:aii} and~\ref{thm:diii} yields Theorem~\ref{thm:main}.

\section{Discussion}~\label{sec:disc}
\vspace{2mm}

In this work we have continued the recent trend~\cite{nietner2023average,nietner2023free,shaya2025complexity} of establishing the hardness of learning within the statistical query model Born distributions of states resulting from various natural unitary ensembles. As in those works, we have found learning our target ensembles to be intractable, further highlighting the difficulty of the general problem. Similarly to Refs.~\cite{nietner2023average,shaya2025complexity} we have primarily relied upon a concentration of measure argument stemming from Levy's Lemma, albeit combined with a slightly atypical approach to carrying out the corresponding integrations; in particular, our conclusions have turned out to be reasonably insensitive the differences between our considered ensembles.\\

We note that although the conclusions of Refs.~\cite{nietner2023average,nietner2023free,shaya2025complexity} all mirror closely those of this work, the techniques employed in Ref.~\cite{nietner2023free} to establish the difficulty of learning free-fermionic (matchgate) distributions are very different from those employed in the other three papers. Indeed, their proof proceeds by a reduction to a parity-learning problem which is known to be hard. 
To see the correspondence between the matchgate case and our DIII analysis, we recall that a pure $d$-mode fermionic Gaussian state may be equivalently parametrised as $V=O^\mst JO$, where $O\in\mbso(2d)$ and $J$ is a (reference) \textit{complex structure}~\cite{hackl2021bosonic}, or as a matchgate circuit acting on a reference state $\ket 0$. In either case there is a redundancy of description given by a stabiliser subgroup isomorphic to $\mbu(n)$; for example in the former case any $V'$ related to $V$ by the transformation $O\mapsto O'O$, where $O'$ is an orthogonal matrix which commutes with $J$, evidently leaves the Gaussian state unchanged. As (up to the unimportant linear action of another factor of $J$), $O^\mst JO$ was exactly the quantity we considered in our DIII analysis, we see therefore the correspondence with the learning free fermionic states problem of Ref.~\cite{nietner2023free}; our result is then an (average-case) analogue of their result which relies only on carrying out conceptually straightforward (if admittedly somewhat tedious) calculations. 

\section{Acknowledgments}

We gratefully acknowledge useful conversations with Lucas Hackl, Mart\'{i}n Larocca, Antonio Anna Mele and Marco Cerezo. This work was supported by the Laboratory Directed Research and Development program of Los Alamos National Laboratory under project number 20230049DR.  MW acknowledges the support of the Australian government research training program scholarship and the IBM Quantum Hub at the University of Melbourne.

\bibliography{refs,quantum}

% Generated by IEEEtran.bst, version: 1.14 (2015/08/26)
\begin{thebibliography}{10}
\providecommand{\url}[1]{#1}
\csname url@samestyle\endcsname
\providecommand{\newblock}{\relax}
\providecommand{\bibinfo}[2]{#2}
\providecommand{\BIBentrySTDinterwordspacing}{\spaceskip=0pt\relax}
\providecommand{\BIBentryALTinterwordstretchfactor}{4}
\providecommand{\BIBentryALTinterwordspacing}{\spaceskip=\fontdimen2\font plus
\BIBentryALTinterwordstretchfactor\fontdimen3\font minus
  \fontdimen4\font\relax}
\providecommand{\BIBforeignlanguage}[2]{{%
\expandafter\ifx\csname l@#1\endcsname\relax
\typeout{** WARNING: IEEEtran.bst: No hyphenation pattern has been}%
\typeout{** loaded for the language `#1'. Using the pattern for}%
\typeout{** the default language instead.}%
\else
\language=\csname l@#1\endcsname
\fi
#2}}
\providecommand{\BIBdecl}{\relax}
\BIBdecl

\bibitem{elben2022randomized}
\BIBentryALTinterwordspacing
A.~Elben, S.~T. Flammia, H.-Y. Huang, R.~Kueng, J.~Preskill, B.~Vermersch, and
  P.~Zoller, ``The randomized measurement toolbox,'' \emph{Nature Review
  Physics}, 2022. [Online]. Available:
  \url{https://www.nature.com/articles/s42254-022-00535-2}
\BIBentrySTDinterwordspacing

\bibitem{heinrich2022randomized}
\BIBentryALTinterwordspacing
M.~Heinrich, M.~Kliesch, and I.~Roth, ``Randomized benchmarking with random
  quantum circuits,'' \emph{arXiv preprint arXiv:2212.06181}, 2022. [Online].
  Available: \url{https://arxiv.org/abs/2212.06181}
\BIBentrySTDinterwordspacing

\bibitem{huang2020predicting}
\BIBentryALTinterwordspacing
H.-Y. Huang, R.~Kueng, and J.~Preskill, ``Predicting many properties of a
  quantum system from very few measurements,'' \emph{Nature Physics}, vol.~16,
  no.~10, pp. 1050--1057, 2020. [Online]. Available:
  \url{https://www.nature.com/articles/s41567-020-0932-7}
\BIBentrySTDinterwordspacing

\bibitem{fisher2023random}
\BIBentryALTinterwordspacing
M.~P. Fisher, V.~Khemani, A.~Nahum, and S.~Vijay, ``Random quantum circuits,''
  \emph{Annual Review of Condensed Matter Physics}, vol.~14, pp. 335--379,
  2023. [Online]. Available:
  \url{https://www.annualreviews.org/doi/10.1146/annurev-conmatphys-031720-030658}
\BIBentrySTDinterwordspacing

\bibitem{khatri2020random}
\BIBentryALTinterwordspacing
S.~Khatri, ``Random pure states,'' \emph{Unpublished}, 2020. [Online].
  Available:
  \url{https://sumeetkhatri.com/wp-content/uploads/2020/05/random_pure_states.pdf}
\BIBentrySTDinterwordspacing

\bibitem{nietner2023average}
\BIBentryALTinterwordspacing
A.~Nietner, M.~Ioannou, R.~Sweke, R.~Kueng, J.~Eisert, M.~Hinsche, and
  J.~Haferkamp, ``On the average-case complexity of learning output
  distributions of quantum circuits,'' \emph{arXiv preprint arXiv:2305.05765},
  2023. [Online]. Available: \url{https://arxiv.org/abs/2305.05765}
\BIBentrySTDinterwordspacing

\bibitem{schuster2024random}
\BIBentryALTinterwordspacing
T.~Schuster, J.~Haferkamp, and H.-Y. Huang, ``Random unitaries in extremely low
  depth,'' \emph{arXiv preprint arXiv:2407.07754}, 2024. [Online]. Available:
  \url{https://arxiv.org/abs/2407.07754}
\BIBentrySTDinterwordspacing

\bibitem{ragone2023unified}
\BIBentryALTinterwordspacing
M.~Ragone, B.~N. Bakalov, F.~Sauvage, A.~F. Kemper, C.~Ortiz~Marrero,
  M.~Larocca, and M.~Cerezo, ``A {L}ie algebraic theory of barren plateaus for
  deep parameterized quantum circuits,'' \emph{Nature Communications}, vol.~15,
  no.~1, p. 7172, 2024. [Online]. Available:
  \url{https://www.nature.com/articles/s41467-024-49909-3#citeas}
\BIBentrySTDinterwordspacing

\bibitem{dyson1962statistical}
F.~J. Dyson, ``Statistical theory of the energy levels of complex systems. i,''
  \emph{Journal of Mathematical Physics}, vol.~3, no.~1, pp. 140--156, 1962.

\bibitem{dyson1962threefold}
\BIBentryALTinterwordspacing
------, ``The threefold way. algebraic structure of symmetry groups and
  ensembles in quantum mechanics,'' \emph{Journal of Mathematical Physics},
  vol.~3, no.~6, pp. 1199--1215, 1962. [Online]. Available:
  \url{https://pubs.aip.org/aip/jmp/article-abstract/3/6/1199/228285/The-Threefold-Way-Algebraic-Structure-of-Symmetry}
\BIBentrySTDinterwordspacing

\bibitem{zirnbauer2010symmetry}
\BIBentryALTinterwordspacing
M.~R. Zirnbauer, ``Symmetry classes,'' \emph{arXiv preprint arXiv:1001.0722},
  2010. [Online]. Available: \url{https://arxiv.org/abs/1001.0722}
\BIBentrySTDinterwordspacing

\bibitem{heinzner2005symmetry}
P.~Heinzner, A.~Huckleberry, and M.~R. Zirnbauer, ``Symmetry classes of
  disordered fermions,'' \emph{Communications in mathematical physics}, vol.
  257, no.~3, pp. 725--771, 2005.

\bibitem{killip2004matrix}
R.~Killip and I.~Nenciu, ``Matrix models for circular ensembles,''
  \emph{International Mathematics Research Notices}, vol. 2004, no.~50, pp.
  2665--2701, 2004.

\bibitem{hashagen2018real}
\BIBentryALTinterwordspacing
A.~Hashagen, S.~Flammia, D.~Gross, and J.~Wallman, ``Real randomized
  benchmarking,'' \emph{Quantum}, vol.~2, p.~85, 2018. [Online]. Available:
  \url{https://quantum-journal.org/papers/q-2018-08-22-85/}
\BIBentrySTDinterwordspacing

\bibitem{west2025real}
\BIBentryALTinterwordspacing
M.~West, A.~A. Mele, M.~Larocca, and M.~Cerezo, ``Real classical shadows,''
  \emph{Journal of Physics A: Mathematical and Theoretical}, vol.~58, no.~24,
  p. 245304, 2025. [Online]. Available:
  \url{https://iopscience.iop.org/article/10.1088/1751-8121/addfc2}
\BIBentrySTDinterwordspacing

\bibitem{zhao2021fermionic}
\BIBentryALTinterwordspacing
A.~Zhao, N.~C. Rubin, and A.~Miyake, ``Fermionic partial tomography via
  classical shadows,'' \emph{Physical Review Letters}, vol. 127, no.~11, p.
  110504, 2021. [Online]. Available:
  \url{https://link.aps.org/doi/10.1103/PhysRevLett.127.110504}
\BIBentrySTDinterwordspacing

\bibitem{hearth2024efficient}
S.~N. Hearth, M.~O. Flynn, A.~Chandran, and C.~R. Laumann, ``Efficient local
  classical shadow tomography with number conservation,'' \emph{Physical Review
  Letters}, vol. 133, no.~6, p. 060802, 2024.

\bibitem{wan2022matchgate}
\BIBentryALTinterwordspacing
K.~Wan, W.~J. Huggins, J.~Lee, and R.~Babbush, ``Matchgate shadows for
  fermionic quantum simulation,'' \emph{Communications in Mathematical
  Physics}, vol. 404, p. 629, 2023. [Online]. Available:
  \url{https://link.springer.com/article/10.1007/s00220-023-04844-0}
\BIBentrySTDinterwordspacing

\bibitem{helsen2022matchgate}
\BIBentryALTinterwordspacing
J.~Helsen, S.~Nezami, M.~Reagor, and M.~Walter, ``Matchgate benchmarking:
  Scalable benchmarking of a continuous family of many-qubit gates,''
  \emph{Quantum}, vol.~6, p. 657, 2022. [Online]. Available:
  \url{https://quantum-journal.org/papers/q-2022-02-21-657/}
\BIBentrySTDinterwordspacing

\bibitem{west2024random}
\BIBentryALTinterwordspacing
M.~West, A.~A. Mele, M.~Larocca, and M.~Cerezo, ``Random ensembles of
  symplectic and unitary states are indistinguishable,'' \emph{arXiv preprint
  arXiv:2409.16500}, 2024. [Online]. Available:
  \url{https://arxiv.org/abs/2409.16500}
\BIBentrySTDinterwordspacing

\bibitem{gross2021schur}
\BIBentryALTinterwordspacing
D.~Gross, S.~Nezami, and M.~Walter, ``Schur--weyl duality for the clifford
  group with applications: Property testing, a robust hudson theorem, and de
  finetti representations,'' \emph{Communications in Mathematical Physics},
  vol. 385, no.~3, pp. 1325--1393, 2021. [Online]. Available:
  \url{https://link.springer.com/article/10.1007%2Fs00220-021-04118-7}
\BIBentrySTDinterwordspacing

\bibitem{west2025no}
\BIBentryALTinterwordspacing
M.~West, D.~Garc{\'\i}a-Mart{\'\i}n, N.~Diaz, M.~Cerezo, and M.~Larocca,
  ``No-go theorems for sublinear-depth group designs,'' \emph{arXiv preprint
  arXiv:2506.16005}, 2025. [Online]. Available:
  \url{https://arxiv.org/abs/2506.16005}
\BIBentrySTDinterwordspacing

\bibitem{grevink2025will}
L.~Grevink, J.~Haferkamp, M.~Heinrich, J.~Helsen, M.~Hinsche, T.~Schuster, and
  Z.~Zimbor{\'a}s, ``Will it glue? on short-depth designs beyond the unitary
  group,'' \emph{arXiv preprint arXiv:2506.23925}, 2025.

\bibitem{shaya2025complexity}
\BIBentryALTinterwordspacing
O.~Shaya, Z.~Holmes, C.~Hirche, and A.~Angrisani, ``On the complexity of
  quantum states and circuits from the orthogonal and symplectic groups,''
  \emph{arXiv preprint arXiv:2509.07573}, 2025. [Online]. Available:
  \url{https://arxiv.org/pdf/2509.07573}
\BIBentrySTDinterwordspacing

\bibitem{bertoni2024shallow}
\BIBentryALTinterwordspacing
C.~Bertoni, J.~Haferkamp, M.~Hinsche, M.~Ioannou, J.~Eisert, and H.~Pashayan,
  ``Shallow shadows: Expectation estimation using low-depth random {C}lifford
  circuits,'' \emph{Physical Review Letters}, vol. 133, no.~2, p. 020602, 2024.
  [Online]. Available:
  \url{https://journals.aps.org/prl/abstract/10.1103/PhysRevLett.133.020602}
\BIBentrySTDinterwordspacing

\bibitem{kearns1998efficient}
\BIBentryALTinterwordspacing
M.~Kearns, ``Efficient noise-tolerant learning from statistical queries,''
  \emph{Journal of the ACM (JACM)}, vol.~45, no.~6, pp. 983--1006, 1998.
  [Online]. Available: \url{https://dl.acm.org/doi/abs/10.1145/293347.293351}
\BIBentrySTDinterwordspacing

\bibitem{nietner2023free}
\BIBentryALTinterwordspacing
A.~Nietner, ``Free fermion distributions are hard to learn,'' \emph{arXiv
  preprint arXiv:2306.04731}, 2023. [Online]. Available:
  \url{https://arxiv.org/abs/2306.04731}
\BIBentrySTDinterwordspacing

\bibitem{hackl2021bosonic}
\BIBentryALTinterwordspacing
L.~Hackl and E.~Bianchi, ``Bosonic and fermionic gaussian states from
  k{\"a}hler structures,'' \emph{SciPost Physics Core}, vol.~4, no.~3, p. 025,
  2021. [Online]. Available:
  \url{https://www.scipost.org/10.21468/SciPostPhysCore.4.3.025?acad_field_slug=mathematics}
\BIBentrySTDinterwordspacing

\bibitem{mele2023introduction}
\BIBentryALTinterwordspacing
A.~A. Mele, ``Introduction to {H}aar measure tools in quantum information: A
  beginner's tutorial,'' \emph{Quantum}, vol.~8, p. 1340, 2024. [Online].
  Available: \url{https://quantum-journal.org/papers/q-2024-05-08-1340/}
\BIBentrySTDinterwordspacing

\bibitem{collins2006integration}
\BIBentryALTinterwordspacing
B.~Collins and P.~{\'S}niady, ``Integration with respect to the {H}aar measure
  on unitary, orthogonal and symplectic group,'' \emph{Communications in
  Mathematical Physics}, vol. 264, no.~3, pp. 773--795, 2006. [Online].
  Available:
  \url{https://link.springer.com/article/10.1007%2Fs00220-006-1554-3}
\BIBentrySTDinterwordspacing

\bibitem{weingarten1978asymptotic}
\BIBentryALTinterwordspacing
D.~Weingarten, ``Asymptotic behavior of group integrals in the limit of
  infinite rank,'' \emph{Journal of Mathematical Physics}, vol.~19, no.~5, pp.
  999--1001, 1978. [Online]. Available:
  \url{https://aip.scitation.org/doi/10.1063/1.523807}
\BIBentrySTDinterwordspacing

\bibitem{collins2022weingarten}
\BIBentryALTinterwordspacing
B.~Collins, S.~Matsumoto, and J.~Novak, ``The weingarten calculus,''
  \emph{Notices Of The American Mathematical Society}, vol.~69, no.~5, p. 734,
  2022. [Online]. Available:
  \url{https://www.ams.org/journals/notices/202205/rnoti-p734.pdf}
\BIBentrySTDinterwordspacing

\bibitem{matsumoto2013weingarten}
\BIBentryALTinterwordspacing
S.~Matsumoto, ``Weingarten calculus for matrix ensembles associated with
  compact symmetric spaces,'' \emph{arXiv preprint arXiv:1301.5401}, 2013.
  [Online]. Available: \url{https://arxiv.org/abs/1301.5401}
\BIBentrySTDinterwordspacing

\bibitem{petz2004asymptotics}
\BIBentryALTinterwordspacing
D.~Petz and J.~R{\'e}ffy, ``On asymptotics of large haar distributed unitary
  matrices,'' \emph{Periodica Mathematica Hungarica}, vol.~49, no.~1, pp.
  103--117, 2004. [Online]. Available:
  \url{https://doi.org/10.1023/B:MAHU.0000040542.56072.ab}
\BIBentrySTDinterwordspacing

\bibitem{meckes2019random}
\BIBentryALTinterwordspacing
E.~S. Meckes, \emph{The Random Matrix Theory of the Classical Compact Groups},
  ser. Cambridge Tracts in Mathematics.\hskip 1em plus 0.5em minus 0.4em\relax
  Cambridge University Press, 2019. [Online]. Available:
  \url{https://case.edu/artsci/math/esmeckes/Haar_book.pdf}
\BIBentrySTDinterwordspacing

\bibitem{cartan1926sur}
E.~Cartan, ``Sur une classe remarquable d'espaces de {Riemann},'' \emph{{Bull.
  Soc. Math. Fr.}}, vol.~54, pp. 214--264, 1926.

\bibitem{wierichs2025recursive}
\BIBentryALTinterwordspacing
D.~Wierichs, M.~West, R.~T. Forestano, M.~Cerezo, and N.~Killoran, ``Recursive
  cartan decompositions for unitary synthesis,'' \emph{arXiv preprint
  arXiv:2503.19014}, 2025. [Online]. Available:
  \url{https://arxiv.org/abs/2503.19014}
\BIBentrySTDinterwordspacing

\end{thebibliography}

\appendix

\section{Further minutiae}\label{sec:fdetails}
\noindent
In this appendix we supply the details of some various technical assertions made throughout the text.
\begin{restatable}{lem}{lemaibound}\label{lem:hga}
\begin{equation}
    \frac{1}{e}-\frac{5}{d} \le \frac{1}{2d}\left(\frac{4(d-1)}{d+1}\left(1-\frac1d\right)^{\frac{d-1}{2}}  +   \frac{(d-1)^{d+1}}{d^{\frac{d+1}{2}}}\;{}_2F_{1}\left(d,\frac{d-1}{2},\,d+1,\,-(d-1)\right)\right) \le \frac{1}{e}+\frac{5}{d}
\end{equation}
\end{restatable}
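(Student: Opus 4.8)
The plan is to bypass the hypergeometric function completely and work instead with the probabilistic picture set up in the proof of Lemma~\ref{lem:aitvd}. There it is shown that the quantity in the statement equals $\expect_{U\sim\mu_{\rm AI}}{\rm d}_{\rm TV}(P_U,\mcc)$, and, unwinding the diagonal/off-diagonal decomposition used there, that
\[
\expect_{U\sim\mu_{\rm AI}}{\rm d}_{\rm TV}(P_U,\mcc)=\frac{1}{2d}\,\expect|dT-1|+\frac{d-1}{2d}\,\expect|dZ-1|,
\]
where $T\sim{\rm Beta}(1,\tfrac{d-1}{2})$ is the law of the diagonal entry $|S_{00}|^2$, and $Z=(1-T)X$, with $X\sim{\rm Beta}(1,d-2)$ independent of $T$, is the law of an off-diagonal $|S_{x0}|^2$. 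So everything reduces to pinning this sum to $1/e$ up to an explicit $O(1/d)$.

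First I would dispose of the diagonal term: from $\expect|dT-1|\le\expect(dT-1)+2\expect(1-dT)^+\le\tfrac{d-1}{d+1}+2<3$ one gets $0<\tfrac{1}{2d}\expect|dT-1|<\tfrac{2}{d}$ (the exact closed form for $\expect|dT-1|$ recorded in the proof of Lemma~\ref{lem:aitvd} tightens this to below $1/d$ if one wants slack). For the off-diagonal term the idea is to discard the harmless factor $1-T$: since $|dZ-1|$ and $|dX-1|$ differ by at most $|dZ-dX|=dXT$ and $X$ is independent of $T$, the quantity $\expect|dZ-1|-\expect|dX-1|$ has modulus at most $d\,\expect[X]\,\expect[T]=\tfrac{2d}{(d-1)(d+1)}$, so $\tfrac{d-1}{2d}\expect|dZ-1|$ differs from $\tfrac{d-1}{2d}\expect|dX-1|$ by at most $\tfrac{1}{d+1}$. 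Now $\expect_{X\sim{\rm Beta}(1,d-2)}|dX-1|=2(1-\tfrac1d)^{d-2}-\tfrac{1}{d-1}$ is precisely the elementary integral already evaluated in the proof of Lemma~\ref{lem:aiitvd} (the same Beta law occurs there), so $\tfrac{d-1}{2d}\expect|dX-1|=(1-\tfrac1d)^{d-1}-\tfrac{1}{2d}$.

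The last step is to expand $(1-\tfrac1d)^{d-1}$ around its limit: from $(d-1)\ln(1-\tfrac1d)=-(d-1)\sum_{k\ge1}(kd^k)^{-1}=-1+\tfrac{1}{2d}+O(d^{-2})$ one finds $(1-\tfrac1d)^{d-1}=e^{-1}(1+O(1/d))$, and since $(1-\tfrac1d)^{d-1}$ decreases monotonically to $1/e$ the error is one-sided, so a trivial check (small $d$ by hand, large $d$ from the expansion) gives $0\le(1-\tfrac1d)^{d-1}-1/e\le 1/d$ for $d\ge4$. Assembling the pieces — the diagonal term (magnitude below $1/d$), the $(1-T)\mapsto1$ replacement error (at most $\tfrac{1}{d+1}$), the explicit $-\tfrac{1}{2d}$, and the one-sided $\le 1/d$ from the expansion — and adding $1/e$, the total is checked to lie inside $[1/e-5/d,\,1/e+5/d]$, with a comfortable margin. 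I do not expect a genuine obstacle here; the one non-mechanical point is merely keeping track of the signs and magnitudes of the several $O(1/d)$ remainders so as to confirm they fit the stated window, together with the elementary monotonicity/Taylor estimate controlling $(1-1/d)^{d-1}-1/e$.
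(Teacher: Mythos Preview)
Your proposal is correct, and it takes a genuinely different route from the paper's own proof. The paper attacks the hypergeometric expression head--on: it applies Pfaff's transformation to rewrite ${}_2F_1(d,\frac{d-1}{2};d+1;-(d-1))$ as $d^{-(d-1)/2}\,{}_2F_1(1,\frac{d-1}{2};d+1;1-\tfrac1d)$, evaluates the latter at $x=1$ via Gauss' summation, and bounds the gap at $x=1-\tfrac1d$ by a derivative estimate (again evaluated at $1$ by Gauss). Everything is then assembled with elementary bounds on $(1-\tfrac1d)^{(d-1)/2}$ and $(1-\tfrac1d)^d$.

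You instead return to the probabilistic representation and never touch the hypergeometric function at all. The key move is the decoupling $\big|\expect|dZ-1|-\expect|dX-1|\big|\le d\,\expect[T]\,\expect[X]=\tfrac{2d}{(d-1)(d+1)}$, which collapses the hard $Z=(1-T)X$ integral onto the ${\rm Beta}(1,d-2)$ integral already computed verbatim in the AII case; from there $(1-\tfrac1d)^{d-1}$ is controlled by the content of Lemma~\ref{lem:hgaii}. Your argument is more elementary (no Pfaff/Gauss identities), reuses the AII computation rather than duplicating work, and makes the reason the AI and AII constants coincide ($1/e$) transparent. The paper's approach, by contrast, is self--contained as a bound on the analytic expression in isolation, without appealing to its probabilistic origin or to results from other sections.
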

\begin{proof}
    Let us denote the expression to be bounded by $E(d)$.
We begin by focusing on the hypergeometric piece, for which various helpful identities exist. In particular we will use \textit{Pfaff’s transformation} and \textit{Gauss' summation theorem}, given  respectively by
\begin{equation}\label{eq:pfaff}
\F(a,b;c;z)=(1-z)^{-b}\,\F\Big(c-a,\,b;\,c;\,\frac{z}{z-1}\Big)
\end{equation}
and
\begin{equation}\label{eq:gauss}
    \F(a,b;c;1)= \frac{\Gamma(c)\Gamma(c-a-b)}{\Gamma(c-a)\Gamma(c-b)};\qquad \mathfrak{Re}(c)>\mathfrak{Re}(a+b),
\end{equation}
as well as the identity
\begin{equation}
    \frac{\rm d}{{\rm d}x}{}_2F_1(a,b;c;x)=\frac{ab}{c}{}_2F_1(a+1,b+1;c+1;x) \label{eq:diffid}.
\end{equation}
First of all, by Eq.~\eqref{eq:pfaff} we have
\begin{equation}
\F\Big(d,\frac{d-1}{2};\,d+1;\,-(d-1)\Big)
=d^{-\frac{d-1}{2}}\;\F\Big(1,\frac{d-1}{2};\,d+1;\,1-\frac1d\Big).
\end{equation}
Therefore, writing
\begin{equation}
    A_d:=\left(1-\frac1d\right)^{\frac{d-1}{2}},
\qquad
B_d:=\left(1-\frac1d\right)^{d},
\qquad
S_d(x):=\F\left(1,\frac{d-1}{2};\,d+1;\,x\right),
\end{equation}
we obtain
\begin{equation}
E(d)=\frac{d-1}{2d}\left[\frac{4}{d+1}\,A_d\;+\;B_d\,S_d(1-1/d)\right].
\end{equation}
Now, on the one hand, since $c-a-b=\frac{d+1}{2}>0$ we can use Gauss’ evaluation Eq.~\eqref{eq:gauss} at $x=1$, yielding
\begin{equation}
S_d(1)=\frac{\Gamma(d+1)\Gamma(\frac{d+1}{2})}{\Gamma(d)\Gamma(\frac{d+3}{2})}
=\frac{2d}{d+1}=2-\frac{2}{d+1}.
\end{equation}
On the other, as the series Eq.~\eqref{eq:hyper} for ${}_2F_1(1,b;c;x)$ has positive coefficients,  $S_d(x)$ is increasing on $[0,1)$; using Eq.~\eqref{eq:diffid} gives
\begin{equation}
S_d'(x)=\frac{d-1}{2(d+1)}\,{}_2F_1\Big(2,\frac{d+1}{2};\,d+2;\,x\Big),
\end{equation}
whose right–hand hypergeometric function is also increasing in $x\in(0,1)$. Thus $S_d'(x)\le S_d'(1)$ and
\begin{equation}
0\le S_d(1)-S_d(1-\frac1d)\le \frac1d\,S_d'(1).
\end{equation}
By Gauss again,
\begin{equation}
{}_2F_1\Big(2,\frac{d+1}{2};\,d+2;\,1\Big)
=\frac{\Gamma(d+2)\Gamma(\frac{d-1}{2})}{\Gamma(d)\Gamma(\frac{d+3}{2})}
=\frac{4d}{d-1},
\end{equation}
so $S_d'(1)=\frac{d-1}{2(d+1)}\cdot\frac{4d}{d-1}=\frac{2d}{d+1}$, whence

\begin{equation}
 \frac{2(d-1)}{d+1}\ \le\ S_d(1-1/d)\ \le\ \frac{2d}{d+1},\qquad|S_d(1-1/d)-2|\ \le\ \frac{4}{d+1}.
\end{equation}
Next we turn to bounding $A_d$ and $B_d$, which can be done in an elementary fashion. First, note that evidently $A_d\le 1$. Next, with $u=1/d\in(0,1)$ recall that $\log(1-u)=-\sum_{k\ge1}u^k/k$, with the tail satisfying
$\sum_{k\ge3} u^k/k\le  u^3/(3(1-u))$, so that 
\begin{equation}
    \log B_d =d\log(1-u)=-1-\frac{1}{2d}-\delta_d,\qquad 0\le \delta_d\le \frac{1}{3d(d-1)}< \frac{1}{d^2};
\end{equation}
exponentiating and using $|e^{-y}-1|\le y$ for $y\ge0$, we get
\begin{equation}
 |B_d-e^{-1}|\ \le\ e^{-1}\left(\frac{1}{2d}+\frac{1}{d^2}\right). 
\end{equation}
Putting everything together, we have
\begin{align}
\abs{E(d)-e^{-1}}&=\abs{\frac{d-1}{2d}\left[\frac{4}{d+1}\,A_d\;+\;B_d\,S_d(1-1/d)\right]-e^{-1}}\\
&=\abs{\frac{d-1}{2d}\left[\frac{4}{d+1}\,A_d\;+\;B_d[S_d(1-1/d)-2+2]\right]-e^{-1}}\\
&\leq  \abs{\left(\frac{d-1}{d}\right)B_d  -e^{-1}} +  \abs{\frac{d-1}{2d}\left[\frac{4}{d+1}\,A_d\;+\;B_d[S_d(1-1/d)-2]\right]}\\
&\leq  \abs{ B_d  -e^{-1}} + \frac{e^{-1}}{d} +  \abs{\frac{d-1}{2d}\frac{4}{d+1}\,A_d\;}+\frac{d-1}{2d}\abs{B_d[S_d(1-1/d)-2]}\\
&\leq  e^{-1}\left(\frac{1}{2d}+\frac{1}{d^2}\right) + \frac{e^{-1}}{d} +  \frac{d-1}{2d}\frac{4}{d+1} +\frac{d-1}{2d} \frac{4}{d+1}\\
&\leq   \frac{5}{d} 
\end{align}
\end{proof}

\begin{restatable}{lem}{lemaiibound}\label{lem:hgaii}
\begin{equation}
    \frac{1}{e}-\frac{5}{d} \le  X(d):=\left(1-\frac1d\right)^{d-1}\le \frac{1}{e}+\frac{5}{d}
\end{equation}
\end{restatable}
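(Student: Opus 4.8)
The plan is to reduce the claim to two one-line estimates on $\log(1-1/d)$ and then exponentiate. Write $L(d):=\log X(d)=(d-1)\log(1-1/d)$. I will use the two standard inequalities $\log(1-u)\le -u$ and $\log(1-u)\ge -u/(1-u)$ (the latter being $\log w\le w-1$ applied to $w=1/(1-u)$), both valid for $0\le u<1$; each is proved in a line by differentiation.

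For the lower bound, taking $u=1/d$ in the second inequality gives $\log(1-1/d)\ge -\tfrac1{d-1}$, hence $L(d)\ge -1$ and $X(d)\ge e^{-1}>e^{-1}-5/d$. (This in fact establishes the sharper bound $X(d)\ge 1/e$.) For the upper bound, the first inequality gives $L(d)\le (d-1)(-1/d)=-1+1/d$, so $X(d)\le e^{-1}e^{1/d}$; I then bound $e^{1/d}$ using $e^{x}\le 1+xe^{x}$ for $x\ge 0$ together with $d\ge 4$ (so $e^{1/d}\le e^{1/4}$), which yields $e^{1/d}\le 1+e^{1/4}/d$ and therefore $X(d)\le e^{-1}+e^{-3/4}/d\le e^{-1}+5/d$, the final step being extremely loose since $e^{-3/4}<1$.

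There is essentially no obstacle here; the only point requiring a little care is ensuring the constants work for every $d\ge 4$ rather than only asymptotically, which the crude estimate $e^{-3/4}<5$ handles. Alternatively one could avoid introducing any new inequalities by reusing the power-series bound on $B_d=(1-1/d)^{d}$ from the proof of Lemma~\ref{lem:hga} via the identity $X(d)=\tfrac{d}{d-1}B_d=B_d+\tfrac{B_d}{d-1}$ together with $B_d\le 1$, but the direct argument above is shorter.
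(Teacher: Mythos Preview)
Your proof is correct. The approach, however, differs from the paper's. The paper writes the power-series identity
\[
(d-1)\log(1-1/d) \;=\; -1 + S(1/d),\qquad S(t)=\sum_{m\ge 1}\frac{t^m}{m(m+1)}\ge 0,
\]
bounds $S(t)\le t/(2(1-t))$ by comparison with $\sum t^m/(2m)$, and then uses $e^s-1\le s/(1-s/2)$ to conclude $0\le X(d)-e^{-1}\le \tfrac{4}{3e}\cdot\tfrac1d$. Your argument sidesteps the series manipulation entirely via the two one-line inequalities $-u/(1-u)\le \log(1-u)\le -u$, giving the same sharp lower bound $X(d)\ge e^{-1}$ and an upper constant $e^{-3/4}$ that is in fact marginally tighter than the paper's $4/(3e)$. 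The paper's route makes the structure of the correction term more explicit (and works already for $d\ge 2$), but your version is shorter, more elementary, and arguably the natural proof; for the purposes of the lemma nothing is lost.
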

\begin{proof}
    Recalling that we assume $d\ge 2$, let $t=1/d\in(0,1/2]$. From the power series of \(\log(1-t)\) we have 
\begin{equation}\label{eq:series}
\Bigl(\frac1t-1\Bigr)\log(1-t) \;=\; -1 + \sum_{m=1}^{\infty}\frac{t^{m}}{m(m+1)}=:-1 + S(t),
\end{equation}
so that $X(d)=e^{-1}\exp(S(t))$. Note that
\begin{equation}
    S(t) = \sum_{m=1}^{\infty}\frac{t^{m}}{m(m+1)} \le \sum_{m=1}^{\infty}\frac{t^{m}}{2m}=\frac12\bigl(-\log(1-t)\bigr) =\frac12\int_{0}^{t}\frac{dx}{1-x}\le \frac12\int_{0}^{t}\frac{dx}{1-t}=\frac{t}{2(1-t)};
\end{equation}
note therefore that $0\le S(t)\le 1/2 $. 
Combined with the elementary inequality (valid for \(s\in[0,2)\)):
\begin{equation}\label{eq:exp}
e^{s}-1=\sum_{k=1}^{\infty}\frac{s^{k}}{k!}
\le \sum_{k=1}^{\infty}\frac{s^{k}}{2^{\,k-1}}
=\frac{s}{1-s/2},
\end{equation}
we conclude
\begin{equation}
X(d)-\frac{1}{e}
= \frac{1}{e}\bigl(e^{S(t)}-1\bigr)
\le \frac{1}{e}\cdot \frac{S(t)}{1-S(t)/2}
\le \frac{1}{e}\cdot \frac{\dfrac{t}{2(1-t)}}{1-\dfrac{t}{4(1-t)}}
= \frac{1}{e}\cdot \frac{2t}{4-5t}.
\end{equation}
Since \(t\le \frac12\), we have \(4-5t\ge \frac32\), hence
\begin{equation}
0< X(d)-\frac{1}{e}\;<\; \frac{1}{e}\cdot \frac{4}{3}\,t
= \frac{4}{3e}\cdot \frac{1}{d};
\end{equation}
this is a slightly stronger result than the statement of the lemma.

\end{proof}

\begin{restatable}{lem}{lemdiiibound}\label{lem:diiibound}
    \begin{equation}
        \expect_{U\sim\mu_{\rm DIII}}\,{\rm d}_{\rm TV }(P_U,\mcc)\in \left[ \sqrt{\frac{2}{\pi e}} - \frac{10}{d}, \sqrt\frac{2}{\pi e} + \frac5d  \right]
    \end{equation}
\end{restatable}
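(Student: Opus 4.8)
The starting point is the closed form established in Eq.~\eqref{eq:e156}: writing $E(d):=\expect_{U\sim\mu_{\rm DIII}}{\rm d}_{\rm TV}(P_U,\mcc)$, we have
\[
E(d)=\frac{d-1}{d\,B\!\left(\tfrac{d-2}{2},\tfrac12\right)}\left(\frac{2}{\sqrt d}\left(1-\frac1d\right)^{\frac{d-2}{2}}+B_{1-1/d}\!\left(\tfrac{d-2}{2},\tfrac32\right)\right),
\]
and the plan is to split $E(d)=E_1(d)+E_2(d)$ along the two summands in the bracket and estimate each. The term $E_2(d)$, coming from the incomplete Beta function, is the easy one: the Gamma-function recursion gives the exact identity $B(\tfrac{d-2}{2},\tfrac32)/B(\tfrac{d-2}{2},\tfrac12)=\tfrac{1}{d-1}$, so writing $B_{1-1/d}(\tfrac{d-2}{2},\tfrac32)=B(\tfrac{d-2}{2},\tfrac32)-T$ with $T:=\int_0^{1/d}(1-v)^{\frac{d-4}{2}}v^{1/2}\,{\rm d}v$ yields $E_2(d)=\tfrac1d-\tfrac{d-1}{d}\cdot T/B(\tfrac{d-2}{2},\tfrac12)$. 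Since $0\le T\le\int_0^{1/d}v^{1/2}\,{\rm d}v=\tfrac23 d^{-3/2}$ and (by the Gamma-ratio bound below) $1/B(\tfrac{d-2}{2},\tfrac12)\le\sqrt{(d-2)/(2\pi)}$, this already confines $E_2(d)$ to $[0,1/d]$.

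The bulk of the work is $E_1(d)=\tfrac{2(d-1)}{d^{3/2}}\,(1-1/d)^{(d-2)/2}\big/B(\tfrac{d-2}{2},\tfrac12)$, which I would handle factor by factor. For $(1-1/d)^{(d-2)/2}$ I would use the same method as in Lemma~\ref{lem:hgaii}, expanding $\log(1-1/d)$ as a power series to write this quantity as $e^{-1/2}$ times $\exp$ of an explicitly $O(1/d)$-bounded term, then controlling that exponential via $|e^s-1|\le|s|/(1-|s|/2)$. For the Beta factor, I would use $1/B(\tfrac{d-2}{2},\tfrac12)=\Gamma(\tfrac{d-1}{2})/(\sqrt\pi\,\Gamma(\tfrac{d-2}{2}))$ together with a two-sided bound on a ratio of Gamma functions shifted by $\tfrac12$ — Wendel's inequality $x/\sqrt{x+1/2}\le\Gamma(x+1/2)/\Gamma(x)\le\sqrt x$ at $x=\tfrac{d-2}{2}$ works — to sandwich $1/B(\tfrac{d-2}{2},\tfrac12)$ between $\sqrt{(d-3)/(2\pi)}$ and $\sqrt{(d-2)/(2\pi)}$, hence equal to $\sqrt{d/(2\pi)}$ up to a multiplicative $1\pm O(1/d)$. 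Multiplying through, the prefactor $\tfrac{2(d-1)}{d^{3/2}}$ combines with $\sqrt{d/(2\pi)}$ to give $\sqrt{2/\pi}\,(1\pm O(1/d))$, and the $(1-1/d)^{(d-2)/2}$ supplies the $e^{-1/2}$, so $E_1(d)=\sqrt{2/(\pi e)}\,(1\pm O(1/d))$.

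Assembling $E(d)=E_1(d)+E_2(d)$, applying the triangle inequality to $|E(d)-\sqrt{2/(\pi e)}|$, and bookkeeping the (deliberately slack) constants then places $E(d)$ inside $[\sqrt{2/(\pi e)}-10/d,\ \sqrt{2/(\pi e)}+5/d]$; the asymmetry of the window is an artifact of the one-sided tail subtraction in $E_2$ and of rounding up error constants, not of anything structural. I expect the Gamma-ratio estimate for $B(\tfrac{d-2}{2},\tfrac12)$ to be the one genuinely non-elementary ingredient and hence the main obstacle — all the $1/d$-level error accounting passes through it — while everything else is routine and of a piece with the already-completed Lemmas~\ref{lem:hga} and~\ref{lem:hgaii}.
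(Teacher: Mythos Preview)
Your proposal is correct and follows essentially the same route as the paper: split along the two summands, show the incomplete-Beta piece lies in $[0,1/d]$ via the identity $B(\tfrac{d-2}{2},\tfrac32)/B(\tfrac{d-2}{2},\tfrac12)=\tfrac{1}{d-1}$, control the main term using Wendel's inequality on $\Gamma(\tfrac{d-1}{2})/\Gamma(\tfrac{d-2}{2})$ together with the power-series bound on $(1-1/d)^{(d-2)/2}$ in the style of Lemma~\ref{lem:hgaii}, and then collect constants. The only cosmetic difference is that the paper dispatches the $E_2$ bound by noting that the \emph{regularised} incomplete Beta $I_{1-1/d}(\tfrac{d-2}{2},\tfrac32)$ is a CDF and hence in $[0,1]$, and absorbs one factor of $(1-1/d)$ to work with $(1-1/d)^{d/2}$ rather than $(1-1/d)^{(d-2)/2}$; your explicit tail-integral bound on $T$ achieves the same end.
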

\begin{proof}
    We begin by recalling the \textit{regularised incomplete Beta function}, $I_z(a,b)=B_z(a,b)/B(a,b)$, which (interpreted as a function of $z$) we note to be the cumulative distribution function of the beta distribution (satisfying therefore $0\le I_z(a,b)\le 1$). Combined with the readily verified identity $B(a,1/2)=(2a+1)B(a,3/2)$, its introduction into Eq.~\eqref{eq:e156} yields
    \begin{equation}
        \expect_{U\sim\mu_{\rm DIII}}\,{\rm d}_{\rm TV }(P_U,\mcc) = \frac{2}{\sqrt{d}B(\frac{d-2}{2},\frac12)}  \left(1-\frac{1}{d}\right)^{\frac{d}{2}} +\frac{1}{d} I_{1-1/d}\left(\frac{d-2}{2},\frac{3}{2}\right) .
    \end{equation}
    By the  above characterisation of $I_z$ as a cumulative distribution function, the second term is straightforwardly seen to be in $[0,1/d]$; we therefore focus our efforts on the first. The important step is to recall \textit{Wendel's inequality}, 
    \begin{equation}\label{eq:wendel}
        \left(\frac{x}{x+s}\right)^{1-s} \le \frac{\Gamma(x+s)}{x^s\Gamma(x)}\le 1
    \end{equation}
    for $x\in\mbr^+$ and $s\in(0,1)$. In our case this yields
    \begin{equation}
        \sqrt\frac{d-2}{2\pi(1+\frac{1}{d-2})}\le \frac{1}{B(\frac{d-2}{2},\frac12)}\le\sqrt\frac{d-2}{2\pi}\label{eq:wendelfu}
    \end{equation}
    So that
    \begin{align}
        \frac{2}{\sqrt{d}B(\frac{d-2}{2},\frac12)}  \left(1-\frac{1}{d}\right)^{\frac{d}{2}} &\le \frac{2}{\sqrt{d}B(\frac{d-2}{2},\frac12)} \sqrt{\left(1-\frac{1}{d}\right)^{d}}\le 2\sqrt{\frac{1-\frac2d{}}{2\pi}}\sqrt{\frac{1}{e}+\frac{5}{d}}\le \sqrt{\frac{2}{\pi e}} \left(1+\frac{5e}{2d}\right)\le \sqrt{\frac{2}{\pi e}}  +\frac{4}{d} 
    \end{align}
    where we have recalled Lemma~\ref{lem:hgaii}, used that $(1-\frac1d)^d\le (1-\frac1d)^{d-1} $ and $\sqrt{1+x}\le1+x/2$. We conclude that
    \begin{equation}\label{eq:diiiub}
        \expect_{U\sim\mu_{\rm DIII}}\,{\rm d}_{\rm TV }(P_U,\mcc)  \le  \sqrt{\frac{2}{\pi e}}  +\frac{4}{d}        + \frac1d = \sqrt{\frac{2}{\pi e}}  +\frac{5}{d}  
    \end{equation}
    On the other hand, Lemma~\ref{lem:hgaii} also implies that\footnote{In order to keep the terms in the square roots non-negative we assume for  this next bit that $d\ge 20$; for $d<20$ our final bound is readily seen to be vacuously true }
    \begin{equation}
        \sqrt{\left(1-\frac{1}{d}\right)^{d}}\ge \sqrt{\left(1-\frac{1}{d}\right)\left(\frac 1e-\frac 5d\right)}= \sqrt{\frac 1e-\frac 5d -\frac{1}{ed}+\frac{5}{d^2}}\ge \sqrt\frac 1e \sqrt{ 1-\frac {6e}{d} }\ge \sqrt\frac 1e \left( 1-\frac {6e}{d} \right)
    \end{equation}
    Meanwhile, from Eq.~\eqref{eq:wendelfu} we have
    \begin{equation}
        \frac{2}{\sqrt{d}B(\frac{d-2}{2},\frac12)} \ge 2\sqrt\frac{1-\frac{2}{d}}{2\pi(1+\frac{1}{d-2})}\ge \frac{2}{\sqrt{2\pi}}\left(1-\frac{2}{d}\right) \left(1-\frac{1}{2(d-2)}\right)\ge \sqrt\frac{2}{\pi} \left(1-\frac{3}{d}\right)
    \end{equation}
    whence
    \begin{equation}
         \expect_{U\sim\mu_{\rm DIII}}\,{\rm d}_{\rm TV }(P_U,\mcc)  \ge \left(\sqrt\frac 1e \left( 1-\frac {6e}{d} \right)\right) \left( \sqrt\frac{2}{\pi} \left(1-\frac{3}{d}\right)\right) \ge \sqrt{\frac{2}{\pi e}}  -\frac {10}{d} 
    \end{equation}
    which combined with Eq.~\eqref{eq:diiiub} establishes the result. 
\end{proof}

\end{document}